\newcommand{\algparbox}[1]{\parbox[t]{\dimexpr\linewidth-\algorithmicindent}{#1\strut}}
\newcommand{\hStatex}[0]{\vspace{5pt}}
\newlength{\trianglerightwidth}
\algnewcommand{\LineComment}[1]{\Statex \hskip\ALG@thistlm \(\triangleright\) #1}
\algnewcommand{\LineCommentCont}[1]{\Statex \hskip\ALG@thistlm%
  \parbox[t]{\dimexpr\linewidth-\ALG@thistlm}{\hangindent=\trianglerightwidth \hangafter=1 \strut$\triangleright$ #1\strut}}
\algnewcommand\algorithmicclass{\textbf{Class}}
\algnewcommand\algorithmicbegin{\textbf{begin}}
\algnewcommand\algorithmicendclass{\textbf{end}}
\newcommand{\VAR}[1]{\textbf{Var} $#1$}
\newcommand{\Null}{{\Lambda}}%
\newcommand{\Nop}{{\small \textsc{NOP}}}%
\newcommand{\sorder}{\prec_{s}}
\newcommand{\SortedSet}{\textbf{SortedSet}}
\newcommand{\Enum}{\textbf{Enum}}
\newcommand{\LR}{\textsl{LG}}
\newcommand{\Ordering}{\textsl{Ord}}
\newcommand{\Left}{\textsl{\small LT}}	 
\newcommand{\Right}{\textsl{\small GT}} 
\newcommand{\SID}{\textsl{\footnotesize SID}}
\newcommand{\CID}{\textsl{\footnotesize CID}}
\newcommand{\RID}{\textsl{\footnotesize RID}}
\newcommand{\SEQ}{\textsl{\footnotesize SEQ}}
\newcommand{\Remote}{\textsl{\footnotesize GLOBAL}}
\newcommand{\Local}{\textsl{\footnotesize LOCAL}}
\newcommand{\opset}{\mathcal{O}}
\newcommand{\Op}{\textsl{Op}}
\newcommand{\Val}{\textsl{Val}}
\newcommand{\Vertex}{\textsl{Vertex}}
\newcommand{\Edge}{\textsl{Edge}}
\newcommand{\StateSpace}{\textsl{StateSpace2D}}
\newcommand{\CStateSpace}{\textsl{CStateSpace}}
\newcommand{\ldel}{\textsc{Del}}	
\newcommand{\lins}{\textsc{Ins}}	
\newcommand{\nat}{\mathbb{N}}
\newcommand{\set}[1]{\{#1\}}
\newcommand{\emptyseq}{\langle \rangle}
\newcommand{\seq}[1]{\langle #1 \rangle}
\newcommand{\doe}{\text{do}}    
\newcommand{\doop}[2]{\text{do}\big(#1, #2\big)}    
\newcommand{\send}{\text{send}}
\newcommand{\sendop}[1]{\text{send}(#1)}
\newcommand{\rcv}{\text{receive}}
\newcommand{\rcvop}[1]{\text{receive}(#1)}
\newcommand{\prot}{$\mathcal{R}$}   
\newcommand{\protinmath}{\mathcal{R}}   
\newcommand{\proj}[2]{#1|_{#2}} 
\newcommand{\instype}{\textsc{Ins}}
\newcommand{\ins}[2]{\textsc{Ins}(#1,#2)}
\newcommand{\deltype}{\textsc{Del}}
\newcommand{\del}[2]{\textsc{Del}(#1,#2)}
\newcommand{\delone}[1]{\textsc{Del}(#1)}
\newcommand{\readtype}{\textsc{Read}}
\newcommand{\placeholder}{\_}
\newcommand{\precrel}[3]{#2 \prec_{#1} #3}
\newcommand{\hbrel}[3]{#2 \xrightarrow{\text{hb}_{#1}} #3}  
\newcommand{\tbrel}[3]{#2 \xrightarrow{\text{tb}_{#1}} #3}  
\newcommand{\vis}{\text{vis}}
\newcommand{\viseq}{\leq_{\text{vis}}}
\newcommand{\visrel}[2]{#1 \xrightarrow{\text{vis}} #2}
\newcommand{\spec}{\mathcal{S}}
\newcommand{\cp}{\mathcal{A}_{\text{cp}}}
\newcommand{\wlspec}{\mathcal{A}_{\text{weak}}}
\newcommand{\elems}[1]{\text{elems}(#1)}
\newcommand{\lo}{\text{lo}}
\newcommand{\lorel}[2]{#1 \xrightarrow{\lo{}} #2}   
\newcommand{\crel}[3]{#2 \xrightarrow{#1} #3} 
\newcommand{\pararel}[3]{#2 \parallel_{#1} #3}  
\newcommand{\pathplain}[2]{$P_{#1 \leadsto #2}$}
\newcommand{\pathinmath}[2]{P_{#1 \leadsto #2}}
\theoremstyle{plain}
\newtheorem{prop}[theorem]{Proposition}
\newtheorem*{claim}{Claim}
\newenvironment{subproof}[1][\proofname]{%
  \begin{proof}[#1]%
}{%
  \end{proof}%
}
\newcommand{\cop}[2]{$#1\set{#2}$}  
\newcommand{\copinmath}[2]{#1\set{#2}}
\newcommand{\opot}[2]{#1\langle#2\rangle}
\newcommand{\jupiter}{\text{Jupiter}}
\newcommand{\cjupiter}{\text{CJupiter}}
\newcommand{\css}{\text{CSS}}
\newcommand{\cscws}[1]{$\text{DSS}_{s_{#1}}$}
\newcommand{\cscwsk}[2]{$\text{DSS}_{s_{#1}}^{#2}$}
\newcommand{\cscwskinmath}[2]{\text{DSS}_{s_{#1}}^{#2}}
\newcommand{\cscwc}[1]{$\text{DSS}_{c_{#1}}$} 
\newcommand{\cscwck}[2]{$\text{DSS}_{c_{#1}}^{#2}$} 
\newcommand{\cscwckinmath}[2]{\text{DSS}_{c_{#1}}^{#2}} 
\newcommand{\csss}{$\text{CSS}_{s}$}
\newcommand{\csssk}[1]{$\text{CSS}_{s}^{#1}$}
\newcommand{\cssskinmath}[1]{\text{CSS}_{s}^{#1}}
\newcommand{\cssc}[1]{$\text{CSS}_{c_{#1}}$}  
\newcommand{\cssck}[2]{$\text{CSS}_{c_{#1}}^{#2}$}  
\newcommand{\cssckinmath}[2]{\text{CSS}_{c_{#1}}^{#2}}  
\title{Specification and Implementation of Replicated List: The Jupiter Protocol Revisited}
\titlerunning{Specification and Implementation of Replicated List: The Jupiter Protocol Revisited} 
\author{Hengfeng Wei}
{State Key Laboratory for Novel Software Technology, Nanjing University, China}
{hfwei@nju.edu.cn}
{https://orcid.org/0000-0002-0427-9710} 
{} 
\author{Yu Huang}
{State Key Laboratory for Novel Software Technology, Nanjing University, China}
{yuhuang@nju.edu.cn}
{https://orcid.org/0000-0001-8921-036X} 
{Contact Author.} 
\author{Jian Lu}
{State Key Laboratory for Novel Software Technology, Nanjing University, China}
{lj@nju.edu.cn}
{} 
{} 
\authorrunning{H. Wei, Y. Huang, and J. Lu} 
\subjclass{ 
  \ccsdesc[500]{Computing methodologies~Distributed computing methodologies},
  \ccsdesc[200]{Software and its engineering~Correctness},
  \ccsdesc[100]{Human-centered computing~Collaborative and social computing systems and tools}
}
\keywords{
  Collaborative text editing systems,
  Replicated list,
  Concurrency control,
  Strong/weak list specification,
  Operational transformation,
  Jupiter protocol
} 
\begin{document}

\maketitle

\begin{abstract}
  The replicated list object is frequently used to model the core functionality
  of replicated collaborative text editing systems.
  Since 1989, the convergence property has been
  a common specification of a replicated list object.
  Recently, Attiya et al. proposed the strong/weak list specification
  and conjectured that the well-known Jupiter protocol
  satisfies the weak list specification.
  The major obstacle to proving this conjecture
  is the mismatch between the global property on all replica states 
  prescribed by the specification
  and the local view each replica maintains in Jupiter
  using data structures like 1D buffer or 2D state space. 
  To address this issue, 
  we propose CJupiter (Compact Jupiter) based on
  a novel data structure called $n$-ary ordered state space
  for a replicated client/server system with $n$ clients.
  At a high level, CJupiter maintains only a single $n$-ary ordered state space
  which encompasses exactly all states of each replica.
  We prove that CJupiter and Jupiter are equivalent and that
  CJupiter satisfies the weak list specification,
  thus solving the conjecture above.
\end{abstract}

\clearpage
\section{Introduction}  \label{section:intro}

Collaborative text editing systems, like Google Docs~\cite{GoogleDocs}, 
Apache Wave~\cite{Wave}, or wikis~\cite{Leuf:Wiki01},
allows multiple users to concurrently edit the same document.
For availability, such systems often replicate the document at several \emph{replicas}.
For low latency, replicas are required to respond to user operations immediately
without any communication with others
and updates are propagated asynchronously.

The \emph{replicated list object} has been frequently used to model the core functionality
(e.g., insertion and deletion) of replicated collaborative text editing 
systems~\cite{Ellis:SIGMOD89, Nichols:UIST95, Xu:CSCW14, Attiya:PODC16}.
A common specification of a replicated list object
is the \emph{convergence} property, proposed by Ellis et al.~\cite{Ellis:SIGMOD89}.
It requires the \emph{final lists at all replicas}
be identical after executing the same set of user operations.
Recently, Attiya et al.~\cite{Attiya:PODC16} proposed the strong/weak list specification.
Beyond the convergence property, the strong/weak list specification specifies global properties 
on \emph{intermediate states} going through by replicas.
Attiya et al.~\cite{Attiya:PODC16} have proved that the existing RGA protocol~\cite{Roh:JPDC11}
satisfies the strong list specification.
Meanwhile, it is \emph{conjectured} that the well-known \jupiter{} protocol~\cite{Nichols:UIST95, Xu:CSCW14},
which is behind Google Docs~\cite{GoogleDocsJupiter} and Apache Wave~\cite{WaveDoc},
satisfies the weak list specification.

\jupiter{} adopts a \emph{centralized server} replica for
propagating updates~\footnote{Since replicas are required to respond to user operations immediately,
the client/server architecture does not imply that clients process operations in the same order.},
and client replicas are connected to the server replica via FIFO channels;
see Figure~\ref{fig:jupiter-schedule-podc16}~\footnote{The details about Figure~\ref{fig:jupiter-schedule-podc16}
will be described in Examples~\ref{ex:cjupiter} and~\ref{ex:jupiter}.}.
\jupiter{} relies on the technique of 
operational transformations (OT)~\cite{Ellis:SIGMOD89, Sun:CSCW98} to achieve convergence.
The basic idea of OT is for each replica to execute any local operation immediately
and to transform a remote operation so that it takes into account 
the concurrent operations previously executed at the replica.
Consider a replicated list system consisting of replicas $R_1$ and $R_2$
which initially hold the same list (Figure~\ref{fig:ex-ot}).
Suppose that user 1 invokes $o_1 = \ins{f}{1}$ at $R_1$
and concurrently user 2 invokes $o_2 = \delone{5}$ at $R_2$.
After being executed locally, each operation is sent to the other replica.
Without OT (Figure~\ref{fig:ex-without-ot}), the states of two replicas diverge.
With the OT of $o_1$ and $o_2$ (Figure~\ref{fig:ex-with-ot}),
$o_2$ is transformed to $o_2' = \delone{6}$ at $R_1$,
taking into account the fact that $o_1$ has inserted an element at position $1$.
Meanwhile, $o_1$ remains unchanged.
As a result, two replicas converge to the same list.
We note that although the idea of OT is straightforward, 
many OT-based protocols for replicated list are hard to understand
and some of them have even been shown incorrect 
with respect to convergence~\cite{Ellis:SIGMOD89, Sun:CSCW98, Sun:CSCW14}.

The major obstacle to proving that \jupiter{} satisfies the weak list specification
is the \emph{mismatch} between the \emph{global property} on all states
prescribed by such a specification
and the \emph{local view} each replica maintains in the protocol.
On the one hand, the weak list specification
requires that states across the system are pairwise compatible~\cite{Attiya:PODC16}.
That is, for any pair of (list) states, there cannot be two elements $a$ and $b$ such that
$a$ precedes $b$ in one state but $b$ precedes $a$ in the other.
On the other hand, Jupiter uses data structures like 1D buffer~\cite{Shen:CSCW02}
or 2D state space~\cite{Nichols:UIST95, Xu:CSCW14}
which are not ``compact'' enough to capture all replica states in one.
In particular, \jupiter{} maintains $2n$ 2D state spaces for a system with $n$ clients~\cite{Xu:CSCW14}:
Each client maintains a single state space which is synchronized with those of other clients
via its counterpart state space maintained by the server.
Each 2D state space of a client (as well as its counterpart at the server) 
consists of a local dimension and a global dimension,
keeping track of the operations processed by the client itself and the others, respectively.
In this way, replica states of \jupiter{} are dispersed in multiple 2D state spaces 
maintained locally at individual replicas.

To resolve the mismatch,
we propose \cjupiter{} (Compact Jupiter), a variant of \jupiter{},
which uses a novel data structure called $n$-\emph{ary ordered state space}
for a system with $n$ clients.
\cjupiter{} is compact in the sense that at a high level, 
it maintains only a single $n$-ary ordered state space
which encompasses exactly all states of each replica.
Each replica behavior corresponds to a path going through this state space.
This makes it feasible for us to reason about global properties
and finally prove that \jupiter{} satisfies the weak list specification,
thus solving the conjecture of Attiya et al.
The roadmap is as follows:
\begin{itemize}
  \item (Section~\ref{section:cjupiter})
    We propose \cjupiter{} based on the $n$-ary ordered state space data structure.
  \item (Section~\ref{section:jupiter-cjupiter-equiv})
    We prove that \cjupiter{} is equivalent to \jupiter{}
    in the sense that the behaviors of corresponding replicas of these two protocols
    are the same under the same schedule of operations.
    \jupiter{} is slightly optimized in implementation at clients (but not at the server) by eliminating redundant OTs,
    which, however, has obscured the similarities among clients and led to the mismatch discussed above.
  \item (Section~\ref{section:cjupiter-weak-spec})
    We prove that \cjupiter{} satisfies the weak list specification.
    Thanks to the ``compactness'' of \cjupiter{},
    we are able to focus on a single $n$-ary ordered state space
    which provides a global view of all possible replica states.
\end{itemize}

\begin{figure}[t]
  \centering
  \begin{minipage}{0.49\textwidth}
    \centering
    \includegraphics[width = 0.75\textwidth]{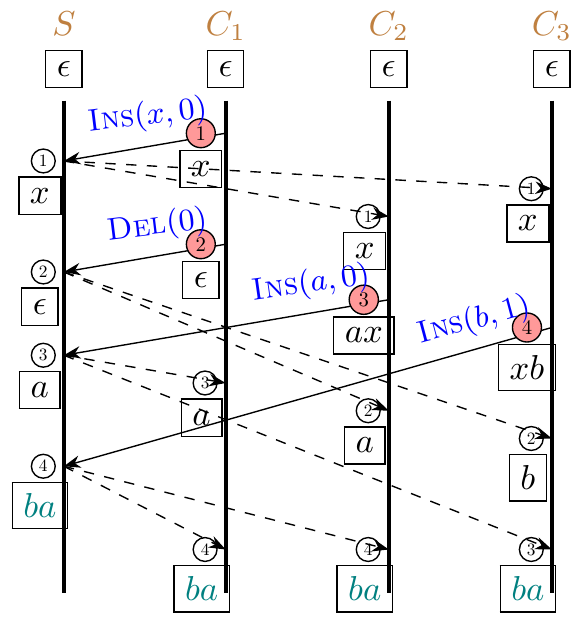}
    \caption{A schedule of four operations adapted from~\cite{Attiya:PODC16}, 
      involving a server replica $s$ and three client replicas $c_1$, $c_2$, and $c_3$.
      The circled numbers indicate the order in which the operations are received at the server.
      The list contents produced by \cjupiter{} (Section~\ref{section:cjupiter}) are shown in boxes.}
    \label{fig:jupiter-schedule-podc16}
  \end{minipage}%
  \hspace{5pt}\hfil
  \begin{minipage}{0.49\textwidth}
    \centering
    \begin{subfigure}[b]{0.44\textwidth}
      \includegraphics[width = \textwidth]{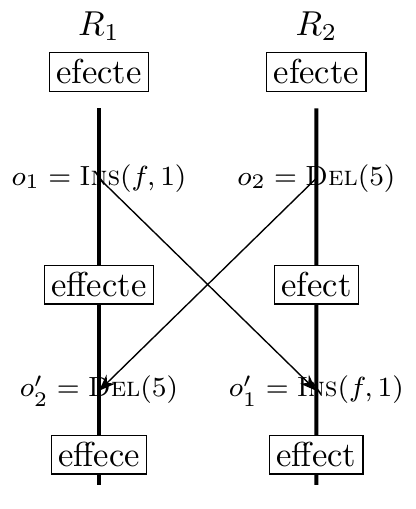}
      \caption{Without OT, the states of $R_1$ and $R_2$ diverge.}
      \label{fig:ex-without-ot}
    \end{subfigure}
    \hfil
    \begin{subfigure}[b]{0.46\textwidth}
      \includegraphics[width = \textwidth]{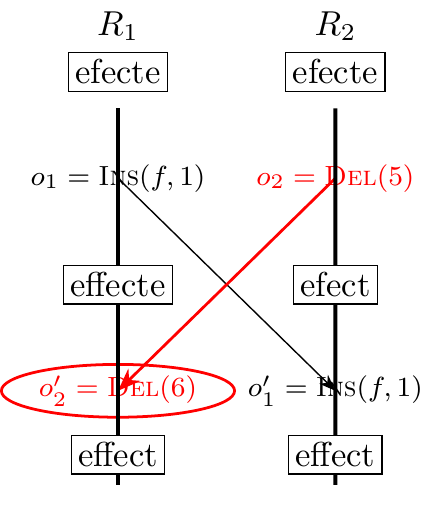}
      \caption{With OT, $R_1$ and $R_2$ converge to the same state.}
      \label{fig:ex-with-ot}
    \end{subfigure}
    \caption{Illustrations of OT (adapted from~\cite{Imine:TCS06}).}
    \label{fig:ex-ot}
  \end{minipage}
\end{figure}

Section~\ref{section:preliminaries} presents preliminaries on specifying replicated list data type and OT.
Section~\ref{section:related-work} describes related work.
Section~\ref{section:conclusion} concludes the paper.
Appendix contains proofs and pseudocode.
\section{Preliminaries: Replicated List and Operational Transformation} \label{section:preliminaries}

We describe the system model and specifications of replicated list
in the framework for specifying replicated data types~\cite{Burckhardt:POPL14, Attiya:PODC15, Attiya:PODC16}.

\subsection{System Model}   \label{ss:model}

A highly-available replicated data store
consists of \emph{replicas} that process user operations on the replicated objects
and communicate updates to each other with messages.
To be \emph{highly-available}, replicas are required to respond to user operations immediately
without any communication with others.
%
  A \emph{replica} is defined as a state machine $R = (\Sigma, \sigma_0, E, \Delta)$, where
  \itshape 1) \upshape $\Sigma$ is a set of states;
  \itshape 2) \upshape $\sigma_0 \in \Sigma$ is the initial state;
  \itshape 3) \upshape $E$ is a set of possible events; and
  \itshape 4) \upshape $\Delta: \Sigma \times E \to \Sigma$ is a transition function.
%
The state transitions determined by $\Delta$ are local steps of a replica,
describing how it interacts with the following three kinds of \emph{events} from users and other replicas:
\begin{itemize}
  \item $\doe(o, v)$: a user invokes an operation $o \in \opset$ on the replicated object
    and immediately receives a response $v \in \Val$.
    We leave the users unspecified and say that the replica \emph{generates} the operation $o$;
  \item $\send(m)$: the replica sends a message $m$ to some replicas; and
  \item $\rcv(m)$: the replica receives a message $m$.
\end{itemize}

A \emph{protocol} is a collection \prot{} of replicas.
An \emph{execution} $\alpha$ of a protocol \prot{} is a sequence of all events
occurring at the replicas in \prot{}.
We denote by $R(e)$ the replica at which an event $e$ occurs.
For an execution (or generally, an event sequence) $\alpha$,
we denote by $\precrel{\alpha}{e}{e'}$
(or $\precrel{}{e}{e'}$)
that $e$ precedes $e'$ in $\alpha$.
%
  An execution $\alpha$ is \emph{well-formed} if
  for every replica $R$:
    \itshape 1) \upshape the subsequence of events $\seq{e_1, e_2, \ldots}$ at $R$,
      denoted $\proj{\alpha}{R}$, is well-formed,
      namely there is a sequence of states $\seq{\sigma_1, \sigma_2, \ldots}$,
      such that $\sigma_i = \Delta(\sigma_{i-1}, e_i)$ for all $i$; and
    \itshape 2) \upshape every $\rcv(m)$ event at $R$ is preceded by a $\send(m)$ event in $\alpha$.
We consider only \emph{well-formed} executions.

We are often concerned with replica behaviors and states when studying a protocol.
%
  The \emph{behavior} of replica $R$ in $\alpha$ is a sequence of the form:
  $\sigma_0, e_1, \sigma_1, e_2, \ldots$,
  where $\seq{e_1, e_2, \ldots} = \proj{\alpha}{R}$ and
  $\sigma_i = \Delta(\sigma_{i-1}, e_i)$ for all $i$.
  A \emph{replica state} $\sigma$ of $R$ in $\alpha$
  can be represented by the events in a prefix of $\alpha|_{R}$ it has processed.
  Specifically, $\sigma_0 = \emptyseq$ and
  $\sigma_{i} = \sigma_{i-1} \circ e_i = \seq{e_1, e_2, \ldots, e_i}$.

We now define the causally-before, concurrent, and totally-before relations on events in an execution.
When restricted to the \doe{} events only, they define relations on user operations.
%
  In an execution $\alpha$, event $e$ is \emph{causally before} $e'$,
  denoted $\hbrel{\alpha}{e}{e'}$ (or $\hbrel{}{e}{e'}$),
  if one of the following conditions holds~\cite{Lamport:CACM78}:
  \itshape 1) \upshape Thread of execution: $R(e) = R(e') \land \precrel{\alpha}{e}{e'}$;
  \itshape 2) \upshape Message delivery: $e = \sendop{m} \land e' = \rcvop{m}$;
  \itshape 3) \upshape Transitivity: $\exists e'' \in \alpha: \hbrel{\alpha}{e}{e''} \land \hbrel{\alpha}{e''}{e'}$.
%
  Events $e, e' \in \alpha$ are \emph{concurrent},
  denoted $\pararel{\alpha}{e}{e'}$ (or $\pararel{}{e}{e'}$),
  if it is neither $\hbrel{\alpha}{e}{e'}$ nor $\hbrel{\alpha}{e'}{e}$.
%
  A relation on events in an execution $\alpha$,
  denoted $\tbrel{\alpha}{e}{e'}$ (or $\tbrel{}{e}{e'}$),
  is a \emph{totally-before} relation \emph{consistent with}
  the causally-before relation `$\hbrel{\alpha}{}{}$' on events in $\alpha$ if
  it is total: $\forall e, e' \in \alpha: \tbrel{\alpha}{e}{e'} \lor \tbrel{\alpha}{e'}{e}$, 
  and it is consistent: $\forall e, e' \in \alpha: \hbrel{\alpha}{e}{e'} \implies \tbrel{\alpha}{e}{e'}$.
\subsection{Specifying Replicated Objects}   \label{ss:specifying}

A replicated object is specified by a set of abstract executions
which record user operations (corresponding to \doe{} events) 
and visibility relations on them~\cite{Burckhardt:POPL14}.
%
  An \emph{abstract execution} is a pair $A = (H, \vis{})$, where
  $H$ is a sequence of \doe{} events and
  $\vis{} \subseteq H \times H$ is an acyclic \emph{visibility} relation such that
    \itshape 1) \upshape if $\precrel{H}{e_1}{e_2}$ and $R(e_1) = R(e_2)$, then $\visrel{e_1}{e_2}$;
    \itshape 2) \upshape if $\visrel{e_1}{e_2}$, then $\precrel{H}{e_1}{e_2}$; and
    \itshape 3) \upshape \vis{} is transitive: $(\visrel{e_1}{e_2} \land \visrel{e_2}{e_3}) \implies \visrel{e_1}{e_3}$.

An abstract execution $A' = (H', \vis{}')$ is a \emph{prefix} of another abstract execution $A = (H, \vis{})$ if
$H'$ is a prefix of $H$ and $\vis{}' = \vis{}\; \cap\; (H' \times H')$.
%
  A \emph{specification} $\spec{}$ of a replicated object is a \emph{prefix-closed} set of abstract executions,
  namely if $A \in \spec{}$, then $A' \in \spec{}$ for each prefix $A'$ of $A$.
%
  A protocol \prot{} \emph{satisfies} a specification $\spec{}$, denoted $\protinmath \models \spec$, if
  any (concrete) execution $\alpha$ of \prot{} \emph{complies with} some abstract execution $A = (H, \vis{})$ in $\spec{}$, namely
  $\forall R \in \protinmath{}: \proj{H}{R} = \proj{\alpha}{R}^{\doe{}}$,
  where $\proj{\alpha}{R}^{\doe{}}$ is the subsequence of \doe{} events of replica $R$ in $\alpha$.

\subsection{Replicated List Specification} \label{ss:list-spec}

A replicated list object supports three types of user operations~\cite{Attiya:PODC16}
($U$ for some universe):
\begin{itemize}
  \item $\textsc{Ins}(a,p)$:
    inserts $a \in U$ at position $p \in \nat{}$ and returns the updated list.
    For $p$ larger than the list size, we assume an insertion at the end.
    We assume that all inserted elements are unique,
    which can be achieved by attaching replica identifiers and sequence numbers.
  \item $\textsc{Del}(a, p)$:
    deletes an element at position $p \in \nat{}$ and returns the updated list.
    For $p$ larger than the list size, we assume an deletion at the end.
    The parameter $a \in U$ is used to record the deleted element~\cite{Sun:CSCW14},
    which will be referred to in condition 1(a) of the weak list specification defined later.
  \item $\textsc{Read}$:
    returns the contents of the list.
\end{itemize}

The operations above, as well as a special \Nop{} (i.e., ``do nothing''), form $\opset$ 
and all possible list contents form $\Val$.
\instype{} and \deltype{} are collectively called \emph{list updates}.
We denote by 
$\elems{A} = \big\{a \mid \doe{}\big(\instype(a, \placeholder{}), \placeholder{}\big) \in H\big\}$
the set of all elements inserted into the list in an abstract execution $A = (H, \vis{})$.

We adopt the convergence property in~\cite{Attiya:PODC16} which requires that
two \readtype{} operations that observe the same set of list updates return the same response.
Formally,
%
  an abstract execution $A = (H, \vis{})$ belongs to the \emph{convergence property} $\cp{}$
  if and only if for any pair of \readtype{} events
  $e_1 = \doop{\readtype}{w_1 \triangleq a_1^0 \ldots a_1^{m-1}}$
  and $e_2 = \doop{\readtype}{w_2 \triangleq a_2^0 \ldots a_2^{n-1}}$
  ($a_i^{j} \in \elems{A}$), it holds that
  $\Big(\vis_{\instype, \deltype}^{-1}(e_1) = \vis_{\instype, \deltype}^{-1}(e_2)\Big) \implies w_1 = w_2$,
  where $\vis_{\instype, \deltype}^{-1}(e)$ denotes the set of list updates visible to $e$.


The weak list specification requires the ordering between elements that are not deleted
to be consistent across the system~\cite{Attiya:PODC16}.

\begin{definition}[Weak List Specification $\wlspec{}$~\cite{Attiya:PODC16}]   \label{def:wl-spec}
An abstract execution $A = (H, \vis{})$ belongs to the \emph{weak list specification} $\wlspec{}$
if and only if there is a relation $\text{lo} \subseteq \elems{A} \times \elems{A}$,
called the \emph{list order}, such that:
\begin{enumerate}
  \item Each event $e = \doe{}(o,w) \in H$ returns a sequence of elements
  $w = a_0 \ldots a_{n-1}$, where $a_i \in \elems{A}$, such that: 
    \begin{enumerate}
      \item $w$ contains exactly the elements visible to $e$ that have been inserted, but not deleted:
	\[
	  \forall a.\, a \in w \iff \Big(\doop{\ins{a}{\placeholder}}{\placeholder} \viseq{} e\Big)
	  \land \lnot \Big(\doop{\del{a}{\placeholder}}{\placeholder} \viseq{} e\Big).
	\]
      \item The list order is consistent with the order of the elements in $w$:
	\[
	  \forall i,j.\, (i < j) \implies (a_i, a_j) \in \lo{}.
	\]
      \item Elements are inserted at the specified position: $op = \ins{a}{k} \implies a = a_{\min\set{k,n-1}}$.
    \end{enumerate}
  \item \lo{} is irreflexive and for all events $e = \doe(op, w) \in H$,
  it is transitive and total on $\set{a \mid a \in w}$.
\end{enumerate}
\end{definition}

\begin{example}[Weak List Specification] \label{ex:wlspec}
  In the execution depicted in Figure~\ref{fig:jupiter-schedule-podc16} (produced by \cjupiter),
  there exist three states with list contents $w_1 = ba$, $w_2 = ax$, and $w_3 = xb$, respectively.
  This is allowed by the weak list specification with the list order $\lo{}$:
  $\lorel{b}{a}$ on $w_1$, $\lorel{a}{x}$ on $w_2$, and $\lorel{x}{b}$ on $w_3$.
  However, an execution is not allowed by the weak list specification 
  if it contained two states with, say $w = ab$ and $w' = ba$. 
\end{example}

\subsection{Operational Transformation (OT)} \label{ss:preliminary-ot}

The OT of transforming $o_1 \in \opset{}$ with $o_2 \in \opset{}$ 
is expressed by the function $o_1' = OT(o_1, o_2)$.
We also write $(o_1', o_2') = OT(o_1, o_2)$
to denote both $o_1' = OT(o_1, o_2)$ and $o_2' = OT(o_2, o_1)$.
To ensure the convergence property, OT functions are required to satisfy CP1
(Convergence Property 1)~\cite{Ellis:SIGMOD89}:
%
  Given two operations $o_1$ and $o_2$, 
  if $(o_1', o_2') = OT(o_1, o_2)$, 
  then $\sigma ; o_1 ; o_2' = \sigma ; o_2 ; o_1'$ should hold,
  meaning that the same state is obtained by applying $o_1$ and $o_2'$ in sequence,
  and applying $o_2$ and $o_1'$ in sequence, on the same initial state $\sigma$.
%
A set of OT functions satisfying CP1 for a replicated list object~\cite{Ellis:SIGMOD89, Imine:TCS06, Sun:CSCW14}
can be found in Figure~\ref{fig:list-ot}. 

\section{The \cjupiter{} Protocol} \label{section:cjupiter}

In this section we propose \cjupiter{} (Compact \jupiter{})
for a replicated list based on the data structure called $n$-ary ordered state space.
Like \jupiter{}, \cjupiter{} also adopts a client/server architecture.
For convenience, we assume that the server does not
generate operations~\cite{Xu:CSCW14, Attiya:PODC16}.
It mainly serializes operations and propagates them from one client to others.
We denote by `$\sorder$' the total order on the set of operations established by the server.
Note that `$\sorder$' is consistent with the causally-before relation `$\hbrel{}{}{}$'.
To facilitate the comparison of \jupiter{} and \cjupiter{},
we refer to `$\hbrel{}{}{}$' and `$\sorder$' together as the {\emph{schedule}} of operations.

\subsection{Data Structure: $n$-ary Ordered State Space}   \label{ss:css-nary}

For a client/server system with $n$ clients,
\cjupiter{} maintains $(n + 1)$ $n$-ary ordered state spaces,
one per replica (\csss{} for the server and \cssc{i} for client $c_i$).
Each \css{} is a directed graph whose vertices represent states 
and edges are labeled with operations;
see Appendix~\ref{section:appendix-css}.

  An \textit{\textbf {operation}} $op$ of type $\Op$ is a tuple $op = (o, oid, ctx, sctx)$, where 
  \itshape 1) \upshape
    $o$ is the signature of type $\opset$ described in Section~\ref{ss:list-spec};
  \itshape 2) \upshape
    $oid$ is a globally unique operation identifier
      which is a pair $(cid, seq)$ consisting of the client id and a sequence number; 
  \itshape 3) \upshape
    $ctx$ is an \emph{operation context} which is a set of $oid$s,
      denoting the operations that are causally before $op$; and
  \itshape 4) \upshape
    $sctx$ is a set of $oid$s, 
    denoting the operations that, as far as $op$ knows, have been executed before $op$ at the server.
    At a given replica, $sctx$ is used to determine the total order `$\prec_{s}$` relation between two operations
    as in Algorithm~\ref{alg:css-op}.

The OT function of two operations $op, op' \in \Op$,
denoted $(\opot{op}{op'}: \Op{}, \opot{op'}{op}: \Op{}) = OT(op, op')$,
is defined based on that of $op.o, op'.o \in \opset$,
denoted $(o, o') = OT(op.o, op'.o)$, such that
$\opot{op}{op'} = (o, op.oid, op.ctx \;\cup\; \set{op'.oid}, op.sctx)$ and
$\opot{op'}{op} = (o', op'.oid, op'.ctx\; \cup\; \set{op.oid}, op'.sctx)$.

  A \textit{\textbf{vertex}} $v$ of type \Vertex{} is a pair $v = (oids, edges)$, where
  $oids$ is the set of operations (represented by their identifies) that have been executed, and
  $edges$ is an \emph{ordered} set of edges of type \Edge{} from $v$ to other vertices, 
  labeled with operations.
  That is, each \textit{\textbf{edge}} is a pair $(op: \Op, v: \Vertex)$.
  Edges from the same vertex are \emph{totally ordered} by their $op$ components.
%
For each vertex $v$ and each edge $e = (op, u)$ from $v$ to $u$, it is required that
\begin{itemize}
  \item the $ctx$ of $op$ associated with $e$ matches the $oids$ of $v$: 
    $op.ctx = v.oids$;
  \item the $oids$ of $u$ consists of the $oids$ of $v$ and the $oid$ of $op$:
    $u.oids = v.oids \cup \set{op.oid}$.
\end{itemize}

\begin{figure}[t]
  \centering
  \begin{minipage}{0.38\textwidth}
    \centering
    \includegraphics[width = 0.70\textwidth]{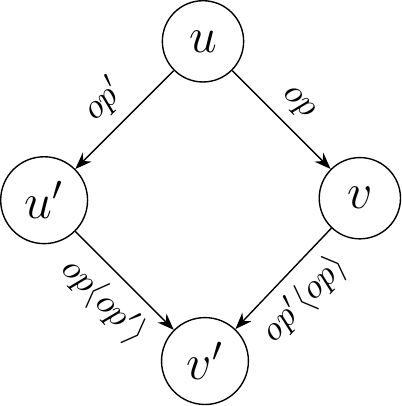}
    \caption{Illustration of an OT of two operations $op, op'$
      in both the $n$-ary ordered state space of \cjupiter{} 
      and the 2D state space of \jupiter{}:
      $(\opot{op}{op'}, \opot{op'}{op}) = OT(op, op')$.
      In the \cjupiter{} and \jupiter{} protocols 
      (and Examples~\ref{ex:cjupiter} and~\ref{ex:jupiter}),
      $op$ corresponds to the new incoming operation to be transformed.}
    \label{fig:xform-ot}
  \end{minipage}
  \hspace{5pt}\hfil
  \begin{minipage}{0.60\textwidth}
    \centering
    \includegraphics[width = 0.80\textwidth]{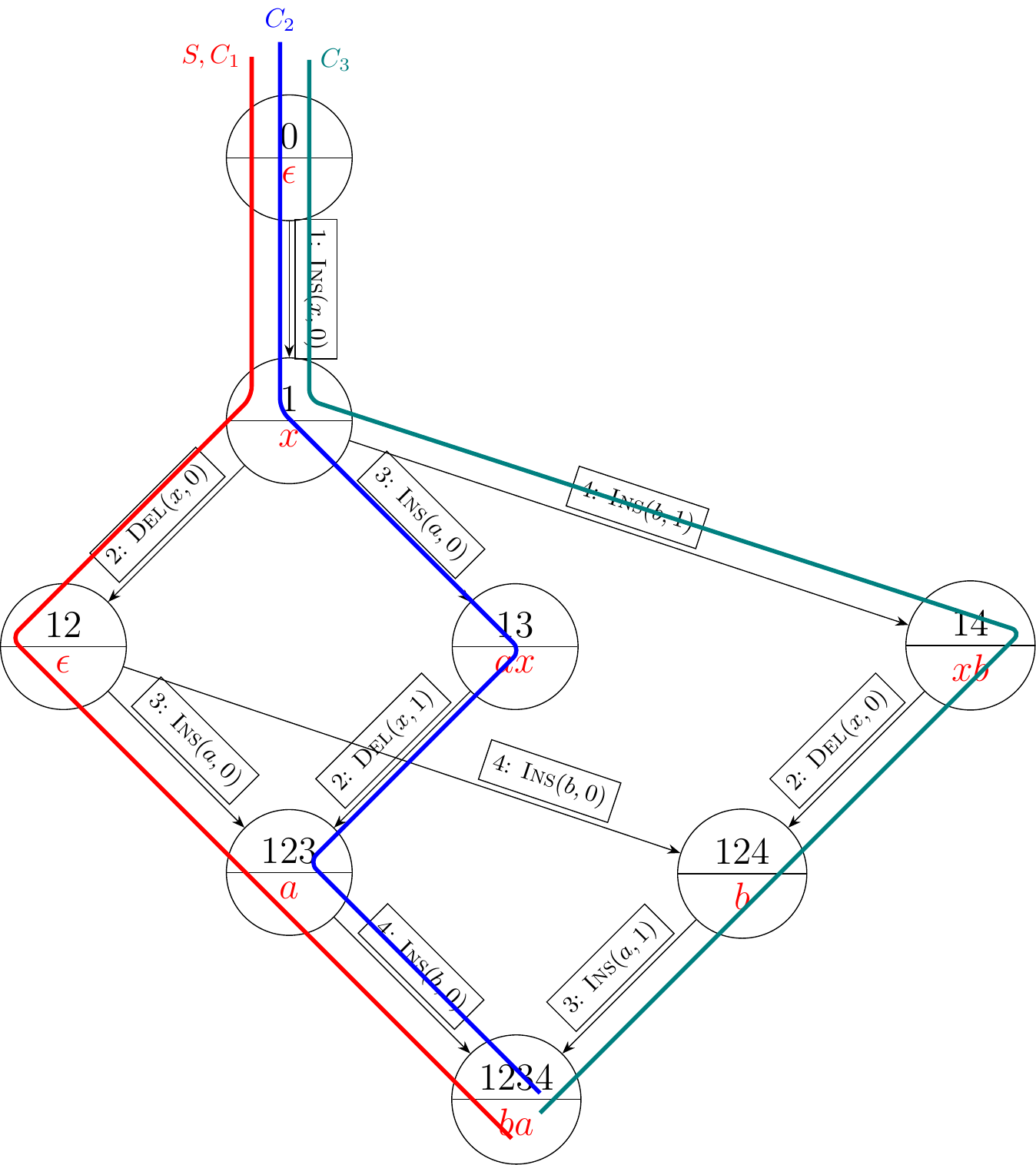}
    \caption{The same final $n$-ary ordered state space (thus for \csss{} and each \cssc{i}) 
    constructed by \cjupiter{} for each replica under the schedule of Figure~\ref{fig:jupiter-schedule-podc16}.
    Each replica behavior (i.e., the sequence of state transitions) corresponds to a path going through this state space.
  }
    \label{fig:cjupiter-css-podc16-allinone}
  \end{minipage}
\end{figure}

\begin{definition}[$n$-ary Ordered State Space]	\label{def:css-nary}
  An \textit{\textbf{$n$-ary ordered state space}} is a set of vertices such that
  \begin{enumerate}
    \item Vertices are uniquely identified by their $oids$.
    \item For each vertex $u$ with $|u.edges| \ge 2$,
      let $u'$ be its child vertex along the \textit{\textbf{first}} edge $e_{uu'} = (op', u')$
      and $v$ another child vertex along $e_{uv} = (op, v)$.  
      There exist (Figure~\ref{fig:xform-ot})
      \begin{itemize}
	\item a vertex $v'$ with $v'.oids = u.oids \cup \set{op'.oid, op.oid}$;
	\item two edges $e_{u'v'} = (\opot{op}{op'}, v')$ from $u'$ to $v'$ and $e_{vv'} = (\opot{op'}{op}, v')$ from $v$ to $v'$.
      \end{itemize}
  \end{enumerate}
\end{definition}

The second condition models OTs in \cjupiter{} described in Section~\ref{ss:cjupiter},
and the choice of the ``first'' edge is justified in Lemmas~\ref{lemma:cjupiter-first-rule} and~\ref{lemma:cjupiter-ot-server}.
\subsection{The \cjupiter{} Protocol}   \label{ss:cjupiter}

Each replica in \cjupiter{} maintains an $n$-ary ordered state space $S$ 
and keeps the most recent vertex $cur$ 
(initially $(\emptyset, \emptyset)$) of $S$.
Following~\cite{Xu:CSCW14}, we describe \cjupiter{} in three parts;
see Appendix~\ref{section:appendix-cjupiter-protocol} for pseudocode.

{\bf Local Processing Part.}
When a client receives an operation $o \in \opset$ from a user, it
\begin{enumerate}
  \item applies $o$ locally, obtaining a new list $val \in \Val$;
  \item generates $op \in \Op$ by attaching to $o$ a unique operation identifier 
    and the operation context $S.cur.oids$, representing the set of operations that are causally before $op$;
  \item creates a vertex $v$ with $v.oids = S.cur.oids \cup \set{op.oid}$,
    appends $v$ to $S$ by linking it to $S.cur$ via an edge labeled with $op$,
    and updates $cur$ to be $v$;
  \item sends $op$ to the server asynchronously and returns $val$ to the user.
\end{enumerate}

{\bf Server Processing Part.}   
To establish the total order `$\prec_{s}$' on operations, 
the server maintains the set $soids$ of operations it has executed.
When the server receives an operation $op \in \Op$ from client $c_i$, it
\begin{enumerate}
  \item updates $op.sctx$ to be $soids$ and updates $soids$ to include $op.oid$;
  \item transforms $op$ with an operation sequence in $S$ to obtain $op'$
    by calling $S.\textsc{xForm}(op)$ (see below), 
    and applies $op'$ (specifically, $op'.o$) locally;
  \item sends $op$ (instead of $op'$) to other clients asynchronously.
\end{enumerate}

{\bf Remote Processing Part.}
When a client receives an operation $op \in \Op$ from the server, 
it transforms $op$ with an operation sequence in $S$ to obtain $op'$ 
by calling $S.\textsc{xForm}(op)$ (see below),
and applies $op'$ (specifically, $op'.o$) locally.


{\bf OTs in \cjupiter{}.}
The procedure $S.\textsc{xForm}(op: \Op)$ transforms $op$
with an operation sequence in an $n$-ary ordered state space $S$.
Specifically, it
\begin{enumerate}
  \item locates the vertex $u$ whose $oids$ matches the $ctx$ of $op$, i.e., $u.oids = op.ctx$~\footnote{
    The vertex $u$ exists due to the FIFO communication between the clients and the server.},
    and creates a vertex $v$ with $v.oids = u.oids \cup \set{op.oid}$;
  \item iteratively transforms $op$ with an operation sequence consisting of operations 
    along the \textit{\textbf{first}} edges from $u$ to the final vertex $cur$ of $S$ (Figure~\ref{fig:xform-ot}):
    \begin{enumerate}
      \item obtains the vertex $u'$ and the operation $op'$ associated with the first edge of $u$;
      \item transforms $op$ with $op'$ to obtain $\opot{op}{op'}$ and $\opot{op'}{op}$;
      \item creates a vertex $v'$ with $v'.oids = v.oids \cup \set{op'.oid}$; 
      \item links $v'$ to $v$ via an edge labeled with $\opot{op'}{op}$
	and $v$ to $u$ via an edge labeled with $op$;
      \item updates $u$, $v$, and $op$ to be $u'$, $v'$, and $\opot{op}{op'}$, respectively;
    \end{enumerate}
  \item when $u$ is the final vertex $cur$ of $S$, links $v$ to $u$ via an edge labeled with $op$,
    updates $cur$ to be $v$, and returns the last transformed operation $op$.
\end{enumerate}


\begin{figure}[t]
  \centering
  \begin{subfigure}[b]{0.06\textwidth}
    \includegraphics[width = \textwidth]{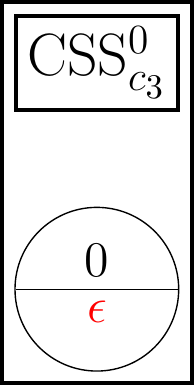}
    \label{fig:jupiter-css-podc16-c1423-css0}
  \end{subfigure}
  \hfil
  \begin{subfigure}[b]{0.07\textwidth}
    \includegraphics[width = \textwidth]{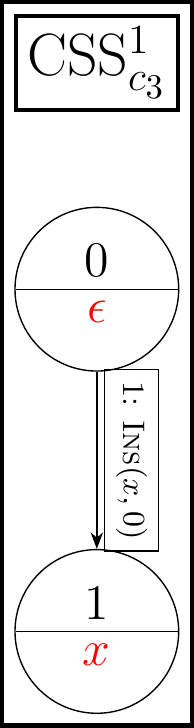}
    \label{fig:jupiter-css-podc16-c1423-css1}
  \end{subfigure}
  \hfil
  \begin{subfigure}[b]{0.17\textwidth}
    \includegraphics[width = \textwidth]{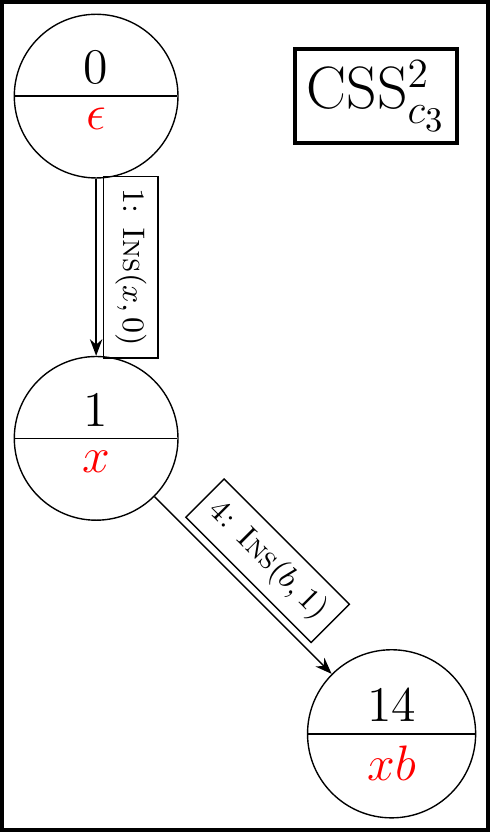}
    \label{fig:jupiter-css-podc-c1423-css2}
  \end{subfigure}
  \hfil
  \begin{subfigure}[b]{0.32\textwidth}
    \includegraphics[width = \textwidth]{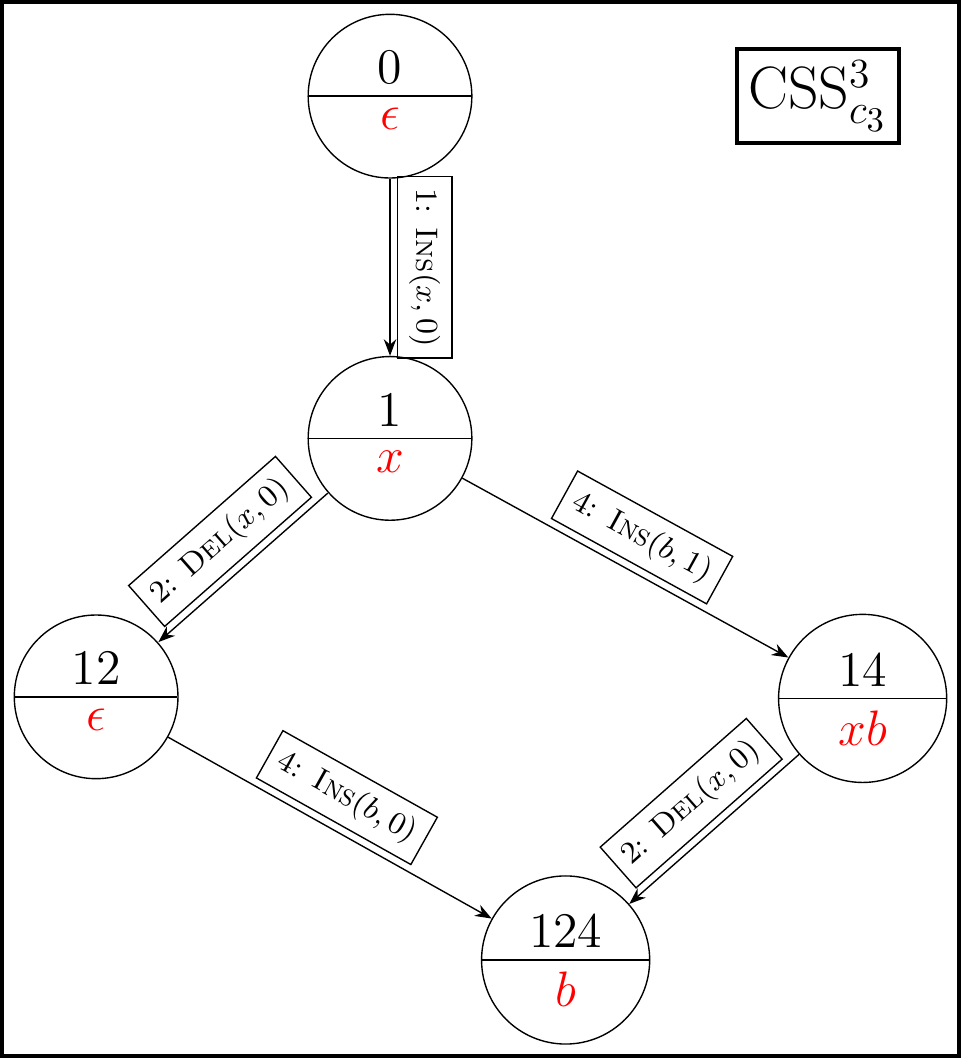}
    \label{fig:jupiter-css-podc16-c1423-css3}
  \end{subfigure}
  \hfil
  \begin{subfigure}[b]{0.35\textwidth}
    \includegraphics[width = \textwidth]{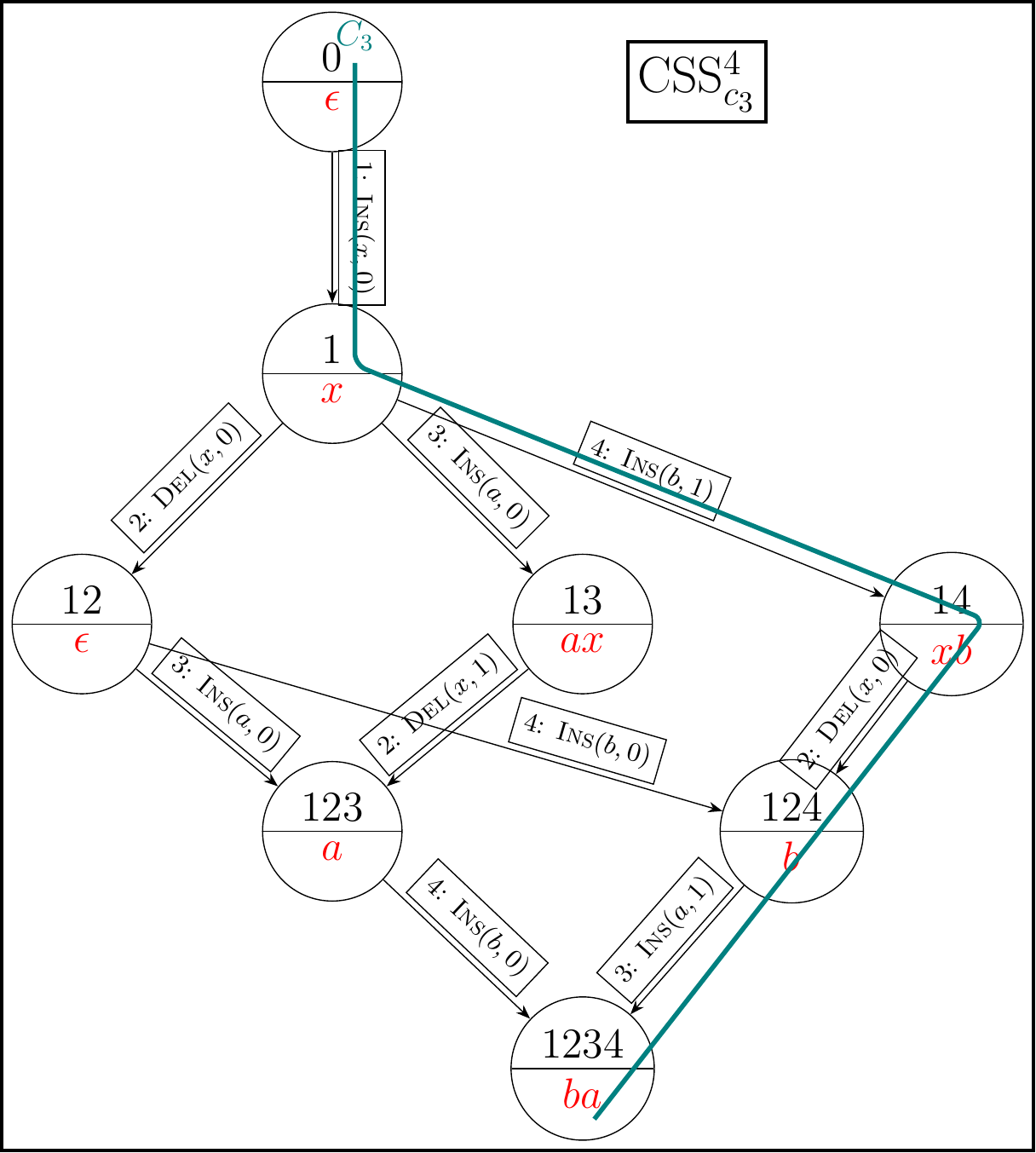}
    \label{fig:jupiter-css-podc16-c1423-css4}
  \end{subfigure}
  \caption{Illustration of client $c_3$ in \cjupiter{} under the schedule of Figure~\ref{fig:jupiter-schedule-podc16}.
  Its behavior (i.e., the sequence of state transitions) is indicated by the path in \cssck{3}{4}.
  (Please refer to Figure~\ref{fig:appendix-cjupiter-illustration} for the illustration of clients $c_1$ and $c_2$ and the server $s$.)}
  \label{fig:jupiter-css-podc16-c1423}
\end{figure}

To keep track of the construction of the $n$-ary ordered state spaces in \cjupiter,
for each state space, we introduce a superscript $k$ to refer to the one after the $k$-th step 
(i.e., after processing $k$ operations), counting from $0$.
For instance, the state space \cssc{i} (resp. \csss) after the $k$-th step
maintained by client $c_i$ (resp. the server $s$) is denoted by \cssck{i}{k} (resp. \csssk{k}).
This notational convention also applies to \jupiter{} (reviewed in Section~\ref{ss:jupiter-protocol}).

\begin{example}[Illustration of \cjupiter]   \label{ex:cjupiter}

Figure~\ref{fig:jupiter-css-podc16-c1423} illustrates client $c_3$ 
in \cjupiter{} under the schedule of Figure~\ref{fig:jupiter-schedule-podc16}.
For convenience, we denote, for instance, 
a vertex $v$ with $v.oids = \set{o_1, o_4}$ by $v_{14}$ and
an operation $o_3$ with $o_3.ctx = \set{o_1, o_2}$ by $\copinmath{o_3}{o_1, o_2}$.
We have also mixed the notations of operations of types $\opset$ and \Op{}
when no confusion arises.
We map various vertices and operations in this example to the ones 
(i.e., $u, u', v, v', op, op'$) used in the description of the \cjupiter{} protocol.

After receiving and applying $o_1 = \lins(x,0)$ of client $c_1$ from the server,
client $c_3$ generates $o_4 = \lins(b,1)$.
It applies $o_4$ locally, creates a new vertex $v_{14}$, and appends it to \cssck{3}{1}
via an edge from $v_1$ labeled with $\copinmath{o_4}{o_1}$.
Then, $\copinmath{o_4}{o_1}$ is propagated to the server.

Next, client $c_3$ receives $o_2 = \ldel(x,0)$ of client $c_1$ from the server.
The operation context of $o_2$ is $\set{o_1}$, matching the $oids$ of $v_{1}$ ($u$).
By \textsc{xForm}, $\copinmath{o_2}{o_1}$ ($op$) is transformed with $\copinmath{o_4}{o_1}$ ($op'$):
$OT\Big(\copinmath{o_2}{o_1} = \ldel(x,0), \copinmath{o_4}{o_1} = \lins(b,1)\Big)
  = \Big(\copinmath{o_2}{o_1,o_4} = \ldel(x,0), \copinmath{o_4}{o_1,o_2} = \lins(b,0)\Big)$.
As a result, $v_{124}$ ($v'$) is created and is linked to $v_{12}$ ($v$) and $v_{14}$ ($u'$) 
via the edges labeled with $\copinmath{o_4}{o_1,o_2}$ and $\copinmath{o_2}{o_1,o_4}$, respectively.
Because $o_2$ is unaware of $o_4$ at the server ($o_4.sctx = \emptyset$ now),
the edge from $v_{1}$ to $v_{12}$ is ordered before (to the left of) 
that from $v_{1}$ to $v_{14}$ in \cssck{3}{3}.

Finally, client $c_3$ receives $\copinmath{o_3}{o_1} = \lins(a,0)$ of client $c_2$ from the server.
The operation context of $o_3$ is $\set{o_1}$, matching the $oids$ of $v_1$ ($u$).
By $\textsc{xForm}$, $\copinmath{o_3}{o_1}$ will be transformed with the operation sequence
consisting of operations along the \emph{first} edges from $v_1$ to the final vertex $v_{124}$ of \cssck{3}{3}, 
namely $\copinmath{o_2}{o_1}$ from $v_{1}$ and $\copinmath{o_4}{o_1,o_2}$ from $v_{12}$.
Specifically, $\copinmath{o_3}{o_1}$ ($op$) is first transformed with $\copinmath{o_2}{o_1}$ ($op'$):
$OT\Big(\copinmath{o_3}{o_1} = \lins(a,0), \copinmath{o_2}{o_1} = \ldel(x,0)\Big)
  = \Big(\copinmath{o_3}{o_1,o_2} = \lins(a,0), \copinmath{o_2}{o_1,o_3} = \ldel(x,1)\Big)$.
Since $o_3$ is aware of $o_2$ but unaware of $o_4$ at the server,
the new edge from $v_{1}$ labeled with $\copinmath{o_3}{o_1}$ 
is placed before that with $\copinmath{o_4}{o_1}$ but after that with $\copinmath{o_2}{o_1}$.
Then, $\copinmath{o_3}{o_1,o_2}$ ($op$) is transformed with $\copinmath{o_4}{o_1, o_2}$ ($op'$),
yielding $v_{1234}$ and $\copinmath{o_3}{o_1,o_2,o_4}$.
Client $c_3$ applies $\copinmath{o_3}{o_1, o_2, o_4}$, obtaining the list content $ba$.
\end{example}


The choice of the ``first'' edges in OTs 
is necessary to establish equivalence between \cjupiter{} and \jupiter{},
particularly \emph{at the server side}.
First, the operation sequence along the first edges from a vertex of \csss{} at the server
admits a simple characterization.

\begin{lemma}[\cjupiter{}'s ``First'' Rule]  \label{lemma:cjupiter-first-rule}
  Let $OP = \seq{op_1, op_2, \ldots, op_m}$ ($op_i \in \Op{}$)
  be the operation sequence the server has currently processed in total order `$\prec_{s}$'.
  For any vertex $v$ in the current \csss{},
  the path along the \textbf{first} edges from $v$ to the final vertex of \csss{}
  consists of the operations of $OP \setminus v$ in total order `$\prec_{s}$'
  (may be empty if $v$ is the final vertex of \csss{}), where
  \[
    OP \setminus v = \Big\{op \in OP \mid op.oid \in \set{op_1.oid, op_2.oid, \cdots, op_m.oid} \setminus v.oids \Big\}.
  \]
\end{lemma}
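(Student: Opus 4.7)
The plan is to prove the lemma by induction on the number $k$ of operations the server has processed, i.e., on the construction of $\cssskinmath{k}$. The base case $k = 0$ is immediate: $\cssskinmath{0}$ contains only the initial vertex $(\emptyset, \emptyset)$ with no outgoing edges, and $OP \setminus v = \emptyset$ for this vertex. For the inductive step, I would assume the claim for $\cssskinmath{k}$ and analyze how $\textsc{xForm}$ modifies first edges when the $(k+1)$-th operation $op_{k+1}$ arrives at the server.

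Two facts will drive the argument. First, the server sets $op_{k+1}.sctx$ to the $oid$s of all previously processed operations before calling $\textsc{xForm}$, so $op_{k+1}$ is the $\prec_{s}$-maximum of $OP^{k+1}$. Second, the OT function preserves both the $oid$ and $sctx$ fields, so every transformed copy of $op_{k+1}$ produced during $\textsc{xForm}$ still has $op_{k+1}$'s $oid$ and is still $\prec_{s}$-maximum. By the inductive hypothesis applied to the vertex $u$ with $u.oids = op_{k+1}.ctx$, the first-edge path from $u$ to the current final vertex is $u = u_{0} \to u_{1} \to \cdots \to u_{p} = cur^{k}$, with edges labeled by operations $op'_{1}, op'_{2}, \ldots, op'_{p}$ in $\prec_{s}$ order.

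The call $\textsc{xForm}(op_{k+1})$ walks exactly this path, creating a parallel chain of fresh vertices $v_{0}, v_{1}, \ldots, v_{p} = cur^{k+1}$: at step $i$ it installs a new edge $u_{i} \to v_{i}$ whose label carries $op_{k+1}$'s $oid$, together with $v_{i} \to v_{i+1}$ whose label carries $op'_{i+1}$'s $oid$. I would verify the inductive claim by cases. For any old vertex $w \in \cssskinmath{k}$, its first-edge path in $\cssskinmath{k}$ ended at $cur^{k} = u_{p}$ by IH. At each $u_{i}$ with $i < p$, the newly installed out-edge has the $\prec_{s}$-maximum label among $u_{i}$'s out-edges, and by the edge-ordering convention on $n$-ary ordered state spaces it is placed last, so the old first edge $u_{i} \to u_{i+1}$ remains first. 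Hence $w$'s first-edge path in $\cssskinmath{k+1}$ is its old path extended by the single new edge $u_{p} \to v_{p}$, yielding $(OP^{k} \setminus w)$ followed by $op_{k+1}$, which is exactly $OP^{k+1} \setminus w$ in $\prec_{s}$ order. For a new vertex $v_{i}$ with $i < p$, its unique outgoing edge $v_{i} \to v_{i+1}$ starts a first-edge path whose labels have $oid$s $op'_{i+1}, \ldots, op'_{p}$; since $v_{i}.oids = u.oids \cup \{op_{k+1}.oid, op'_{1}.oid, \ldots, op'_{i}.oid\}$, this sequence is $OP^{k+1} \setminus v_{i}$ in $\prec_{s}$ order. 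The case $v_{p} = cur^{k+1}$ is vacuous.

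The main obstacle will be justifying the edge-ordering claim at each intermediate $u_{i}$: we must know that the labels on \emph{all} existing out-edges of $u_{i}$ (not only those on the first-edge path to $cur^{k}$) are $\prec_{s}$-smaller than the new label. This follows once one observes that every out-edge of $u_{i}$ present in $\cssskinmath{k}$ was installed when the server processed some $op_{j}$ with $j \le k$, so its label's $oid$ lies in $\{op_{1}.oid, \ldots, op_{k}.oid\}$; together with $op_{k+1}$ being $\prec_{s}$-maximum and with $\prec_{s}$ being computed from $sctx$, the newly installed edge is $\prec_{s}$-larger than every pre-existing edge out of $u_{i}$. This monotonicity—that the most recently installed edge at any vertex carries the $\prec_{s}$-maximum label—is the crucial invariant that makes the induction go through.
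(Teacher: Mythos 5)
Your proposal is correct and follows essentially the same route as the paper: induction on the operation sequence the server has processed, verifying the claim separately for the old vertices (whose first-edge paths get extended by the single new edge labeled with $op_{k+1}$) and for the freshly created vertices (which form a chain labeled by the remaining operations $L$). The only difference is that you make explicit the edge-ordering invariant—new edges carry $op_{k+1}$'s $oid$ with $sctx$ preserved under OT, hence compare $\prec_{s}$-greater than all pre-existing edges at the same vertex—which the paper's proof uses implicitly when asserting that existing first-edge paths remain first.
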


\begin{example}[\cjupiter{}'s ``First'' Rule]	\label{ex:cjupiter-first-rule}
  Consider \csss{} at the server shown in Figure~\ref{fig:cjupiter-css-podc16-allinone}
  under the schedule of Figure~\ref{fig:jupiter-schedule-podc16};
  see Figure~\ref{fig:cjupiter-illustration-server} for its construction.
  Suppose that the server has processed all four operations.
  That is, we take $OP = \seq{o_1, o_2, o_3, o_4}$ in Lemma~\ref{lemma:cjupiter-first-rule} 
  (we mix operations of types $\opset{}$ and $\Op{}$).
  Then, the path along the first edges from vertex $v_{1}$ (resp. $v_{13}$) 
  consists of the operations $OP \setminus v_{1} = \set{o_2, o_3, o_4}$
  (resp. $OP \setminus v_{13} = \set{o_2, o_4}$) in total order `$\prec_{s}$'. 
\end{example}

Based on Lemma~\ref{lemma:cjupiter-first-rule}, 
the operation sequence with which an operation transforms \emph{at the server}
can be characterized as follows,
which is exactly the same with that for \jupiter{}~\cite{Xu:CSCW14}.

\begin{lemma}[\cjupiter{}'s OT Sequence] \label{lemma:cjupiter-ot-server}
  In \emph{\textsc{xForm}} of \emph{\cjupiter{}}, 
  the operation sequence $L$ (may be empty) with which an operation $op$ transforms \textbf{at the server}
  consists of the operations that are both totally ordered by `$\prec_{s}$' before 
  and concurrent by \emph{`$\parallel$'} with $op$.
  Furthermore, the operations in $L$ are totally ordered by `$\prec_{s}$'.
\end{lemma}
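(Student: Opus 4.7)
The plan is to derive the lemma directly from Lemma~\ref{lemma:cjupiter-first-rule}. When the server processes $op$, $\textsc{xForm}(op)$ first locates the vertex $u$ in \csss{} with $u.oids = op.ctx$, and then iteratively transforms $op$ along the first edges from $u$ to the current final vertex of \csss{}. By Lemma~\ref{lemma:cjupiter-first-rule}, the operation sequence $L$ encountered along this path is exactly $OP \setminus u$ listed in $\prec_{s}$-order, where $OP$ denotes the operations the server had processed immediately prior to $op$. The task therefore reduces to showing
\[
  OP \setminus u \;=\; \{op' \mid op' \prec_{s} op \text{ and } \pararel{}{op'}{op}\}.
\]

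First I would argue that $OP$ coincides with the set of operations that are $\prec_{s}$-before $op$: by definition $\prec_{s}$ is the order in which the server serializes incoming operations, and immediately before $\textsc{xForm}(op)$ is invoked the server's state space is constructed from precisely the operations in $OP$. Next I would show that $u.oids$, which equals $op.ctx$, coincides with the set of operation identifiers causally before $op$. At the client that generated $op$, we have $op.ctx = S.cur.oids$, and a routine induction over the three parts of the \cjupiter{} protocol shows that $S.cur.oids$ always equals the set of identifiers of operations the client has observed, which by the definition of $\hb$ is exactly the causal past of $op$. FIFO channels between clients and the server guarantee that every operation in $op.ctx$ has already been delivered to the server before $op$, which is what makes $u$ exist in \csss{}.

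Combining these two characterizations: $op' \in OP \setminus u$ means $op' \prec_{s} op$ and $op'$ is not causally before $op$. Consistency of $\prec_{s}$ with $\hb$ rules out $\hbrel{}{op}{op'}$, since that would force $op \prec_{s} op'$ and contradict $op' \prec_{s} op$. Hence $\pararel{}{op'}{op}$. The reverse inclusion is immediate: for any $op' \in OP$ concurrent with $op$, we have $op' \not\to op$, so $op'.oid \notin op.ctx = u.oids$, so $op' \in OP \setminus u$. The ``furthermore'' clause is then automatic, because $L$ is the listing of $OP \setminus u$ as a subsequence of $OP$, and $OP$ is by construction totally ordered by $\prec_{s}$.

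The main obstacle is establishing the structural invariant $u.oids = op.ctx = \{op'.oid \mid op' \hbrel{}{}{} op\}$. This invariant must be maintained through local generation at a client, remote reception at a client, and reception at the server, and it hinges on the fact that every vertex's $oids$ component faithfully tracks the causal past of the replica state it represents -- in particular that OTs modify the $ctx$ field of an operation but never its $oid$, so operation identifiers are stable across the state space. Once this invariant is in hand, the rest of the argument is a mechanical consequence of Lemma~\ref{lemma:cjupiter-first-rule} and the consistency of $\prec_{s}$ with $\hb$.
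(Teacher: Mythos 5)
Your proposal is correct and takes essentially the same route as the paper: both arguments reduce the claim to Lemma~\ref{lemma:cjupiter-first-rule}, together with the observations that $op$ is the last operation in `$\prec_{s}$' when the server transforms it (so the path operations are all $\prec_{s}$-before $op$) and that $u.oids = op.ctx$ captures exactly the causal past of $op$, with consistency of `$\prec_{s}$' with `$\hbrel{}{}{}$' excluding $\hbrel{}{op}{op'}$. The only difference is presentational: the paper treats the invariant that $op.ctx$ consists of the identifiers of the operations causally before $op$ as part of the definition of the $ctx$ field (maintained by the protocol and FIFO channels), whereas you flag it as an invariant to be verified by induction over the protocol steps.
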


\begin{example}[\cjupiter{}'s OT Sequence]	\label{ex:cjupiter-ot-sequence}
  Consider the behavior of the server summarized in Figure~\ref{fig:cjupiter-css-podc16-allinone}
  under the schedule of Figure~\ref{fig:jupiter-schedule-podc16}.
  According to Lemma~\ref{lemma:cjupiter-first-rule}, 
  the operation sequence with which $op = o_4$ transforms 
  consists of operations $o_2$ (i.e., $\copinmath{o_2}{o_1}$) from vertex $v_1$
  and $o_3$ (i.e., $\copinmath{o_3}{o_1, o_2}$) from vertex $v_{12}$
  in total order `$\prec_{s}$',
  which are both totally ordered by `$\prec_{s}$' before and concurrent by `$\parallel$' with $o_4$.
\end{example}
\subsection{\cjupiter{} is Compact} \label{ss:cjupiter-compact}

Although ($n+1$) $n$-ary ordered state spaces are maintained by \cjupiter{} for a system with $n$ clients,
they are all the same.
That is, at a high level, \cjupiter{} maintains only a single $n$-ary ordered state space.

\begin{prop}[$n+1 \Rightarrow 1$] \label{prop:css-server-client}
  In \emph{\cjupiter{}}, the replicas that have processed the same set of operations 
  (in terms of their $oid$s) have the same $n$-ary ordered state space.
\end{prop}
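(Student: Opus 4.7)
The plan is to prove the proposition by induction on $k = |OP|$, combined with a \emph{Diamond Lemma} that establishes the commutativity of \textsc{xForm} with respect to the processing order of operations.

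For the base case $k = 0$, both state spaces consist of only the initial vertex $(\emptyset, \emptyset)$. For the inductive step, fix two replicas $R$ and $R'$ that have each processed the same set $OP$ of $k$ operations, possibly via different orderings. The key observation is that each $op \in OP$ carries the same $(oid, o, ctx, sctx)$ fields at every replica: $ctx$ is fixed at generation, and $sctx$ is fixed by the server when it receives $op$. Furthermore, \textsc{xForm} and its ``first edge'' choice (which is determined by $sctx$) are deterministic functions of the current state space and the incoming operation.

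The heart of the argument is the following \emph{Diamond Lemma}: if two distinct operations $op_1, op_2$ can each be added to an $n$-ary ordered state space $S$ via \textsc{xForm}, then processing them in either order yields identical state spaces. I would prove this directly from condition~2 of Definition~\ref{def:css-nary} together with CP1: once $op_1$ and $op_2$ both leave a common vertex $u$, the completion vertex $v'$ is forced to exist, and CP1 guarantees that the labels on its two incoming edges coincide regardless of the order in which $op_1$ and $op_2$ are introduced. Since the ordering of outgoing edges at every vertex is determined by $sctx$, which is a global property of the operations, the resulting state spaces are syntactically identical.

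Given the Diamond Lemma, I would complete the induction as follows. The causally-before relation restricted to $OP$ is a partial order, and every valid processing order at a replica is one of its linear extensions (FIFO channels with the server ensure causal delivery at each replica). Since any two linear extensions of a finite partial order differ by a finite sequence of adjacent transpositions of concurrent elements, and each such transposition preserves the state space by the Diamond Lemma, we conclude $S_R = S_{R'}$.

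The main obstacle I foresee is proving the Diamond Lemma cleanly. When $op_2$ is added after $op_1$, the \textsc{xForm} call for $op_2$ may traverse a longer first-edge path that now includes edges created during $op_1$'s \textsc{xForm}, and conversely. I would need to show that the newly created vertices and edges coincide with those produced by the reverse ordering. This calls for a careful case analysis of the ``first edge'' ordering (echoing Lemma~\ref{lemma:cjupiter-first-rule}) together with repeated applications of CP1, most naturally packaged as an inner induction on the length of the first-edge path traversed by \textsc{xForm}.
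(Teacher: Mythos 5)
Your route is genuinely different from the paper's. The paper inducts along the server's total order $\prec_s$: when $op_{k+1}$ arrives, the server and every non-generating client apply the \emph{same original} operation to state spaces that are equal by the inductive hypothesis (the easy case), and all the work is concentrated in the generating client $c_j$, where $op_{k+1}$ was processed at some earlier position $l$; the paper shows a \emph{block} commutation, namely that $\sigma_{1,l-1}\circ op_{k+1}\circ\sigma_{l+1,k+1}$ and $\sigma_{1,l-1}\circ\sigma_{l+1,k+1}\circ op_{k+1}$ build the same space, using the characterization of the OT sequence (Lemma~\ref{lemma:cjupiter-ot-server}) and a step-by-step simulation of $op_{k+1}$'s ladder. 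You instead reduce to adjacent transpositions of concurrent operations between linear extensions of the causal order and prove a local Diamond Lemma. The underlying phenomenon is the same commutation; your decomposition is more modular and symmetric, while the paper's stays entirely within actual protocol runs and can therefore reuse Lemmas~\ref{lemma:cjupiter-first-rule} and~\ref{lemma:cjupiter-ot-server}, which are proved only for the server's processing order. Your reduction is sound as far as it goes: replica orders are linear extensions of the causal order, any two linear extensions are connected by adjacent swaps of concurrent elements, and since concurrent operations do not appear in each other's contexts, the \textsc{Locate} step still succeeds after a swap, so each intermediate sequence is processable.

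Two concrete gaps remain. First, your claim that every replica holds each operation with the same $(oid,o,ctx,sctx)$ is false: the generating client processes its own operation \emph{before} the server sees it, with $sctx=\emptyset$ (see the client's \textsc{Do}), so edge order at that client is not ``determined by $sctx$'' but by the replica-id branches of \textsc{Compare} (Algorithm~\ref{alg:css-op}). You need a small lemma that \textsc{Compare}, despite this asymmetry, reproduces the same total order $\prec_s$ at every replica (remote-vs-local pairs are resolved correctly because either the local $oid$ appears in the remote operation's $sctx$, or the remote operation was serialized first and the third branch fires); without it, ``syntactically identical'' does not follow, and strictly speaking equality can only hold up to ignoring the $sctx$ component of edge labels. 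Second, the Diamond Lemma is where all the content lies, and your plan to prove it by ``echoing Lemma~\ref{lemma:cjupiter-first-rule}'' does not directly work: the intermediate sequences in your transposition chain are \emph{not} runs of any replica, so the first-edge characterization proved for the server cannot be cited; you must strengthen the induction invariant to state spaces built from \emph{arbitrary} linear extensions (e.g., ``after processing any prefix of a linear extension, the first-edge path from any vertex $v$ consists of the processed operations outside $v.oids$ in $\prec_s$ order''), and only then does the two-ladder case analysis close. This is doable—and note that the graph equality itself needs only determinism of the OT calls, with CP1 needed only for well-definedness of the list contents at vertices—but as written the crucial step is a plan rather than a proof, comparable in rigor to the inner induction the paper itself leaves sketched in its Case~2.
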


Informally, this proposition holds because 
we have kept all ``by-product'' states/vertices of OTs in the $n$-ary ordered state spaces,
and each client is ``synchronized'' with the server.
Since all replicas will eventually process all operations,
the final $n$-ary ordered state spaces at all replicas are the same.
The construction order may differ replica by replica.

\begin{example}[\cjupiter{} is Compact]
  Figure~\ref{fig:cjupiter-css-podc16-allinone} shows the same final $n$-ary ordered state space 
  constructed by \cjupiter{} for each replica under the schedule of Figure~\ref{fig:jupiter-schedule-podc16}.
  (Figure~\ref{fig:appendix-cjupiter-illustration} shows the step-by-step construction for each replica.)
  Each replica behavior (i.e., the sequence of state transitions) corresponds to a path going through this state space.
  As illustrated, the server $s$ and client $c_1$ 
  go along the path $v_{0} \xrightarrow{o_1} v_{1} \xrightarrow{o_2} v_{12} \xrightarrow{o_3} v_{123} \xrightarrow{o_4} v_{1234}$,
  client $c_2$ goes along the path $v_{0} \xrightarrow{o_1} v_{1} \xrightarrow{o_3} v_{13} \xrightarrow{o_2} v_{123} \xrightarrow{o_4} v_{1234}$,
  and client $c_3$ goes along the path $v_{0} \xrightarrow{o_1} v_{1} \xrightarrow{o_4} v_{14} \xrightarrow{o_2} v_{124} \xrightarrow{o_3} v_{1234}$.
\end{example}

Together with the fact that the OT functions satisfy CP1, 
Proposition~\ref{prop:css-server-client} implies that

\begin{theorem}[$\cjupiter{} \models \cp$]   \label{thm:css-cp}
  \emph{\cjupiter{}} satisfies the convergence property $\cp{}$.
\end{theorem}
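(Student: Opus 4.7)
The plan is to reduce convergence to the single claim that in \cjupiter{}, the list content associated with any vertex $v$ in the $n$-ary ordered state space is determined by $v.oids$ alone, independently of which path from $v_0$ to $v$ a replica has traversed. Fix two \readtype{} events $e_1$ at $R_1$ and $e_2$ at $R_2$ with $\vis_{\instype,\deltype}^{-1}(e_1) = \vis_{\instype,\deltype}^{-1}(e_2)$. Each replica's response is the list content at its current vertex $S.cur$, obtained by evolving $\sigma_0$ along the path in $S$ that the replica has actually traversed from $v_0$ to $S.cur$. By Proposition~\ref{prop:css-server-client}, the two replicas share the same $n$-ary ordered state space $S$, and since vertices are uniquely identified by their $oids$ (Definition~\ref{def:css-nary}, item~1), their current vertices coincide. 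So it suffices to prove: \emph{any two paths in $S$ from $v_0$ to a common vertex $v$ yield the same list content.}

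The core is a confluence argument powered by CP1. The second clause of Definition~\ref{def:css-nary} guarantees that whenever $S$ contains a diamond with edges $u \xrightarrow{op'} u' \xrightarrow{\opot{op}{op'}} v'$ and $u \xrightarrow{op} v \xrightarrow{\opot{op'}{op}} v'$, CP1 applied to $(\opot{op}{op'}, \opot{op'}{op}) = OT(op, op')$ forces the list state at $v'$ to be the same along either two-step path from $u$. I would then lift this local confluence to full path-independence by induction on $|v.oids|$: given two paths from $v_0$ to $v$, locate the first vertex $u$ at which they diverge, apply the diamond at $u$ to swap the first two steps of one path so that the paths now share a longer common prefix, and invoke the inductive hypothesis on the shorter remaining tails from the meeting vertex $v'$ to $v$. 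The base case $|v.oids| \le 1$ is immediate.

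The main obstacle I anticipate will be verifying that every diamond used in this rewriting is actually present in $S$, rather than one that merely ``ought'' to exist. Diamonds are introduced only by \textsc{xForm}, and only when the ``first-edge'' discipline of Definition~\ref{def:css-nary}, item~2 has been triggered during protocol execution; one must check that this discipline, together with FIFO delivery between clients and the server and the invariant that each client synchronizes via the server, actually closes up every pair of child edges out of every vertex that will be encountered during rewriting. A careful induction on the step-by-step construction of $S$ (tracking both local and remote processing steps of \cjupiter{}) should suffice. Once path-independence is established, the theorem follows, since $w_1$ and $w_2$ are both the list content computed at the common current vertex.
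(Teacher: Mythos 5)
Your proposal follows essentially the same route as the paper: the paper's entire proof is the observation that Proposition~\ref{prop:css-server-client} (all replicas with the same processed operation set share the same $n$-ary ordered state space, hence the same current vertex) together with CP1 yields convergence. Your confluence/diamond elaboration, including the caveat about verifying that the needed diamonds are actually constructed by \textsc{xForm}, simply fills in details the paper leaves implicit, so it is the same argument in expanded form.
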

\section{\cjupiter{} is Equivalent to \jupiter}   \label{section:jupiter-cjupiter-equiv}

We now prove that \cjupiter{} is equivalent to \jupiter{}
(reviewed in Section~\ref{ss:jupiter-protocol})
from perspectives of both the server and clients.
Specifically, we prove that
the behaviors of the servers are the same (Section~\ref{ss:server-equiv}), 
and that the behaviors of each pair of corresponding clients are the same (Section~\ref{ss:client-equiv}).
Consequently, we have that

\begin{theorem}[Equivalence]    \label{thm:equiv}
  Under the same schedule, the behaviors (Section~\ref{ss:model}) of corresponding replicas 
  in \emph{\cjupiter{}} and \emph{\jupiter{}} are the same.
\end{theorem}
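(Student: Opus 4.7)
The plan is to decompose the theorem along the split already announced for Sections~\ref{ss:server-equiv} and~\ref{ss:client-equiv}: first show that the server of \cjupiter{} and the server of \jupiter{} have identical behaviors under any fixed schedule, then show the same for each pair of corresponding clients. In both cases I would argue by induction on the prefix of the schedule consumed so far, maintaining the invariant that after $k$ processed events the two protocols have applied the same concrete operations $o \in \opset$ at the corresponding replica and hence reached the same list state and the same sequence of states; the cases for \doe{}, \send{}, and \rcv{} events are handled separately, with \doe{} and \send{} being routine bookkeeping and \rcv{} carrying the real content.

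For the server side, the heavy lifting is already packaged in Lemma~\ref{lemma:cjupiter-ot-server}: when the server of \cjupiter{} processes an incoming $op$, the OT sequence $L$ used by $\textsc{xForm}$ is exactly the operations that are $\sorder$-before $op$ and concurrent with $op$, sorted by $\sorder$. This is the same sequence that the server of \jupiter{} transforms against in its 2D state space at $s$~\cite{Xu:CSCW14}. Since both protocols invoke the same underlying OT functions on the same operands in the same order, the resulting transformed operation and the updated list state coincide, which closes the inductive step for \rcv{} at the server.

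For the client side I would maintain a stronger invariant relating \cssckinmath{i}{k} to client $c_i$'s 2D state space in \jupiter{}: the path along the \emph{first edges} from $c_i$'s current vertex corresponds to the global (remote) dimension of Jupiter's 2D state space, while the non-first branches produced by local $\doe{}$ events correspond to Jupiter's local dimension. Local $\doe{}$ events extend both structures in lock-step with the same $o$, so they are aligned trivially. For a \rcv{} of a remote $op$, I would show by repeated application of CP1 that iteratively transforming $op$ along \cjupiter{}'s first-edge path from the anchor vertex $u$ (with $u.oids = op.ctx$) yields exactly the same final transformed operation as \jupiter{}'s optimized two-phase transformation in its 2D state space. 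The ordering of edges by $sctx$, together with FIFO delivery from the server, ensures that the ``first edge'' at each step tracks precisely the next operation that \jupiter{} transforms against in its global dimension.

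The main obstacle is this client-side correspondence: because \jupiter{} eliminates redundant OTs that \cjupiter{} keeps as by-product vertices, a naive ``identical intermediate states'' invariant is false, and one must instead show that \cjupiter{}'s additional by-product vertices collapse under CP1 without changing the final operation that is applied locally. Setting up the right bijection between Jupiter's two axes and \cjupiter{}'s first-edge path (and proving it is preserved by \textsc{xForm}) is the delicate combinatorial step; once that invariant is in place the equivalence on replica states, and thus on full behaviors, follows by induction together with Theorem~\ref{thm:css-cp} to guarantee well-definedness of the final applied operation.
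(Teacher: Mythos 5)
Your decomposition is the paper's: split into server equivalence and client equivalence, argue each by induction on the prefix of the schedule, and for the server invoke Lemma~\ref{lemma:cjupiter-ot-server} together with the matching characterization of \jupiter{}'s server-side OT sequence from~\cite{Xu:CSCW14}, so that both servers perform literally the same OT applications and hence the same state transitions. That half is sound and essentially identical to the paper's argument (Proposition~\ref{prop:server-equiv}, Theorem~\ref{thm:server-equiv}).

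The client half has a genuine gap, in two places. First, the structural invariant you propose is not correct as stated: at a client the first-edge path from the anchor vertex is \emph{not} \jupiter{}'s global dimension. It interleaves operations redirected by the server with the client's own pending local operations (under the schedule of Figure~\ref{fig:jupiter-schedule-podc16}, the first-edge path of $c_3$ from $v_1$ is $o_2$ followed by the \emph{local} operation $o_4$; also the second edge out of $v_1$ is the remote $o_3$, not a local branch), and \jupiter{}'s client transforms a received operation along its \emph{local} dimension, not the global one. So the claimed bijection does not by itself determine that \cjupiter{}'s extra client-side OTs use the same operands as the server's, which is the crux. Second, CP1 cannot deliver what you ask of it: CP1 gives equality of the two list states across a single OT diamond, not equality of the transformed \emph{operations}, and what the induction needs at each receive event is that the operation actually applied at $c_i$ is the same in both protocols. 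The paper obtains this as a literal syntactic identity, with no appeal to CP1, by first proving the compactness result Proposition~\ref{prop:css-server-client}: a client that has processed the same set of operations as the server has the \emph{same} $n$-ary ordered state space, hence (via Lemma~\ref{lemma:cjupiter-ot-server} and FIFO delivery) the prefix of the client's first-edge path from the anchor coincides exactly with the server's sequence $L$, the operation \jupiter{}'s server propagates coincides with \cjupiter{}'s intermediate transform after that prefix, and the remaining OTs coincide with \jupiter{}'s client-side OTs (Proposition~\ref{prop:client-equiv}: \jupiter{}'s 2D state space is a subgraph of \cjupiter{}'s). Note that Lemma~\ref{lemma:cjupiter-first-rule} is proved only for the server and does not transfer verbatim to clients because the client-side edge order also uses the local/remote tie-break, so some substitute such as Proposition~\ref{prop:css-server-client} is indispensable; without it (and with CP1 in its place) your inductive step does not close. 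The final appeal to Theorem~\ref{thm:css-cp} is likewise unnecessary: convergence plays no role in the equivalence proof.
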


\subsection{Review of \jupiter{}}  \label{ss:jupiter-protocol}

We review the \jupiter{} protocol in~\cite{Xu:CSCW14},
a \emph{multi-client} description of \jupiter{} first proposed in~\cite{Nichols:UIST95}~\footnote{
  The Jupiter protocol in~\cite{Nichols:UIST95} uses 1D buffers,
  but does not explicitly describe the multi-client scenario.}.
Consider a client/server system with $n$ clients.
\jupiter{}~\cite{Xu:CSCW14} maintains $2n$ \emph{2D state spaces} 
(Appendix~\ref{appendix:ss-2d-state-space}),
each consisting of a \emph{local} dimension and a \emph{global} dimension.
Specifically, each client $c_i$ maintains a 2D state space, denoted \cscwc{i},
with the local dimension for operations generated by the client
and the global dimension by others.
The server maintains $n$ 2D state spaces, one for each client.
The state space for client $c_i$, denoted \cscws{i},
consists of the local dimension for operations from client $c_i$
and the global dimension from others.

\jupiter{} is similar to \cjupiter{} with two major differences:
First, in $\textsc{xForm}(op: \Op, d \in \set{\Local, \Remote})$ of \jupiter{}, 
the operation sequence with which $op$ transforms is determined by the parameter $d$,
indicating the local/global dimension described above
(instead of following the \emph{first} edges as in \cjupiter{}).
Second, in \jupiter{}, the server propagates the \emph{transformed} operation 
(instead of the original one it receives) to other clients.
As with \cjupiter{}, we describe \jupiter{} in three parts.
We omit the details that are in common with and have been explained in \cjupiter{};
see Appendix~\ref{appendix:ss-jupiter} for pseudocode.

{\bf Local Processing Part.}
When client $c_i$ receives an operation $o \in \opset$ from a user,
it applies $o$ locally, generates $op \in \Op$ for $o$,
saves $op$ along the local dimension at the end of its 2D state space \cscwc{i},
and sends $op$ to the server asynchronously.

{\bf Server Processing Part.}
When the server receives an operation $op \in \Op$ from client $c_i$, it
first transforms $op$ with an operation sequence along the global dimension 
in \cscws{i} to obtain $op'$ by calling $\textsc{xForm}(op, \Remote)$ (see below),
and applies $op'$ locally.
Then, for each $j \neq i$, it saves $op'$ at the end of \cscws{j} along the global dimension.
Finally, $op'$ (instead of $op$) is sent to other clients asynchronously.

{\bf Remote Processing Part.}
When client $c_i$ receives an operation $op \in \Op$ from the server,
it transforms $op$ with an operation sequence along the local dimension 
in its 2D state space \cscwc{i} to obtain $op'$ 
by calling $\textsc{xForm}(op, \Local)$ (see below),
and applies $op'$ locally.

{\bf OTs in \jupiter{}.}
In the procedure $\textsc{xForm}(op: \Op, d: \LR = \set{\Local, \Remote})$ of \jupiter{},
the operation sequence with which $op$ transforms is determined by an extra parameter $d$.
Specifically, it first locates the vertex $u$ whose $oids$ matches the operation context $op.ctx$ of $op$,
and then iteratively transforms $op$ with an operation sequence along the $d$ dimension from $u$ 
to the final vertex of this 2D state space.


\begin{example}[Illustration of \jupiter{}]   \label{ex:jupiter}

  \begin{figure}[t]
    \begin{sideways}
     \includegraphics[width = 1.00\textwidth]{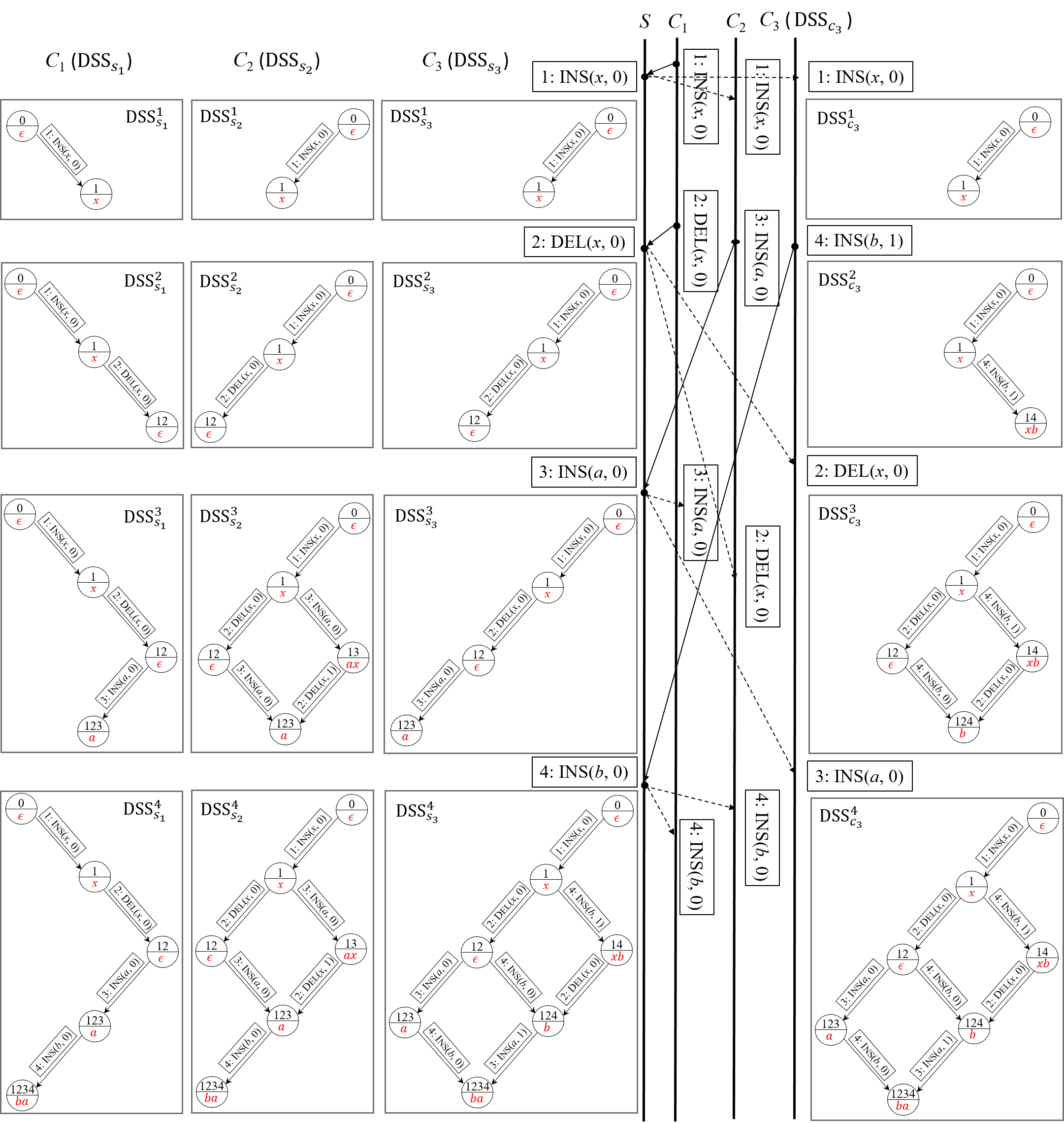}
    \end{sideways}
    \centering
    \caption{(Rotated) illustration of client $c_3$, as well as the server $s$, in \jupiter{}~\cite{Xu:CSCW14}
    	under the schedule of Figure~\ref{fig:jupiter-schedule-podc16}.
	(Please refer to Figure~\ref{fig:cjupiter-illustration} for details of clients $c_1$ and $c_2$.)}
    \label{fig:jupiter-illustration}
  \end{figure}

  Figure~\ref{fig:jupiter-illustration} illustrates client $c_3$, as well as the server $s$, in \jupiter{} 
  under the schedule of Figure~\ref{fig:jupiter-schedule-podc16}.
  The first three state transitions made by client $c_3$ in \jupiter{}
  due to the operation sequence consisting of $o_1$ from client $c_1$, $o_4$ generated by itself,
  and $o_2$ from client $c_1$ are the same with those in \cjupiter{};
  see \cssck{3}{1}, \cssck{3}{2}, and \cssck{3}{3} of Figure~\ref{fig:jupiter-css-podc16-c1423}
  and \cscwck{3}{1}, \cscwck{3}{2}, and \cscwck{3}{3} of Figure~\ref{fig:jupiter-illustration}.

  We now elaborate on the fourth state transition of client $c_3$ in \jupiter{}. 
  First, client $c_2$ propagates its operation $\copinmath{o_3}{o_1} = \lins(a,0)$ to the server $s$.
  At the server, $\copinmath{o_3}{o_1}$ is transformed with $\copinmath{o_2}{o_1} = \ldel(x,0)$ in \cscwsk{2}{3}, 
  obtaining $\copinmath{o_3}{o_1, o_2} = \lins(a,0)$.
  In addition to being stored in \cscwsk{1}{3} and \cscwsk{3}{3}, 
  the transformed operation $\copinmath{o_3}{o_1, o_2}$ is then redirected by the server to clients $c_1$ and $c_3$.
  At client $c_3$, the operation context of $\copinmath{o_3}{o_1, o_2}$ (i.e., $\set{o_1, o_2}$)
  matches the $oids$ of $v_{12}$ ($u$) in \cscwck{3}{4}.
  By \textsc{xForm}, $\copinmath{o_3}{o_1, o_2}$ ($op$) is transformed with $\copinmath{o_4}{o_1, o_2}$ ($op'$),
  yielding $v_{1234}$ and $\copinmath{o_3}{o_1,o_2,o_4}$.
  Finally, client $c_3$ applies $\copinmath{o_3}{o_1, o_2, o_4}$, obtaining the list content $ba$.

  We highlight three differences between \cjupiter{} and \jupiter{},
  by comparing the behaviors of client $c_3$ in this example and Example~\ref{ex:cjupiter}.
  First, the fourth operation the server $s$ redirects to client $c_3$ 
  is the transformed operation $\copinmath{o_3}{o_1, o_2} = \lins(a,0)$,
  instead of the original one 
  $\copinmath{o_3}{o_1} = \lins(a,0)$~\footnote{Although they happen to have the same signature $\lins(a,0)$, 
    they have different operation contexts.} generated by client $c_2$.
  Second, each vertex in the $n$-ary ordered state space of \cjupiter{} 
  (such as \cssck{3}{4} of Figure~\ref{fig:jupiter-css-podc16-c1423}) is not restricted to have only two child vertices,
  while \jupiter{} does.
  Third, because the transformed operations are propagated by the server,
  \jupiter{} is slightly optimized in implementation \emph{at clients} by eliminating redundant OTs.
  For example, in \cssck{3}{4} of Figure~\ref{fig:jupiter-css-podc16-c1423},
  the original operation $\copinmath{o_3}{o_1}$ of client $c_2$ redirected by the server 
  should be first transformed with $\copinmath{o_2}{o_1}$ to obtain $\copinmath{o_3}{o_1, o_2}$.
  In \jupiter, however, such a transformation which has been done at the server (i.e., in \cscwsk{2}{3})
  is not necessary at client $c_3$ (i.e., in \cscwck{3}{4}).
\end{example}



\subsection{The Servers Established Equivalent} \label{ss:server-equiv}

As shown in~\cite{Xu:CSCW14} (see the ``Jupiter'' section and Definition~8 of~\cite{Xu:CSCW14}),
the operation sequence with which an incoming operation transforms \emph{at the server}
in \textsc{xForm} of \jupiter{} can be characterized exactly as in \textsc{xForm} of \cjupiter{}
(Lemma~\ref{lemma:cjupiter-ot-server}).
By mathematical induction on the operation sequence the server processes,
we can prove that the state spaces of \jupiter{} and \cjupiter{} at the server are essentially the same.
Formally, the $n$-ary ordered state space \csss{} of \cjupiter{}
equals the union~\footnote{The union is taken on state spaces 
which are (directed) graphs as sets of vertices and edges.
The order of edges of $n$-ary ordered state spaces should be respected when \cscws{i}'s are unioned to obtain \csss{}.}
of all 2D state spaces \cscws{i}
maintained at the server for each client $c_i$ in \jupiter{}.
For example, \csss{} of Figure~\ref{fig:cjupiter-css-podc16-allinone} is the union of
the three \cscws{i}'s of Figure~\ref{fig:jupiter-illustration}.
More specifically, we have
\begin{prop}[$n \leftrightarrow 1$]  \label{prop:server-equiv}
  Suppose that under the same schedule, the server has processed a sequence of $m$ operations,
  denoted $O = \seq{op_1, op_2, \ldots, op_m}$ ($op_i \in \emph{\Op}$), in total order `$\prec_{s}$'.
  We have that
  \begin{equation*} \label{eq:server-equiv}
    \cssskinmath{k} = \bigcup_{i=1}^{i=k} \cscwskinmath{c(op_{i})}{i} =
      \bigcup_{c_i \in c(O)} \bigcup_{j=1}^{j=k} \cscwskinmath{c_i}{j}, \;\; 1 \leq k \leq m, \tag{$\ast$}
  \end{equation*}
  where $\text{c}(op_i)$ denotes the client that generates the operation $op_i$ (more specifically, $op_i.o$)
  and $c(O) = \set{c(op_1), c(op_2), \ldots, c(op_m)}$.
\end{prop}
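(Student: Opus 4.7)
The plan is to prove the equality $(\ast)$ by induction on $k$, the number of operations processed by the server under the shared schedule. The base case $k=0$ is trivial: both sides reduce to the graph containing just the initial vertex $(\emptyset, \emptyset)$. For the inductive step, I would assume $\cssskinmath{k} = \bigcup_{i=1}^{k} \cscwskinmath{c(op_i)}{i}$ and argue that the incremental structure added when the server processes $op_{k+1}$ is the same on both sides.

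Let $op_{k+1}$ arrive from client $c^{\ast} = c(op_{k+1})$. First, I would observe that both protocols begin by locating the vertex $u$ with $u.oids = op_{k+1}.ctx$; since vertices in either structure are uniquely identified by their $oids$, the inductive hypothesis identifies this $u$ on both sides. Next, by Lemma~\ref{lemma:cjupiter-ot-server}, the OT sequence applied to $op_{k+1}$ at the server in \cjupiter{} is precisely the list of operations totally ordered before $op_{k+1}$ by `$\sorder$' that are concurrent with $op_{k+1}$, enumerated in `$\sorder$' order. The excerpt notes (citing the \jupiter{} section of~\cite{Xu:CSCW14}) that the OT sequence in \jupiter{} at the server, walked along the global dimension of $\cscwskinmath{c^{\ast}}{k}$, has exactly the same characterization. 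Hence the two servers perform an identical chain of pairwise OTs on $op_{k+1}$, producing the same sequence of transformed operations and the same intermediate vertices and edges.

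I would then account for where these newly created vertices and edges land. In \cjupiter{}, they are all appended to the single $\cssskinmath{k}$ to form $\cssskinmath{k+1}$. In \jupiter{}, the same chain of vertices and edges populates $\cscwskinmath{c^{\ast}}{k+1}$ along the local/global axes, and additionally the fully transformed operation $op'_{k+1}$ is appended as a new global-dimension edge at the tail of every $\cscwskinmath{c_j}{}$ with $j \neq c^{\ast}$. Under the union, these replicated appendages contribute exactly the ``closing'' edges at the vertices along the OT path that in \cjupiter{} are created within a single \textsc{xForm} call. Together with the inductive hypothesis, the two sides of $(\ast)$ carry the same set of vertices and the same set of labeled edges after step $k+1$.

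The main obstacle I expect is the edge-ordering condition built into the $n$-ary ordered state space (Definition~\ref{def:css-nary}): the union on the right-hand side of $(\ast)$ only yields an unordered collection of outgoing edges at each vertex, whereas $\cssskinmath{k+1}$ requires an explicit total order. I would discharge this by showing, inductively, that the edge at a vertex $u$ labeled with the transformation of some $op_j$ is placed before the one labeled with the transformation of $op_{j'}$ in $\cssskinmath{}$ iff $op_j \sorder op_{j'}$; this ordering is exactly the order in which these edges are created as the union is built up across $j = 1, 2, \ldots, k{+}1$. Verifying the ``first edge'' always corresponds to the $\sorder$-smallest outgoing operation is the same fact already exploited in Lemma~\ref{lemma:cjupiter-first-rule}, so this bookkeeping closes the induction.
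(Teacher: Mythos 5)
Your proposal is correct and follows essentially the same route as the paper's proof: induction on the operation sequence the server processes, noting that the matching vertex of $op_{k+1}$ (determined by $op_{k+1}.ctx$) is the same in both protocols, and invoking Lemma~\ref{lemma:cjupiter-ot-server} together with its \jupiter{} counterpart from~\cite{Xu:CSCW14} to conclude that the incremental OTs, hence the newly created vertices and edges, coincide; your treatment of edge ordering corresponds to the paper's footnote that the order of edges must be respected when the \cscws{i}'s are unioned. One harmless descriptive slip: the appendage of the transformed $op'_{k+1}$ to each $\cscwskinmath{c_j}{}$ with $c_j \neq c(op_{k+1})$ is a single edge duplicating the final closing edge already built inside $\cscwskinmath{c(op_{k+1})}{k+1}$ by \textsc{xForm} (which constructs the whole ladder, intermediate closing edges included), so these appendages contribute nothing new to the union rather than supplying the closing edges along the OT path --- this does not affect the validity of your argument.
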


The equivalence of servers are thus established.
\begin{theorem}[Equivalence of Servers] \label{thm:server-equiv}
  Under the same schedule, the behaviors 
  (i.e., the sequence of (list) state transitions, defined in Section~\ref{ss:model}) 
  of the servers in \emph{\cjupiter{}} and \emph{\jupiter{}} are the same.
\end{theorem}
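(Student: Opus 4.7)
The plan is to proceed by induction on the length $k$ of the operation sequence $O = \seq{op_1, \ldots, op_m}$ the server has processed in total order `$\sorder$'. The base case is trivial since both servers start from the same initial empty list state with empty state spaces. For the inductive step, I will argue that when $op_k$ is processed, both servers apply the \emph{same} OT-transformed operation $op'_k$ to their current list state, which by the inductive hypothesis is already identical in both protocols; the conclusion that the next list states also agree then follows immediately.

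The key observation is that the operation sequence $L$ with which $op_k$ is transformed inside \textsc{xForm} at the server admits the same characterization in both protocols: Lemma~\ref{lemma:cjupiter-ot-server} gives it for \cjupiter{}, and the corresponding characterization for \jupiter{} is the one established in~\cite{Xu:CSCW14} (see its ``Jupiter'' section and Definition~8). In both cases $L$ consists exactly of those operations that are totally-before $op_k$ by `$\sorder$' and concurrent with $op_k$ by `$\parallel$', arranged in `$\sorder$' order. Proposition~\ref{prop:server-equiv} further ensures that the vertex $u$ with $u.oids = op_k.ctx$ and the outgoing path of first edges from $u$ in $\cssskinmath{k-1}$ coincide, under identification of vertices by their $oids$, with the corresponding path along the global dimension of \cscws{c(op_k)} in the \jupiter{} server. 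Since the OT functions are deterministic and satisfy CP1, iteratively transforming $op_k$ along the same sequence $L$ produces the same transformed operation $op'_k$ in both protocols.

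Combining the two ingredients finishes the induction: the list state before processing $op_k$ agrees by hypothesis, and the transformed operation applied at this step agrees by the argument above, so the new list states also agree. The sequences of list-state transitions at the two servers therefore coincide step by step, which is exactly the statement that their behaviors (in the sense of Section~\ref{ss:model}) are equal.

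The principal obstacle, largely discharged by Proposition~\ref{prop:server-equiv} together with Lemma~\ref{lemma:cjupiter-ot-server}, is reconciling the two very different data structures---a single $n$-ary ordered state space versus $n$ separate 2D state spaces at the server---and showing that the OT sequence $L$ is identical \emph{both as a set of operations and in its internal order}. Once this structural equivalence is in place, the induction on list states becomes routine, and in particular no separate argument about the list values produced by individual OTs is needed.
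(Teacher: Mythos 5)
Your proposal matches the paper's own argument: the paper likewise derives the theorem from Proposition~\ref{prop:server-equiv} (proved by induction on the server's operation sequence), using Lemma~\ref{lemma:cjupiter-ot-server} together with its \jupiter{} counterpart from~\cite{Xu:CSCW14} to conclude that the transformed operation applied at each step is identical in both protocols, hence so are the list-state transitions. The only cosmetic difference is your mention of CP1, which is not actually needed here (determinism of the OT functions on an identical sequence $L$ suffices).
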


\subsection{The Clients Established Equivalent} \label{ss:client-equiv}

As discussed in Example~\ref{ex:jupiter},
\jupiter{} is slightly optimized in implementation \emph{at clients} by eliminating redundant OTs.
Formally, by mathematical induction on the operation sequence client $c_i$ processes,
we can prove that \cscwck{i}{k} of \jupiter{} is a part (i.e., subgraph) of \cssck{i}{k} of \cjupiter{}.
The equivalence of clients follows since the final transformed operations (for an original one)
executed at $c_i$ in \jupiter{} and \cjupiter{} are the same, 
regardless of the optimization adopted by \jupiter{} at clients.

\begin{prop}[$1 \leftrightarrow 1$]  \label{prop:client-equiv}
  Under the same schedule, we have that
  \begin{equation*}  \label{eq:client-equiv}
    \cscwckinmath{i}{k} \subseteq \cssckinmath{i}{k}, \quad 1 \leq i \leq n, \,k \geq 1. \tag{$\star$}
  \end{equation*}
\end{prop}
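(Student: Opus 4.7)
I would prove Proposition~\ref{prop:client-equiv} by induction on $k$, the number of events client $c_i$ has processed. The base case $k=0$ is immediate since both $\cscwckinmath{i}{0}$ and $\cssckinmath{i}{0}$ consist of the single vertex $(\emptyset,\emptyset)$. For the inductive step, I would split on whether the $(k+1)$-th event at $c_i$ is a local generation or a remote delivery.

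For a \emph{local generation}, both protocols create an operation $op$ whose $oid$ and $ctx$ are read off from the current vertex $S.cur$. The inductive hypothesis, combined with the fact that vertices are identified by their $oids$ (Definition~\ref{def:css-nary}), ensures that the final vertex of $\cscwckinmath{i}{k}$ and the final vertex of $\cssckinmath{i}{k}$ agree on $oids$; hence both protocols attach a single outgoing edge labeled $op$ to a fresh vertex $v$ with $v.oids = S.cur.oids \cup \{op.oid\}$, and the edge and vertex added to $\cscwckinmath{i}{k+1}$ are likewise present in $\cssckinmath{i}{k+1}$.

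The substantive case is a \emph{remote delivery}, because CJupiter delivers the original operation $op$ while Jupiter delivers the server-transformed $op'$. By Lemma~\ref{lemma:cjupiter-ot-server} applied at the server, $op'$ is obtained from $op$ by transforming against a sequence $A$ of operations that are $\prec_s$-before $op$ and concurrent with $op$; in particular $op'.ctx = op.ctx \cup \{q.oid : q \in A\}$, and no operation in $A$ is generated by $c_i$. The plan is to show that in $\cssckinmath{i}{k}$, walking from the vertex $u$ with $u.oids = op.ctx$ along first edges, the first $|A|$ steps lead to a ``bridge'' vertex $u'$ with $u'.oids = op'.ctx$---precisely the vertex where Jupiter's \textsc{xForm} begins its local-dimension walk. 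Beyond $u'$, the first-edge path in CJupiter consists of $c_i$'s locally-generated operations in the same order as Jupiter's local dimension, so CP1 applied iteratively along the $A$-prefix gives that the operation reaching $u'$ in CJupiter equals $op'$, and the subsequent transformations and newly created vertices and edges then coincide with those produced by Jupiter's \textsc{xForm} at $c_i$. Combining with the inductive hypothesis closes the induction.

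The main obstacle will be formalizing the sibling-edge order at the client precisely enough to pin down the bridge vertex $u'$: namely, that remote-received operations (which carry a richer $sctx$, as illustrated in Example~\ref{ex:cjupiter}) precede locally-generated ones in the ordering of edges out of $u$, and that FIFO delivery guarantees $c_i$ has already processed every operation in $A$ by the time $op$ arrives. Establishing this is essentially a client-side counterpart of Lemma~\ref{lemma:cjupiter-first-rule}, and its careful statement and verification are likely to carry most of the technical weight of the proof.
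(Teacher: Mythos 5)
Your induction and the local/remote case split match the paper's, and the local-generation case is handled the same way. Where you diverge is the remote-delivery case: the paper closes it in a few lines by combining Lemma~\ref{lemma:cjupiter-ot-server} (and its \jupiter{} counterpart) with Proposition~\ref{prop:css-server-client} --- since the client's \cjupiter{} state space already contains (a copy of) the server's state space over the operations $\prec_{s}$-before the incoming $op$ (FIFO guarantees they have arrived), the OTs the server performed when producing $op'$ are replayed verbatim at the client, and \jupiter{}'s client-side OTs from your ``bridge'' vertex are then a subset of these. You instead propose to rebuild this structure from scratch via a client-side analogue of Lemma~\ref{lemma:cjupiter-first-rule}, which you correctly identify as the technical crux but leave unproven; that is essentially re-deriving at the client what Proposition~\ref{prop:css-server-client} (whose own inductive proof, case $i=j$, is where the paper hides the heavy lifting) already gives you for free. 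The cleaner move is to invoke that proposition rather than prove a new first-edge lemma.

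Two of your supporting claims are also wrong as stated and would need repair if you pursue the direct route. First, the sequence $A$ may well contain operations generated by $c_i$ itself: $A$ consists of all operations $\prec_{s}$-before and concurrent with $op$, i.e., from any client other than $op$'s generator, including $c_i$ (e.g., an operation of $c_i$ that the server serialized before $op$). Hence the ordering invariant you flag as the main obstacle is not ``remote-received operations precede locally-generated ones''; the correct statement is that operations already serialized at the server precede $c_i$'s not-yet-serialized local operations, with the former ordered by $\prec_{s}$ --- this is exactly what the $sctx$-based \textsc{Compare} of Algorithm~\ref{alg:css-op} enforces, and it is what a client-side counterpart of Lemma~\ref{lemma:cjupiter-first-rule} would have to say. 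Second, CP1 only gives equality of \emph{states} along the two sides of an OT square; it does not by itself show that the operation reaching the bridge vertex $u'$ in \cjupiter{} is literally $op'$. That identity requires knowing that the first-edge segment from $u$ to $u'$ at the client carries the same (transformed) operations the server transformed $op$ against --- again precisely the content of Proposition~\ref{prop:css-server-client} restricted to that subgraph. With these two points fixed (or bypassed by citing the proposition), your argument goes through and yields the same conclusion as the paper, including the observation needed for Theorem~\ref{thm:client-equiv} that the final transformed operation applied at $c_i$ is identical in both protocols.
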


\begin{theorem}[Equivalence of Clients]  \label{thm:client-equiv}
  Under the same schedule, the behaviors (Section~\ref{ss:model}) of each pair of corresponding clients
  in \emph{\cjupiter{}} and \emph{\jupiter{}} are the same.
\end{theorem}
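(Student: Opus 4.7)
The plan is to reduce the theorem to a step-by-step comparison of the list states visited by corresponding clients. A client's behavior, as defined in Section~\ref{ss:model}, is determined by the sequence of list values produced by its local $\doe$ events; $\send$ and $\rcv$ events do not change the list state. User-generated $\doe$ events in both protocols apply the user's original operation directly, so those transitions coincide. Hence it suffices to show that for every remote $\doe$ event at client $c_i$, the transformed operation that is applied to the list is the same in \jupiter{} and \cjupiter{}.

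I would proceed by induction on $k$, the number of events $c_i$ has processed. The base case is immediate since both clients start from the empty list and the initial vertex $(\emptyset,\emptyset)$. For the inductive step, fix a remote event at $c_i$ whose underlying user operation $op$ was generated at another client $c_j$ and routed through the server. By Theorem~\ref{thm:server-equiv} together with Lemma~\ref{lemma:cjupiter-ot-server}, the \jupiter{} and \cjupiter{} servers transform $op$ with exactly the same operation sequence; call this sequence $L_s$ and write $op_s$ for the operation that \jupiter{}'s server forwards to $c_i$. In \jupiter{}, $c_i$ then transforms $op_s$ along the local dimension of \cscwck{i}{k} using a further sequence $L_c$, producing the operation $op_c$ that is applied locally. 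In \cjupiter{}, $c_i$ receives the original $op$ and \textsc{xForm} traces the first-edge path in \cssck{i}{k} starting from the vertex $u$ with $u.oids = op.ctx$. My goal is to show that this first-edge path is labelled exactly by the concatenation $L_s \cdot L_c$, so that iterated applications of CP1 force the operation obtained by \cjupiter{} to equal $op_c$.

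The main obstacle will be establishing this decomposition of the first-edge path. By Proposition~\ref{prop:client-equiv} the subgraph \cscwck{i}{k} sits inside \cssck{i}{k}, so the tail of the first-edge path starting from some intermediate vertex $u'$ can be identified with the local-dimension path that \jupiter{} follows, yielding the sequence $L_c$. The more delicate task is identifying the initial segment from $u$ to $u'$: using Lemma~\ref{lemma:cjupiter-first-rule} applied to \cssck{i}{k}, together with the containment of Proposition~\ref{prop:client-equiv}, I would argue that this segment is labelled exactly by the operations of $L_s$ in the order `$\prec_s$'. Verifying that these extra first edges have been inserted into \cssck{i}{k} precisely when $c_i$ processed the prior remote operations that also triggered server-side OTs is the point at which the induction hypothesis must be strengthened to a structural statement relating the full first-edge paths in the two state spaces. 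Once the decomposition is in hand, iterated use of CP1 along the path equates the \jupiter{} and \cjupiter{} transformed operations, the resulting list states match, and the induction closes.
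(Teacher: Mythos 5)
Your high-level reduction is the same as the paper's: local $\doe$ events trivially coincide, so everything hinges on showing that for each remote operation the final transformed operation applied at client $c_i$ is identical in \jupiter{} and \cjupiter{}, and the intended mechanism (the \cjupiter{} client first replays the server-side transformations of $op$ with $L_s$ and then the \jupiter{} client-side transformations with $L_c$) is exactly the paper's intuition. The problem is that you leave the crux --- the claim that the first-edge path from the vertex $u$ with $u.oids = op.ctx$ in \cssck{i}{k} is labelled precisely by $L_s \cdot L_c$, with the \emph{same} transformed labels that the server and the \jupiter{} client use --- as an acknowledged ``main obstacle'' to be closed by a strengthened induction hypothesis that you never formulate. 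The tools you do cite are not enough as stated: Lemma~\ref{lemma:cjupiter-first-rule} is proved only for the server's space \csss{} (by induction on the server's processing order), and \cssck{i}{k} is built in a different order and contains pending local operations of $c_i$, so the lemma cannot simply be ``applied to \cssck{i}{k}''; likewise the containment of Proposition~\ref{prop:client-equiv} ($\cscwckinmath{i}{k} \subseteq \cssckinmath{i}{k}$) says nothing about which of several outgoing edges of a vertex in the \emph{larger} space is first, so it does not identify the tail of the first-edge path with \jupiter{}'s local-dimension path. The missing ingredient that the paper uses to close exactly this step is Proposition~\ref{prop:css-server-client} together with FIFO: when $c_i$ receives $op$, it has already processed every operation totally ordered by $\prec_s$ before $op$, so (by the compactness proposition, whose proof contains the structural ``three-stage'' analysis you are gesturing at) the relevant portion of \cssck{i}{k} coincides with the server's state space at the moment the server processed $op$, and hence the server's OTs on $op$ (characterized by Lemma~\ref{lemma:cjupiter-ot-server} and its \jupiter{} counterpart, as in Theorem~\ref{thm:server-equiv}) are replayed verbatim at the \cjupiter{} client, with the \jupiter{} client performing exactly the remaining subset.

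A secondary point: even once the decomposition is in hand, CP1 is not what forces the \cjupiter{} operation to equal $op_c$. CP1 asserts equality of \emph{states} after applying differently ordered transformed operations; the equality of the final transformed \emph{operation} follows simply from the determinism of the OT function applied to identical operand sequences (first the server's labels for $L_s$, then \jupiter{}'s labels for $L_c$). So the last step of your argument should appeal to OT determinism plus the label-matching you still owe, not to CP1.
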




\section{\cjupiter{} Satisfies the Weak List Specification}  \label{section:cjupiter-weak-spec}

The following theorem, together with Theorem~\ref{thm:equiv},
solves the conjecture of Attiya et al.~\cite{Attiya:PODC16}.

\begin{theorem}[$\cjupiter{} \models \wlspec{}$]   \label{thm:wl-spec}
  \emph{\cjupiter{}} satisfies the weak list specification $\wlspec{}$.
\end{theorem}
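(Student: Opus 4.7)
The plan is to exploit the compactness of \cjupiter{} established by Proposition~\ref{prop:css-server-client}: the entire system behavior is captured by a single $n$-ary ordered state space $S$, and every list state ever produced appears as the content $w_v$ of some vertex $v\in S$. Hence the witness list order $\lo{}$ required by $\wlspec{}$ can be defined once, globally on $S$, rather than being reconciled across dispersed replica views. Given a concrete execution $\alpha$ of \cjupiter{}, I would form the abstract execution $A = (H, \vis{})$ whose $H$ is the sequence of \doe{} events of $\alpha$ and whose $\vis{}$ is the causally-before relation restricted to \doe{} events; by the operation-context discipline of Section~\ref{ss:cjupiter}, the \instype{}/\deltype{} events visible to any \doe{} event performed at vertex $v$ are exactly those recorded in $v.oids$.

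I would then set $\lorel{a}{b}$ iff there is a vertex $v\in S$ whose content $w_v$ has $a$ before $b$. The crux is showing this is \emph{well-defined}, i.e.\ that no vertex $v'$ has $b$ before $a$. I would induct on the construction order of $S$ and maintain two invariants each time a new vertex is added. First, an \emph{edge-preservation} invariant: any single \instype{} or \deltype{} applied to $w_u$ preserves the relative order of every pair surviving to the child, which is immediate from the list-edit semantics. Second, a \emph{diamond-closure} invariant: whenever Definition~\ref{def:css-nary}(2) manufactures a vertex $v'$ reachable both by $u\to u'\to v'$ and by $u\to v\to v'$, the two paths assign the same content $w_{v'}$---this is exactly CP1 instantiated on the OT functions of Figure~\ref{fig:list-ot}. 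Iterating these two, the relative order of a pair $(a,b)$ at any vertex containing both is already fixed at the earliest such vertex (immediately after the later of the two \instype{} events), so $\lo{}$ is coherent. The remaining obligations of Definition~\ref{def:wl-spec} are then routine: condition 1(a) because $v.oids$ tracks precisely the visible \instype{}/\deltype{} events and $w_v$ filters out those whose \deltype{} is present; 1(b) by the very definition of $\lo{}$; 1(c) because \cjupiter{} applies each transformed operation locally and the OT functions preserve the positional semantics of \instype{}; and condition 2 because $\lo{}$ restricted to a given $w$ is exactly the strict prefix order on the sequence $w$, which is irreflexive, transitive, and total.

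The main obstacle is the edge-preservation invariant in the presence of the ``first-edge'' OT regime of \cjupiter{}. While a single user \instype{}/\deltype{} plainly preserves the relative order of surviving elements, \cjupiter{} also introduces, via repeated \textsc{xForm} calls, many derived edges whose labels are transformed operations and whose effect on the list must be argued through the OT rules rather than directly. The induction must therefore lean on CP1 (and, to control OT sequences at the server, Lemma~\ref{lemma:cjupiter-ot-server}, propagated to clients via Proposition~\ref{prop:css-server-client}) to collapse each such derived edge to an equivalent direct edge, so that edge-preservation---transparent at the semantic level---suffices to conclude global order-consistency across the whole of $S$, and hence that the so-constructed $A$ lies in $\wlspec{}$.
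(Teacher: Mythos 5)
Your overall skeleton matches the paper's: construct $A=(H,\vis{})$ with $\vis{}$ the causally-before relation, dispatch conditions 1(a), 1(c), define $\lorel{a}{b}$ by ``$a$ precedes $b$ in some returned list,'' and reduce everything to showing that no two list states in the single state space order a common pair oppositely (pairwise compatibility, which the paper isolates as Lemma~\ref{lemma:irreflexivity}). The gap is in how you discharge that reduction. Your two invariants (edge-preservation along a single edge, and CP1 diamond-closure) only constrain vertices connected by an edge or sitting at the bottom of one OT diamond; they say nothing directly about two \emph{incomparable} vertices $v_1,v_2$ that both contain $a$ and $b$. Your bridging claim --- that the relative order of $(a,b)$ ``is already fixed at the earliest such vertex (immediately after the later of the two \instype{} events)'' --- presupposes a unique earliest vertex containing both elements, and that is false in general: the later insert, say of $b$, reappears as many \emph{transformed} copies (same $oid$, different positions) labeling edges on different branches of \csss{}, so there are typically several minimal vertices containing $\set{a,b}$, one per branch, and the whole question is why these branches place $b$ on the same side of $a$. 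CP1 cannot answer this: it only says a single diamond from a \emph{common} initial state commutes; it does not ``collapse a derived edge to an equivalent direct edge'' in any sense that relates the placement of a transformed $\ins{b}{\cdot}$ on one branch to the content of a vertex on another branch.

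This cross-branch consistency is exactly where the paper's machinery does its work, and none of it is recoverable from your two invariants: Lemma~\ref{lemma:lca} (unique LCA of any two vertices, itself a nontrivial induction), Lemma~\ref{lemma:disjoint-paths} (the operation sets on \pathplain{v_0}{v_1} and \pathplain{v_0}{v_2} from the LCA are disjoint --- proved by contradiction via the ``ladder'' structure of transformed copies of a single operation), and Lemma~\ref{lemma:compatible-paths} (a double induction whose one-step claim requires that the operation applied on one path does not correspond to an element already present in the vertex on the other path). That last hypothesis is precisely what rules out the dangerous case your argument leaves open --- a transformed insert of an element $x$ into one branch while $x$ already sits in a vertex of the other branch --- and it is supplied by path disjointness from the LCA together with Lemma~\ref{lemma:simple-path}, not by CP1. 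As written, your proposal asserts the conclusion of Theorem~\ref{thm:pcp} at the point where the real proof begins, so the argument is incomplete.
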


\begin{proof}
  For each execution $\alpha$ of \cjupiter{},
  we construct an abstract execution $A = (H, \vis{})$ with $\vis{} = \;\hbrel{\alpha}{}{}$
  (Section~\ref{ss:model}).
  We then prove the conditions of $\wlspec{}$ (Definition~\ref{def:wl-spec}) in the order
  1(c), 1(a), 1(b), and 2.
  
  Condition 1(c) follows from the local processing of \cjupiter{}.
  Condition 1(a) holds due to the FIFO communication 
  and the property of OTs that when transformed in \cjupiter{}, 
  the type and effect of an $\ins{a}{p}$ (resp. a $\del{a}{p}$)
  remains unchanged (with a trivial exception of being transformed to be \Nop{}),
  namely to insert (resp. delete) the element $a$ (possibly at a different position than $p$).
  
  To show that $A = (H, \vis{})$ belongs to $\wlspec{}$,
  we define the list order relation \lo{} in Definition~\ref{def:lo} below,
  and then prove that \lo{} satisfies conditions 1(b) and 2 of Definition~\ref{def:wl-spec}.
\end{proof}

\begin{definition}[List Order `$\lo{}$']  \label{def:lo}
  Let $\alpha$ be an execution.
  For $a, b \in \elems{A}$, $\lorel{a}{b}$ if and only if there exists an event $e \in \alpha$
  with returned list $w$ such that $a$ precedes $b$ in $w$.
\end{definition}

By definition,
\itshape 1) \upshape
  \lo{} is \emph{transitive} and \emph{total}
    on $\set{a \mid a \in w}$ for all events $e = \doe{}(o, w) \in H$; \emph{and}
\itshape 2) \upshape
  \lo{} satisfies 1(b) of Definition~\ref{def:wl-spec}.
The \emph{irreflexivity} of \lo{} can be rephrased 
in terms of the pairwise state compatibility property.

\begin{definition}[State Compatibility]  \label{def:state-compat}
  Two list states $w_1$ and $w_2$ are \emph{compatible},
  if and only if for any two common elements $a$ and $b$ of $w_1$ and $w_2$,
  their relative orderings are the same in $w_1$ and $w_2$.
\end{definition}

\begin{lemma}[Irreflexivity]  \label{lemma:irreflexivity}
  Let $\alpha$ be an execution and $A = (H, \vis{})$ the abstract execution 
  constructed from $\alpha$ as described in the proof of Theorem~\ref{thm:wl-spec}.
  The list order \emph{\lo{}} based on $\alpha$ is irreflexive
  if and only if the list states (i.e., returned lists) in $A$ are pairwise compatible.
\end{lemma}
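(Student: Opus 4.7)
My plan is to prove the two directions of the equivalence separately, working from Definition~\ref{def:lo} of the list order $\lo{}$ throughout, and leaning on the other clauses of the weak list specification (Definition~\ref{def:wl-spec}) when needed.

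First I would dispatch the backward direction. Assuming the list states of $A$ are pairwise compatible, I would observe that because all inserted elements are unique (Section~\ref{ss:list-spec}), no single returned list $w$ can contain the same element twice; by Definition~\ref{def:lo} this already rules out $\lorel{a}{a}$ for any $a \in \elems{A}$. Along the way, as a consistency check, I would verify that pairwise compatibility is exactly what makes the remaining clauses of the weak list specification (per-state transitivity on $\set{c \mid c \in w}$, per-state totality, and condition 1(b)) hold for the defined $\lo{}$: given $(a,b), (b,c) \in \lo{}$ with $a,b,c \in w$, compatibility of $w$ with the states that witness $(a,b)$ and $(b,c)$ forces $a$ before $b$ before $c$ in $w$, whence $(a,c) \in \lo{}$ by applying Definition~\ref{def:lo} to $w$ itself.

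For the forward direction I would argue by contrapositive. Suppose some two list states $w_1, w_2$ in $A$ are incompatible; by Definition~\ref{def:state-compat} there exist common elements $a, b$ with $a$ preceding $b$ in $w_1$ and $b$ preceding $a$ in $w_2$. Definition~\ref{def:lo} then immediately puts both $(a,b)$ and $(b,a)$ into $\lo{}$. Since $\set{a,b}$ is a subset of $\set{c \mid c \in w_1}$, applying the per-state transitivity of $\lo{}$ on elements of $w_1$ (required by condition 2 of Definition~\ref{def:wl-spec}) yields $(a,a) \in \lo{}$, contradicting irreflexivity.

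The main obstacle is conceptual rather than computational. Read in isolation, irreflexivity together with uniqueness of inserted elements makes the backward direction nearly vacuous, so the real bite of the lemma lies in how irreflexivity must be read in concert with the per-state transitivity clause of the weak list specification. The delicate step is choosing the right state on which to instantiate transitivity in the forward direction; picking $w_1$ precisely because it contains both witnesses $a$ and $b$ is what turns an otherwise trivial statement into the intended equivalence with pairwise compatibility, and it is this choice I would flag explicitly in the write-up.
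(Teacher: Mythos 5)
Your forward direction is exactly the paper's argument: from two incompatible states $w_1, w_2$ with common elements $a, b$ you place both $(a,b)$ and $(b,a)$ in $\lo{}$ via Definition~\ref{def:lo}, and then invoke transitivity of $\lo{}$ on the elements of $w_1$ to obtain $\lorel{a}{a}$, contradicting irreflexivity. Your reading of the backward direction through the per-state transitivity clause is also in line with the paper, whose ``if'' proof likewise treats a reflexive pair as something that could only arise from two states witnessing $\lorel{a}{b}$ and $\lorel{b}{a}$ and uses pairwise compatibility to exclude it; your ``consistency check'' paragraph is where that use of compatibility actually lives in your write-up, and it is argued correctly.

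The one genuine gap is the premise on which your backward direction rests: that no single returned list contains the same element twice. You justify this by the uniqueness of inserted elements (Section~\ref{ss:list-spec}), but that assumption only constrains what users insert; it does not by itself rule out the \emph{protocol} applying the same insertion twice at a replica --- for instance an operation and one of its transformed copies both being applied along the same replica's sequence of transitions --- which would duplicate an element in a returned list even though every user-inserted element is unique. This is a property of \cjupiter{}'s behavior, and it is exactly what the paper supplies at this point via Lemma~\ref{lemma:simple-path}: along any path in the $n$-ary ordered state space no operation (identified by its $oid$) occurs twice, hence no list state carries a duplicated element. Replace your appeal to element uniqueness by an appeal to (or a proof of) this protocol-level fact and the rest of your argument goes through and coincides with the paper's proof.
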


The proof relies on the following lemma about paths in $n$-ary ordered state spaces.
\begin{lemma}[Simple Path]  \label{lemma:simple-path}
  Let $P_{v_1 \leadsto v_2}$ be a path 
  from vertex $v_1$ to vertex $v_2$ in an $n$-ary ordered state space.
  Then, there are no duplicate operations (in terms of their $oid$s)
  along the path $P_{v_1 \leadsto v_2}$.
  We call such a path a simple path.
\end{lemma}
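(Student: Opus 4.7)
The plan is to argue that the $oids$ component of vertices grows strictly and monotonically along any path in an $n$-ary ordered state space, so that once an operation's $oid$ has been absorbed into some vertex's $oids$, no later edge on the same path can be labeled by an operation with that same $oid$.

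First, I would establish an auxiliary invariant: every operation $op$ that appears as an edge label in any $n$-ary ordered state space constructed by \cjupiter{} satisfies $op.oid \notin op.ctx$. The base case follows from the local processing part, where a client creates an operation $op$ whose context is set to the current $S.cur.oids$ before $op.oid$ is added to any vertex. For the inductive step, I would inspect the OT rule $\opot{op}{op'} = (o, op.oid, op.ctx \cup \set{op'.oid}, op.sctx)$ used in \textsc{xForm}: it preserves $op.oid$ and only augments the context with the identifier of a different operation (since $oid$s are globally unique), so the invariant is maintained by every transformed operation produced during the construction of $S$.

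Next, I would combine this invariant with the two structural requirements on edges in Section~\ref{ss:css-nary}: for any edge $e = (op, u)$ from $v$ to $u$, $op.ctx = v.oids$ and $u.oids = v.oids \cup \set{op.oid}$. Together these imply $op.oid \notin v.oids$ and hence $|u.oids| = |v.oids| + 1$. A straightforward induction on the length of $\pathinmath{v_1}{v_2}$ then shows that the $oids$ sets of the vertices visited along the path form a strictly increasing chain under set inclusion, and once an $oid$ enters this chain it remains in every subsequent vertex's $oids$.

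Finally, the lemma itself follows by contradiction. Suppose two distinct edges of $\pathinmath{v_1}{v_2}$ carry operations $op^{(1)}$ and $op^{(2)}$ with $op^{(1)}.oid = op^{(2)}.oid$, with $op^{(1)}$ occurring strictly earlier. Let $w_2$ be the source vertex of the later edge. By monotonicity, $op^{(1)}.oid$ lies in $w_2.oids$, while the edge constraint at $w_2$ forces $op^{(2)}.ctx = w_2.oids$, and the auxiliary invariant gives $op^{(2)}.oid \notin op^{(2)}.ctx$, a contradiction. The main obstacle is not the combinatorial argument but the bookkeeping behind the invariant $op.oid \notin op.ctx$: one must trace every place in \textsc{xForm} where new operations and vertices are created, and check that both the original operations propagated by the server and the locally produced transformed variants respect this invariant; once that is settled, the monotonicity argument wraps things up cleanly.
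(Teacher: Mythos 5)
Your proof is correct, but it takes a genuinely different route from the paper's. The paper's argument is a one-line structural observation: every edge carrying a given operation (identified by its $oid$) arises from the OT ``ladder'' construction of \textsc{xForm} (Figure~\ref{fig:xform-ot}), so all such edges are ``parallel'' translates of one another in the state space and therefore cannot lie on a common directed path. You instead work purely with the bookkeeping invariants of the data structure: from $op.ctx = v.oids$ and $u.oids = v.oids \cup \set{op.oid}$ on every edge, plus the auxiliary invariant $op.oid \notin op.ctx$ (proved by induction over the \cjupiter{} construction), you get that vertex $oids$ sets grow strictly and monotonically along a path, and a repeated $oid$ would force the later edge's operation to contain its own $oid$ in its context --- a contradiction. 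Your route is more elementary and more rigorous than the paper's informal appeal to ``parallelism,'' at the price of tracking an invariant through the protocol; the paper's route is shorter but leans on an unformalized geometric picture of the OT ladders. One point where you should tighten your bookkeeping: when you argue that each OT step $\opot{op}{op'}$ only adds the identifier of a \emph{different} operation, global uniqueness of $oid$s alone does not suffice --- you also need that the incoming operation $op$ is not already present in the replica's state space (i.e., each replica processes every operation exactly once, which follows from fresh $oid$ generation and the FIFO, send-once communication pattern); this is precisely the trace-through-\textsc{xForm} check you flag at the end, so it is a matter of making that step explicit rather than a missing idea.
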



Therefore, it remains to prove that all list states in an execution of \cjupiter{} are pairwise compatible,
which concludes the proof of Theorem~\ref{thm:wl-spec}.
By Proposition~\ref{prop:css-server-client},
we can focus on the state space \csss{} at the server.
We first prove several properties about vertex pairs and paths of \csss{},
which serve as building blocks for the proof of the main result (Theorem~\ref{thm:pcp}).

By mathematical induction on the operation sequence processed
in the total order $\prec_{s}$ at the server and by contradiction (in the inductive step),
we can show that

\begin{lemma}[LCA]   \label{lemma:lca}
  In \emph{\cjupiter{}}, each pair of vertices in the $n$-ary ordered state space \csss{}
  (as a rooted directed acyclic graph)
  has a unique LCA (Lowest Common Ancestor).~\footnote{The LCAs of
    two vertices $v_1$ and $v_2$ in a rooted directed acyclic graph
  is a set of vertices $V$ such that
  \itshape 1) \upshape Each vertice in $V$ has both $v_1$ and $v_2$ as descendants;
  \itshape 2) \upshape In $V$, no vertice is an ancestor of another.
  The uniqueness further requires $|V| = 1$.}
\end{lemma}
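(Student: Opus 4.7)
The plan is to argue by induction on the number of operations the server has processed in the total order $\sorder$, tracking how the vertex set and edge set of \csss{} evolve under \textsc{xForm}. Two structural invariants should drive the argument. First, every edge $w \to w'$ of \csss{} satisfies $w'.oids = w.oids \cup \set{\omega}$, where $\omega$ is the $oid$ of the edge's label, so that if $w$ is an ancestor of $w'$ then $w.oids \subseteq w'.oids$; in particular a vertex whose $oids$ contains a given $op.oid$ can never reach one missing it. Second, by Lemma~\ref{lemma:cjupiter-first-rule}, the first-edge path leaving any vertex is a single chain, so when a fresh operation $op$ is processed at the server, the vertices freshly created by \textsc{xForm} can be enumerated as $v^{(0)}, v^{(1)}, \ldots, v^{(m)}$, sitting on top of the old chain $u^{(0)} \to u^{(1)} \to \cdots \to u^{(m)}$ of first-edge vertices with $u^{(0)}.oids = op.ctx$ and $v^{(i)}.oids = u^{(i)}.oids \cup \set{op.oid}$; all newly added edges are either $u^{(i)} \to v^{(i)}$ or $v^{(i)} \to v^{(i+1)}$.

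The base case with no operations processed is immediate since \csss{} contains only $v_0$. In the inductive step I would fix a pair of vertices $w_1, w_2$ of the updated \csss{} and split into three cases. When both are old, none of the newly added edges connects two old vertices, so reachability among old vertices is unchanged and the inductive hypothesis supplies the unique LCA. When both are new, say $w_1 = v^{(j)}$ and $w_2 = v^{(k)}$ with $j \le k$, then $v^{(j)}$ is itself an ancestor of $v^{(k)}$ along the new chain $v^{(j)} \to \cdots \to v^{(k)}$, so the common ancestors of $w_1, w_2$ coincide with the ancestors of $v^{(j)}$, of which $v^{(j)}$ is trivially the unique greatest element.

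The critical case is when $w_1$ is old and $w_2 = v^{(j)}$ is new. Invariant (i) rules out any old vertex containing $op.oid$, and each $v^{(i)}$ has its only outgoing new edge going further into the ladder, so every common ancestor of $w_1$ and $v^{(j)}$ must be old. Now every old ancestor of $v^{(j)}$ must enter the ladder through some edge $u^{(i)} \to v^{(i)}$ with $i \le j$, and because $u^{(0)} \to \cdots \to u^{(m)}$ was already a chain in the old graph, each such $u^{(i)}$ was already an old ancestor of $u^{(j)}$; conversely, every old ancestor of $u^{(j)}$ reaches $v^{(j)}$ via the new edge $u^{(j)} \to v^{(j)}$. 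Hence the old common ancestors of $w_1$ and $v^{(j)}$ coincide with the old common ancestors of $w_1$ and $u^{(j)}$, and the inductive hypothesis delivers their unique LCA, which remains the unique LCA in the new \csss{}. I expect this third case to be the main obstacle: the delicate combinatorial point is that reachability from an arbitrary old vertex into the freshly installed ladder is entirely controlled by which base $u^{(i)}$ the old ancestor first reaches, a reduction that rests squarely on the chain structure furnished by Lemma~\ref{lemma:cjupiter-first-rule}. Once this reduction is in place, the three cases together yield uniqueness of the LCA for every pair of vertices, closing the induction.
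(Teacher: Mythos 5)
Your proof is correct and follows essentially the same route as the paper: induction on the operation sequence processed at the server in the order $\prec_s$, with the vertices created by \textsc{xForm} forming a ladder over the first-edge chain, the old/new and new/new pairs being immediate, and the mixed old/new pair being the only delicate case. The sole difference is that where the paper handles the mixed case by contradiction (two LCAs would give two incomparable common ancestors of a pair of old vertices), you reduce it directly by identifying the common ancestors of $(w_1, v^{(j)})$ with those of $(w_1, u^{(j)})$ and then applying the inductive hypothesis, which is, if anything, a cleaner way to finish the same argument.
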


In the following, we are concerned with the paths to a pair of vertices from their LCA.

\begin{lemma}[Disjoint Paths]    \label{lemma:disjoint-paths}
  Let $v_0$ be the unique LCA of a pair of vertices $v_1$ and $v_2$
  in the $n$-ary ordered state space \csss{}, denoted $v_0 = \text{LCA}(v_1, v_2)$.
  Then, the set of operations $O_{v_0 \leadsto v_1}$ 
  along a simple path $P_{v_0 \leadsto v_1}$ 
  is disjoint in terms of the operation oids from
  the set of operations $O_{v_0 \leadsto v_2}$ along a simple path $P_{v_0 \leadsto v_2}$.
\end{lemma}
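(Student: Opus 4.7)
The plan is to reduce the lemma to a set-theoretic identity about vertex $oids$ and then derive a contradiction from Lemma~\ref{lemma:lca}. First I would make the following structural observation: for any edge $(op', w')$ from a vertex $w$ in an $n$-ary ordered state space, the definition of vertices and edges gives $w'.oids = w.oids \cup \{op'.oid\}$. Therefore, as one walks any path $P_{u \leadsto v}$, the $oids$ set grows monotonically, and the union of all edge-oids along the path is $v.oids \setminus u.oids$. Combining this with Lemma~\ref{lemma:simple-path}, which ensures no oid recurs along a simple path, I obtain that the oid-set of $O_{v_0 \leadsto v_1}$ is exactly $v_1.oids \setminus v_0.oids$, and likewise for $O_{v_0 \leadsto v_2}$. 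Thus proving the disjointness claim reduces to establishing the identity $v_1.oids \cap v_2.oids = v_0.oids$.

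The inclusion $v_0.oids \subseteq v_1.oids \cap v_2.oids$ is immediate, since $v_0$ is a common ancestor of $v_1$ and $v_2$ and $oids$ grow along paths. For the reverse inclusion I would argue by contradiction. Assume there exists some $q \in (v_1.oids \cap v_2.oids) \setminus v_0.oids$. Let $u_1'$ be the first vertex along $\pathinmath{v_0}{v_1}$ whose $oids$ contains $q$, with immediate predecessor $u_1$ on the path; by minimality, the edge $u_1 \to u_1'$ is labeled by an operation with oid $q$, and $v_0.oids \subseteq u_1.oids$ while $q \notin u_1.oids$. Define $u_2, u_2'$ analogously on $\pathinmath{v_0}{v_2}$. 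I would then construct a vertex $v^{*}$ of \csss{} that (i) is a common descendant of $u_1'$ and $u_2'$ and (ii) is a common ancestor of both $v_1$ and $v_2$; condition (i) forces $q \in v^{*}.oids$ by path-monotonicity, and together with $v_0.oids \subseteq v^{*}.oids$ this yields $v_0.oids \subsetneq v^{*}.oids$. Invoking the auxiliary fact that in the $n$-ary ordered state space $oids$-containment implies ancestor–descendant (a consequence of the diamond-closure condition in Definition~\ref{def:css-nary}), $v_0$ is then a proper ancestor of $v^{*}$, so $v^{*}$ is a strictly lower common ancestor of $v_1$ and $v_2$, contradicting the uniqueness asserted by Lemma~\ref{lemma:lca}.

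The main obstacle will be the explicit construction of $v^{*}$. The idea is to iterate the diamond-completion rule from condition~2 of Definition~\ref{def:css-nary}: starting from $u_1'$ and $u_2'$ (which, by descending back to $v_0$ and following outgoing-edge pairs, share enough structure), one repeatedly closes diamonds to merge the remainders of $\pathinmath{v_0}{v_1}$ and $\pathinmath{v_0}{v_2}$ past the introduction of $q$. I would formalize this by induction on the quantity $|\pathinmath{u_1'}{v_1}| + |\pathinmath{u_2'}{v_2}|$, with the base case handled by a single application of the diamond rule and the inductive step peeling one edge off whichever residual path is longer and applying the rule again. This mirrors the inductive strategy of Lemma~\ref{lemma:lca} and is the only delicate part; once $v^{*}$ is in hand the contradiction with LCA uniqueness is immediate.
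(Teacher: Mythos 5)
Your reduction is sound as far as it goes: since every edge satisfies $w'.oids = w.oids \cup \{op'.oid\}$, the oid-set of a simple path from $u$ to $v$ is $v.oids \setminus u.oids$ (Lemma~\ref{lemma:simple-path} handling duplicates), so the lemma is equivalent to $v_1.oids \cap v_2.oids = v_0.oids$, and the inclusion $\supseteq$ is immediate. The genuine gap is at the crux you yourself flag: the construction of $v^{*}$. The diamond-closure condition~2 of Definition~\ref{def:css-nary} only guarantees a completing diamond between the \emph{first} edge of a vertex and another edge of that same vertex; it says nothing about two arbitrary sibling edges, so ``repeatedly closing diamonds to merge the remainders of the two paths'' is not licensed by the definition, and your base case is not ``a single application of the diamond rule'' when the divergence does not involve a first edge. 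What you need for $v^{*}$ is a vertex that is simultaneously a descendant of $u_1', u_2'$ \emph{and} an ancestor of both $v_1$ and $v_2$ --- a global confluence/grid property of \csss{} that is nowhere established in the paper and is of the same order of difficulty as the lemma itself; in this paper such structural facts (Lemmas~\ref{lemma:cjupiter-first-rule} and~\ref{lemma:lca}) are obtained by induction on the operation sequence the server processes, not from the static definition. The same objection applies to your ``auxiliary fact'' that $oids$-containment implies ancestry: it is plausible for the state spaces CJupiter actually builds, but it is asserted, not proved, and it does not follow from Definition~\ref{def:css-nary} alone.

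The paper's own proof avoids constructing any new vertex. It takes a shared operation $o$ with edges $v_{\alpha} \xrightarrow{o} v'_{\alpha}$ on $P_{v_0 \leadsto v_1}$ and $v_{\beta} \xrightarrow{o} v'_{\beta}$ on $P_{v_0 \leadsto v_2}$ and uses the \emph{provenance} of same-oid edges: every edge labeled $o$ arises, directly or through iterated OTs, from the original form of $o$, so the two edges lie in a common ``extension ladder'' or are joined through a third edge $v_{\gamma} \xrightarrow{o} v'_{\gamma}$ in a ``step ladder''. In either case an \emph{already existing} vertex ($v'_{\alpha}$ or $v'_{\gamma}$), which contains $o$ and hence differs from $v_0$, is a common ancestor of $v_1$ and $v_2$, contradicting $v_0$ being the unique LCA. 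If you want to keep your oid-algebra route, you would first have to prove the confluence and containment-implies-ancestry properties by induction on the server's operation sequence; otherwise the argument as written does not go through.
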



The next lemma gives a sufficient condition for two states (vertices) being compatible
in terms of disjoint simple paths to them from a common vertex.

\begin{lemma}[Compatible Paths]  \label{lemma:compatible-paths}
  Let \pathplain{v_0}{v_1} and \pathplain{v_0}{v_2} 
  be two paths from vertex $v_0$ to vertices $v_1$ and $v_2$, 
  respectively in the $n$-ary ordered state space \csss{}.
  If they are disjoint simple paths, 
  then the list states of $v_1$ and $v_2$ are compatible.
\end{lemma}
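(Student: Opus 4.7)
The plan is to construct, abstractly via iterated applications of CP1, a \emph{joint} list state $s_3$ that is reachable from both $s_1 \triangleq$ the list at $v_1$ and $s_2 \triangleq$ the list at $v_2$ by applying OT-transformed versions of the operations labeling the opposite path, and then to argue that every common element of $s_1$ and $s_2$ lies in $s_3$ with its relative order preserved on each side. Compatibility of $s_1$ and $s_2$ will then follow immediately.

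First I would set up notation: let $O_1 = \seq{p_1, \ldots, p_k}$ and $O_2 = \seq{q_1, \ldots, q_l}$ be the operations labeling \pathplain{v_0}{v_1} and \pathplain{v_0}{v_2}. By the disjointness hypothesis these two sequences share no oid, and by Lemma~\ref{lemma:simple-path} neither sequence contains a repeated oid. Next I would build, by a double induction on $(i, j) \in \set{0, \ldots, k} \times \set{0, \ldots, l}$, a grid of list states $s_{i, j}$ where $s_{0, 0}$ is the list at $v_0$, $s_{i, 0}$ (resp.\ $s_{0, j}$) is obtained from $s_{0,0}$ by applying $p_1, \ldots, p_i$ (resp.\ $q_1, \ldots, q_j$), and $s_{i, j}$ is reached from both $s_{i-1, j}$ and $s_{i, j-1}$ via the appropriate OT-transforms of $p_i$ and $q_j$. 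Each small diamond closes by CP1, so $s_3 \triangleq s_{k, l}$ is unambiguously defined: it is reachable from $s_1 = s_{k, 0}$ via some sequence $q_1^*, \ldots, q_l^*$ of transformed versions of $O_2$, and symmetrically from $s_2 = s_{0, l}$ via transformed versions $p_1^*, \ldots, p_k^*$ of $O_1$.

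Then I would fix any two common elements $a, b \in s_1 \cap s_2$ and argue as follows. Because every list element has a globally unique insertion, the single oid that inserts $a$ must belong to $v_1.oids \cap v_2.oids$; disjointness of $O_1$ and $O_2$ forces this oid into $v_0.oids$, so $a$ is already present in the list at $v_0$, and it is not deleted by any operation in $O_1 \cup O_2$ (else $a$ would be absent from $s_1$ or $s_2$). A key property of list OT --- preservation of operation type and target element (an $\ins{c}{\cdot}$ stays an insertion of $c$; a $\del{c}{\cdot}$ stays a deletion of $c$ or becomes \Nop{}) --- then implies that no $q_j^*$ inserts or deletes $a$, and likewise for $b$. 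Hence $a, b \in s_3$; moreover the sequence taking $s_1$ to $s_3$ consists entirely of \Nop{}s, insertions of elements other than $a, b$, and deletions of elements other than $a, b$, none of which swaps the relative order of $a$ and $b$. The symmetric argument applies from $s_2$ to $s_3$. Consequently $a$ and $b$ occupy the same relative order in $s_1$, $s_3$, and $s_2$, establishing compatibility.

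The main obstacle will be rigorously assembling the grid $\set{s_{i, j}}$ and the OT-transformed labels on its edges: each diagonal step requires a specific invocation of CP1 on a matching pair of already-transformed operations, and one must carefully track how each $p_i$ and $q_j$ evolves as it travels around the grid. Lemma~\ref{lemma:simple-path} together with the disjointness of the two paths is exactly what is needed to guarantee that no oid is revisited, keeping the induction well-defined. Once this plumbing is in place, the element-uniqueness argument and the type/element-preservation of list OT close the proof in a short calculation.
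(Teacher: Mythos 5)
Your proposal is correct, but it takes a genuinely different route from the paper. The paper never builds a merge state: it proves a stronger statement --- every vertex on \pathplain{v_0}{v_1} is compatible with every vertex on \pathplain{v_0}{v_2} --- by a double induction along the two paths, driven by a ``one-step compatibility'' claim (appending one edge whose operation does not touch the elements of the other state preserves compatibility), with simplicity and disjointness of the paths supplying the claim's hypothesis at each step. You instead close a CP1 grid to obtain a common descendant state $s_3$ of the two lists $s_1, s_2$, then transport the relative order of any two common elements $a,b$ forward into $s_3$ from both sides, using uniqueness of insertions plus the element/type preservation of the list OT to show that none of the transformed operations applied on either side inserts or deletes $a$ or $b$. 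Both arguments hinge on the same core observation --- operations not targeting $a$ or $b$ cannot flip their order --- but your scaffolding (CP1 diamond-closing) is external to the state space, whereas the paper's induction stays inside \csss{} and never invokes CP1 for this lemma. Two remarks: (i) your own intermediate observation already makes the grid redundant --- once you know the insertions of $a$ and $b$ lie in $v_0.oids$ and neither path deletes them, the order of $a,b$ in $s_1$ and in $s_2$ each equals their order in the list at $v_0$, so you can transport \emph{backward} to $v_0$ and skip $s_3$ entirely; (ii) the step ``no $q_j$ inserts $a$'' tacitly uses that the oids labeling \pathplain{v_0}{v_2} do not already occur in $v_0.oids$, which is not literally the disjointness hypothesis but follows from Lemma~\ref{lemma:simple-path} applied to a root-to-$v_2$ path through $v_0$; this should be said explicitly, though it is a minor patch rather than a gap in the idea.
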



The desired pairwise state compatibility property follows,
when we take the common vertex $v_0$ in Lemma~\ref{lemma:compatible-paths} 
as the LCA of the two vertices $v_1$ and $v_2$ under consideration.

\begin{theorem}[Pairwise State Compatibility]   \label{thm:pcp}
  Every pair of list states in the state space \csss{} are compatible.
\end{theorem}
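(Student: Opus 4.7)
The plan is to derive Theorem~\ref{thm:pcp} as an essentially mechanical combination of Lemmas~\ref{lemma:lca},~\ref{lemma:disjoint-paths}, and~\ref{lemma:compatible-paths}, with Lemma~\ref{lemma:simple-path} smoothing over a technicality. The heavy lifting has already been absorbed into those lemmas: Lemma~\ref{lemma:compatible-paths} encapsulates the CP1-based OT reasoning that preserves the relative ordering of shared elements along disjoint branches, and Lemma~\ref{lemma:lca} captures the nontrivial structural fact that the $n$-ary ordered state space \csss{} admits unique LCAs. What remains here is simply to glue these pieces together into a single global statement.

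Concretely, I would fix an arbitrary pair of vertices $v_1, v_2$ in \csss{} and proceed in four short steps. First, I would apply Lemma~\ref{lemma:lca} to obtain their unique lowest common ancestor $v_0$. Second, since $v_0$ is by definition an ancestor of both $v_1$ and $v_2$ in the rooted directed acyclic graph \csss{}, there exist directed paths \pathplain{v_0}{v_1} and \pathplain{v_0}{v_2}, and by Lemma~\ref{lemma:simple-path} each such path is simple. Third, Lemma~\ref{lemma:disjoint-paths} gives that the sets of operations along \pathplain{v_0}{v_1} and \pathplain{v_0}{v_2} are disjoint in terms of $oid$s. Fourth, feeding these two disjoint simple paths into Lemma~\ref{lemma:compatible-paths} with $v_0$ as the common starting vertex directly yields that the list states associated with $v_1$ and $v_2$ are compatible. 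Since $v_1$ and $v_2$ were arbitrary, pairwise compatibility holds throughout \csss{}; Proposition~\ref{prop:css-server-client} then transfers this property to every replica of \cjupiter{}.

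The one minor subtlety I would address explicitly is the degenerate case where one vertex is an ancestor of the other, say $v_0 = v_1$. Here \pathplain{v_0}{v_1} is empty, hence vacuously simple and carrying no operations, so disjointness against \pathplain{v_0}{v_2} is automatic and Lemma~\ref{lemma:compatible-paths} still applies (with one branch contributing nothing). I do not anticipate any real obstacle inside this theorem itself; the genuinely delicate step in the overall chain is the proof of Lemma~\ref{lemma:compatible-paths}, where the ``first-edge'' discipline of Lemma~\ref{lemma:cjupiter-first-rule} must be combined with CP1 to ensure that common elements survive OTs along both branches with their relative order intact. Once that lemma is in hand, the present theorem is a short book-keeping step, and its output, together with Lemma~\ref{lemma:irreflexivity}, discharges the remaining obligation needed to conclude that \cjupiter{} satisfies $\wlspec{}$.
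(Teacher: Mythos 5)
Your proposal is correct and follows exactly the paper's own four-step argument: unique LCA (Lemma~\ref{lemma:lca}), simple paths (Lemma~\ref{lemma:simple-path}), disjoint paths (Lemma~\ref{lemma:disjoint-paths}), then compatibility (Lemma~\ref{lemma:compatible-paths}). Your extra remarks on the degenerate ancestor case and on transferring the property to all replicas via Proposition~\ref{prop:css-server-client} are consistent with how the paper handles these points elsewhere and introduce no deviation.
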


\begin{proof}
  Consider vertices $v_1$ and $v_2$ in \csss{}.
  \itshape 1) \upshape
    By Lemma~\ref{lemma:lca}, they have a unique LCA, denoted $v_0$;
  \itshape 2) \upshape
    By Lemma~\ref{lemma:simple-path}, \pathplain{v_0}{v_1} and \pathplain{v_0}{v_2} are simple paths;
  \itshape 3) \upshape
    By Lemma~\ref{lemma:disjoint-paths}, \pathplain{v_0}{v_1} and \pathplain{v_0}{v_2} are disjoint; and
  \itshape 4) \upshape
    By Lemma~\ref{lemma:compatible-paths}, the list states of $v_1$ and $v_2$ are compatible.
\end{proof}


\section{Related Work}  \label{section:related-work}

Convergence is the main property for implementing
a highly-available replicated list object~\cite{Ellis:SIGMOD89, Xu:CSCW14}.
Since 1989~\cite{Ellis:SIGMOD89}, 
a number of OT~\cite{Ellis:SIGMOD89}-based protocols have been proposed.
These protocols can be classified according to
whether they rely on a total order on operations~\cite{Xu:CSCW14}.
Various protocols like \jupiter~\cite{Nichols:UIST95, Xu:CSCW14} establish a total order
via a central server, a sequencer, or a distributed timestamping scheme
\cite{Wave, Vidot:CSCW00, Shen:CSCW02, Li:ICPADS04, Sun:TPDS09}.
By contrast, protocols like adOPTed~\cite{Ressel:CSCW96}
rely only on a partial (causal) order 
on operations~\cite{Ellis:SIGMOD89, Prakash:TOCHI94, Sun:TOCHI98, Sun:CSCW98, Sun:TOCHI02}.

In 2016, Attiya et al.~\cite{Attiya:PODC16} propose
the strong/weak list specification of a replicated list object.
They prove that the existing 
CRDT (Conflict-free Replicated Data Types)~\cite{Shapiro:SSS11}-based RGA protocol~\cite{Roh:JPDC11} 
satisfies the strong list specification,
and \emph{conjecture} that the well-known OT-based Jupiter protocol~\cite{Nichols:UIST95, Xu:CSCW14} 
satisfies the weak list specification.

The OT-based protocols typically use data structures like 1D buffer~\cite{Shen:CSCW02},
2D state space~\cite{Nichols:UIST95, Xu:CSCW14},
or $N$-dimensional interaction model~\cite{Ressel:CSCW96}
to keep track of OTs or choose correct OTs to perform.
As a generalization of 2D state space,
our $n$-ary ordered state space is similar to the $N$-dimensional interaction model.
However, they are proposed for different system models.
In an $n$-ary ordered state space, edges from the same vertex are \emph{ordered},
utilizing the existence of a total order on operations.
By contrast, the $N$-dimensional interaction model relies only on a partial order on operations.
Consequently, the simple characterization of OTs in \textsc{xForm} of \cjupiter{} 
does not apply in the $N$-dimensional interaction model.

\section{Conclusion and Future Work}    \label{section:conclusion}

We prove that the \jupiter{} protocol~\cite{Nichols:UIST95, Xu:CSCW14} 
satisfies the weak list specification~\cite{Attiya:PODC16},
thus solving the conjecture recently proposed by Attiya et al.~\cite{Attiya:PODC16}.
To this end, we have designed \cjupiter{}
based on a novel data structure called $n$-ary ordered state space.
In the future, we will explore how to algebraically manipulate and
reason about $n$-ary ordered state spaces.


\bibliography{jupiter-opodis18}

\clearpage
\appendix
\setcounter{footnote}{0}
\setcounter{figure}{0}
\renewcommand{\thefigure}{\thesection.\arabic{figure}}
\setcounter{algorithm}{0}
\renewcommand{\thealgorithm}{\thesection.\arabic{algorithm}}
\setlength{\belowcaptionskip}{0pt}

\section{The OT System}	\label{appendix:ot}

According to Section~\ref{ss:model}, 
we represent the replica state in an OT system (including both \jupiter{} and \cjupiter{})
as a sequence of operations $\seq{o_1, o_2, \cdots, o_m}$ (where, $o_i \in \opset$).

The function 
\[
  \textsc{Apply}: \Sigma \times \opset \to \Sigma \times \Val{}
\]
applies an operation $o$ to a state $\sigma$,
returning a new state $\sigma \circ o$ 
and the list content produced by performing $\sigma \circ o$ on the initial list.

\begin{figure*}[b]
  \centering
  \input{appendix/algs/list-ot}
  \caption{The OT functions satisfying CP1 for a replicated list object~\cite{Ellis:SIGMOD89, Imine:TCS06}.
    The parameter ``$pr$'' means ``priority'' which helps to resolve the conflicts
    when two concurrent \lins{} operations are intended to insert elements at the same position.
    In implementations, it is often to take the unique ids of replicas as priorities.
    The elements to be deleted in \ldel{} operations are irrelevant and are thus represented by `$\_$'s.
    \textsc{NOP} means ``do nothing''.
    Since we assume that all inserted elements are unique,
    the case of $OT\big(\lins(a_1, p_1, pr_1), \lins(a_2, p_2, pr_2)\big) = \textsc{NOP}$
    with $p_1 = p_2 \land a_1 = a_2$ in~\cite{Imine:TCS06} will never apply.}
  \label{fig:list-ot}
\end{figure*}


Figure~\ref{fig:list-ot} shows the OT functions satisfying CP1 
for a replicated list object~\cite{Ellis:SIGMOD89, Imine:TCS06}.
Operations \lins{} and \ldel{} have been extended with an extra parameter $pr$ for ``priority''~\cite{Imine:TCS06}.
It helps to resolve the conflicts when two concurrent \lins{} operations are intended to 
insert different elements at the same position.
We assume that the operations generated by the replica with a smaller identifier have a higher priority.
When a conflict occurs, 
the insertion position of the \lins{} operation with a higher priority will be shifted.

We highlight one property of OTs that when transformed in both \jupiter{} and \cjupiter{}, 
the type and effect of an insertion (resp. a deletion) $\ins{a}{p}$ (resp. $\del{a}{p}$) 
remains unchanged (with a trivial exception of being transformed to be \Nop{}), 
namely to insert (resp. delete) the element $a$ (possibly at a different position than $p$).

Figure~\ref{fig:xform-ot} illustrates an OT of two operations $op, op' \in \Op$
in both the $n$-ary ordered state space of \cjupiter{} 
and the 2D state space of \jupiter{}:
\[
  (\opot{op}{op'}, \opot{op'}{op}) = OT(op, op').
\]

Algorithm~\ref{alg:constants} lists the constants used in \jupiter{} and/or \cjupiter{}.

\begin{algorithm}
  \caption{Constants.}
  \label{alg:constants}
  \begin{algorithmic}[1]
    \LineComment{for both \jupiter{} and \cjupiter{}}
    \State $\SID{} = 0$
    \State $\CID{} = \set{1 \cdots n}$
    \State $\RID{} = \set{0 \cdots n}$
    \State $\SEQ{} = \mathbb{N}_{0}$

    \Statex
    \State \Enum{} \LR{} $\set{\Local = 0, \Remote = 1}$  \Comment{for \jupiter}
    \State \Enum{} \Ordering{} $\set{\Left = -1, \Right = 1}$\Comment{for \cjupiter}
  \end{algorithmic}
\end{algorithm}


\clearpage
\setcounter{figure}{0}
\section{The \cjupiter{} Protocol}	\label{section:appendix-cjupiter}

\subsection{Data Structure: $n$-ary Ordered State Space}	\label{section:appendix-css}

\begin{algorithm}
  \caption{Operation in \cjupiter{}.}
  \label{alg:css-op}
  \begin{algorithmic}[1]
    \CLASS{Op} 
      \State \VAR{o}\, : $\opset$
      \State \VAR{oid} : $\CID{} \times \SEQ{}$
      \State \VAR{ctx} : $2^{\CID{} \,\times\, \SEQ{}} = \emptyset$
      \State \VAR{sctx} : $2^{\CID{} \,\times\, \SEQ{}} = \emptyset$

      \Statex
      \Procedure{Compare}{$op : \Op, op' : \Op, r : \RID$} : \Ordering{}
	\If{$op.oid \in op'.sctx$}
	  \State \Return \Left{}	\Comment{$op \prec_{s} op'$}
	\ElsIf{$op'.oid \in op.sctx$}
	  \State \Return \Right{} 	\Comment{$op' \prec_{s} op$}

	  \LineComment{Here, $r$ must be a client replica, i.e., $r \in \CID{}$}
	  \ElsIf{$op.oid.cid \neq r$}	\Comment{$op$ is redirected by the server to client $r$}
	  \State \Return \Left{} 	\Comment{$op \prec_{s} op'$}
	\Else				\Comment{$op.oid.cid = r$. It must be the case that $op'.oid.cid \neq r$.}
	  \State \Return \Right{}	\Comment{$op' \prec_{s} op$}
	\EndIf
      \EndProcedure

      \Statex
      \Procedure{OT}{$op : \Op, op' : \Op$} : ($\Op, \Op$)
	\State $(o, o') \gets \Call{OT}{op.o, op.o'}$ \Comment{call OT on $\opset$}

	\hStatex
	\State $\Op\; \opot{op}{op'} = \text{ new } \Op(o, op.oid, op.ctx \cup \set{op'.oid}, op.sctx)$
	\State $\Op\; \opot{op'}{op} = \text{ new } \Op(o', op'.oid, op'.ctx \cup \set{op.oid}, op'.sctx)$

	\hStatex
	\State \Return $(\opot{op}{op'}, \opot{op'}{op})$
      \EndProcedure
    \EndClass{Op}
  \end{algorithmic}
\end{algorithm}

\begin{algorithm}
  \caption{Vertex in the $n$-ary ordered state space.}
  \label{alg:css-vertex}
  \begin{algorithmic}[1]
    \CLASS{Vertex} 
      \State \VAR{oids} : $2^{\CID{} \,\times\, \SEQ{}} = \emptyset$
      \State \VAR{edges} : $\SortedSet\langle\Edge\rangle = \emptyset$

      \Statex
      \Procedure{firstEdge}{$r : \RID$} : \Edge
	\State \Return the first edge according to $\Edge.\Call{Compare}{e : \Edge, e' : \Edge, r : \RID}$
      \EndProcedure


    \EndClass{Vertex}
  \end{algorithmic}
\end{algorithm}

\begin{algorithm}
  \caption{Edge in the $n$-ary ordered state space.}
  \label{alg:css-edge}
  \begin{algorithmic}[1]
    \CLASS{Edge}
      \State \VAR{op} : $\Op = \Null{}$
      \State \VAR{v} : $\Vertex = \Null{}$

      \Statex
      \Procedure{Compare}{$e : \Edge, e' : \Edge, r : \RID$} : \Ordering{}
	\State \Return \Call{Compare}{$e.op, e'.op, r$}
      \EndProcedure
    \EndClass{Edge}
  \end{algorithmic}
\end{algorithm}

\begin{algorithm}
  \caption{The $n$-ary ordered state space.}
  \label{alg:css-n-ary-state-space}
  \begin{algorithmic}[1]
    \CLASS{CStateSpace} 
      \State \VAR{cur} : \Vertex = new \Vertex()
      \State \VAR{r} : \RID




      \Statex
      \Procedure{xForm}{$op : \Op$} : \Op
      \State \Vertex{} $u \gets \Call{Locate}{op}$ \label{line:css-locate}
	\State \Vertex{} $v \gets$ new $\Vertex(u.oids \cup \set{op.oid}, \emptyset)$ \label{line:css-v}

	\hStatex
	\While{$u \neq cur$} \label{line:css-while-begin} \Comment{See Figure~\ref{fig:xform-ot}}
	  \State $\Edge\; e' \gets u.\Call{firstEdge}{r}$ 	\label{line:css-first-edge}
	  \State $\Vertex\; u' \gets e'.v$		 	\label{line:css-first-child}
	  \State $\Op\; op' \gets e'.op$ 			\label{line:css-first-op}

	  \hStatex
	  \State $(\opot{op}{op'}, \opot{op'}{op}) \gets \Call{OT}{op, op'}$  \label{line:css-ot}

	  \hStatex
	  \State \Vertex{} $v' \gets$ new $\Vertex(v.oids \cup \set{op'.oid}, \!\emptyset)$\label{line:css-v'}
	  \State $\Call{Link}{v, v', \opot{op'}{op}}$  \label{line:css-link-v-v'}
	  \State $\Call{Link}{u, v, op}$ \label{line:css-link-u-v}

	  \hStatex
	  \State $u \gets u'$ \label{line:css-update-begin}
	  \State $v \gets v'$
	  \State $op \gets \opot{op}{op'}$ \label{line:css-update-end}
	\EndWhile \label{line:css-while-end}

	\hStatex
	\State $\Call{Link}{u, v, op}$
        \State $cur \gets v$
	\State \Return $op$ \label{line:css-return}
      \EndProcedure

      \Statex
      \Procedure{Locate}{$op : \Op$} : \Vertex{}
	\State \Return \Vertex{} $v$ with $v.oids = op.ctx$
      \EndProcedure

      \Statex
      \Procedure{Link}{$u : \Vertex, v : \Vertex, op : \Op$}
	\State \Edge{} $e \gets$ new $\Edge(op, v)$
	\State $u.edges.\Call{add}{e}$
      \EndProcedure
    \EndClass{CStateSpace}
  \end{algorithmic}
\end{algorithm}

\clearpage
\subsection{The \cjupiter{} Protocol}	\label{section:appendix-cjupiter-protocol}
\begin{algorithm}
  \caption{Client in \cjupiter{}.}
  \label{alg:css-client}
  \begin{algorithmic}[1]
    \CLASS{Client} 
      \State \VAR{cid} : \CID{}
      \State \VAR{seq} : \SEQ{} = 0
      \State \VAR{state} : $\Sigma = \emptyseq$  \Comment{a sequence of $o \in \opset$}
      \State \VAR{S} : \CStateSpace = new \CStateSpace($cid$)

      \Statex
      \Procedure{Do}{$o : \opset$} : \Val{}  \Comment{Local Processing}
	\State $(state, val) \gets$ \Call{Apply}{$state, o$}

	\hStatex
	\State $seq \gets seq + 1$
	\State \Op{} $op \gets$ new $\Op(o, (cid, seq), S.cur.oids, \emptyset)$

	\hStatex
	\State \Vertex{} $v \gets$ new $\Vertex(S.cur.oids \cup \set{op.oid}, \emptyset)$
	\State $\Call{Link}{S.cur, v, op}$
	\State $S.cur \gets v$

	\hStatex
	\State \Call{Send}{$\SID{}, op$}  \Comment{send $op$ to the server}

	\State \Return $val$
      \EndProcedure

      \Statex
      \Procedure{Receive}{$op : \Op$}  \Comment{Remote Processing}
	\State $\Op\; op' \gets S.\Call{xForm}{op}$
	\State $state \gets state \circ op'.o$
      \EndProcedure
    \EndClass{Client}
  \end{algorithmic}
\end{algorithm}
\begin{algorithm}
  \caption{Server in \cjupiter{}.}
  \label{alg:css-server}
  \begin{algorithmic}[1]
    \CLASS{Server} 
      \State \VAR{state} : $\Sigma = \emptyseq$ \Comment{a sequence of $o \in \opset$}
      \State \VAR{soids} : $2^{\CID \times \SEQ} = \emptyset$
      \State \VAR{S} : \CStateSpace = new \CStateSpace(\SID)

      \Statex
      \Procedure{Receive}{$op : \Op$}  \Comment{Server Processing}
	\State $op.sctx \gets soids$
	\State $soids \gets soids \cup \set{op.oid}$

	\hStatex
	\State \Op{} $op' \gets S.\Call{xForm}{op}$
	\State $state \gets state \circ op'.o$

	\hStatex
	\ForAll{$c \in \CID \setminus \set{op.oid.cid}$}
	  \State \Call{Send}{$c, op$}	\Comment{send $op$ (not $op'$) to client $c$}
	\EndFor
      \EndProcedure
    \EndClass{Server}
  \end{algorithmic}
\end{algorithm}


\begin{figure}[t]
  \centering
  \begin{subfigure}[b]{1.00\textwidth}
    \begin{subfigure}[b]{0.06\textwidth}
      \includegraphics[width = \textwidth]{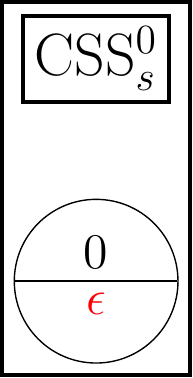}
    \end{subfigure}
    \hfil
    \begin{subfigure}[b]{0.07\textwidth}
      \includegraphics[width = \textwidth]{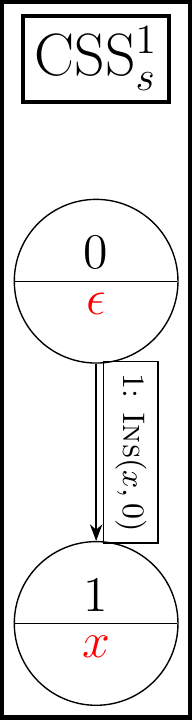}
    \end{subfigure}
    \hfil
    \begin{subfigure}[b]{0.17\textwidth}
      \includegraphics[width = \textwidth]{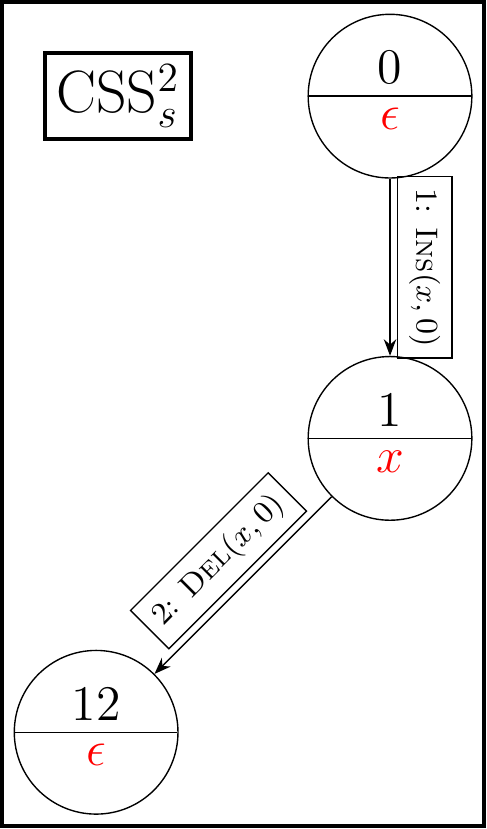}
    \end{subfigure}
    \hfil
    \begin{subfigure}[b]{0.28\textwidth}
      \includegraphics[width = \textwidth]{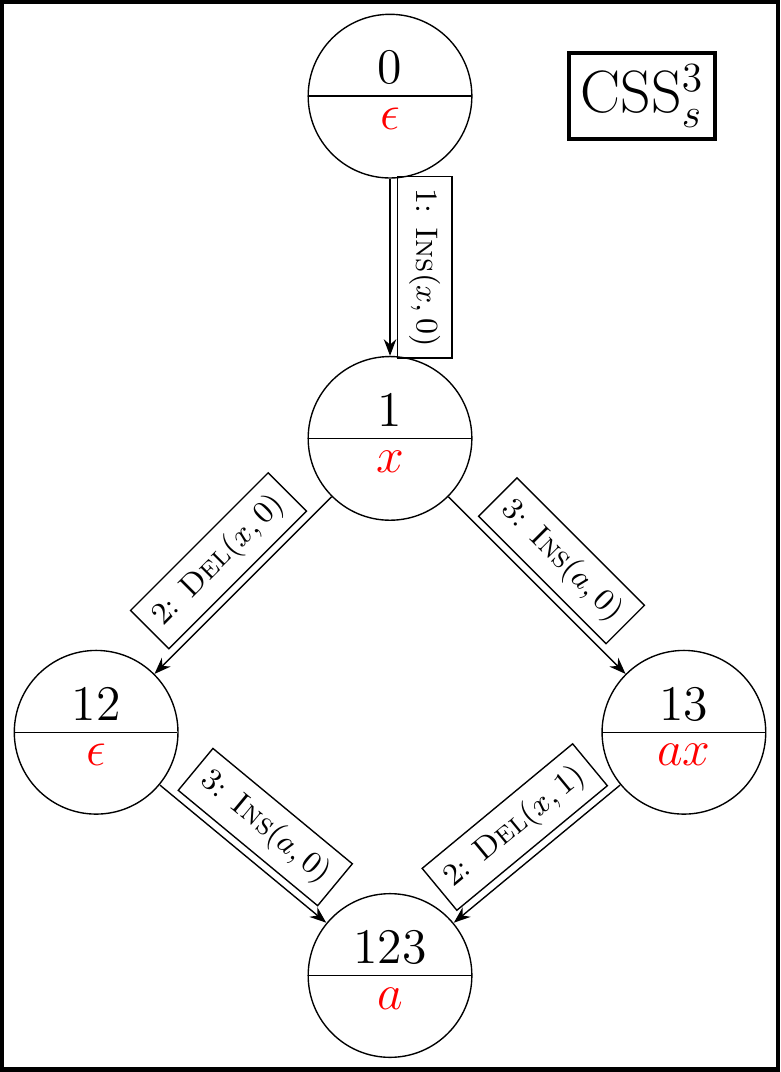}
    \end{subfigure}
    \hfil
    \begin{subfigure}[b]{0.37\textwidth}
      \includegraphics[width = \textwidth]{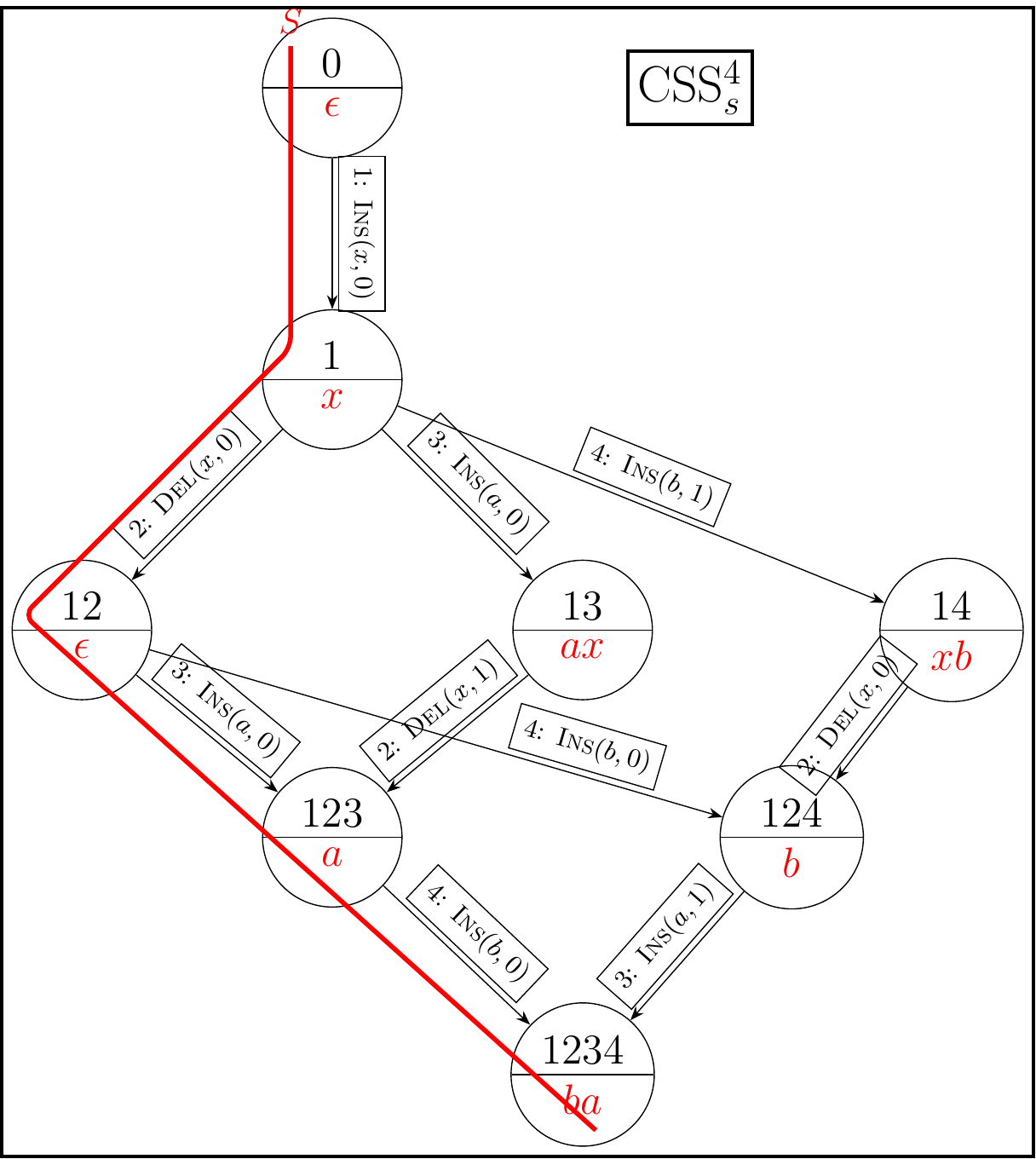}
    \end{subfigure} %
    \caption{Server $s$.}
    \label{fig:cjupiter-illustration-server}
  \end{subfigure} %
  \\[40pt]
  \begin{subfigure}[b]{1.00\textwidth}
    \begin{subfigure}[b]{0.06\textwidth}
      \includegraphics[width = \textwidth]{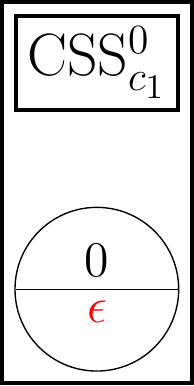}
    \end{subfigure}
    \hfil
    \begin{subfigure}[b]{0.07\textwidth}
      \includegraphics[width = \textwidth]{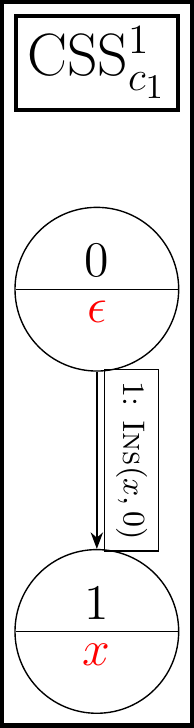}
    \end{subfigure}
    \hfil
    \begin{subfigure}[b]{0.17\textwidth}
      \includegraphics[width = \textwidth]{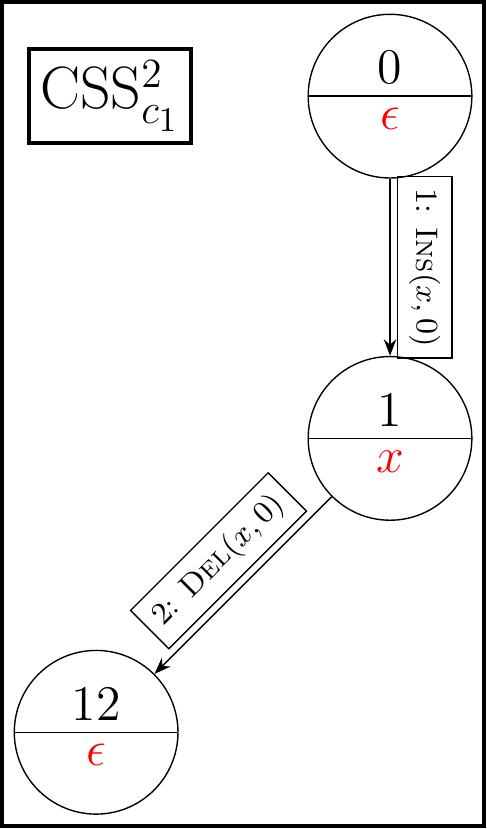}
    \end{subfigure}
    \hfil
    \begin{subfigure}[b]{0.28\textwidth}
      \includegraphics[width = \textwidth]{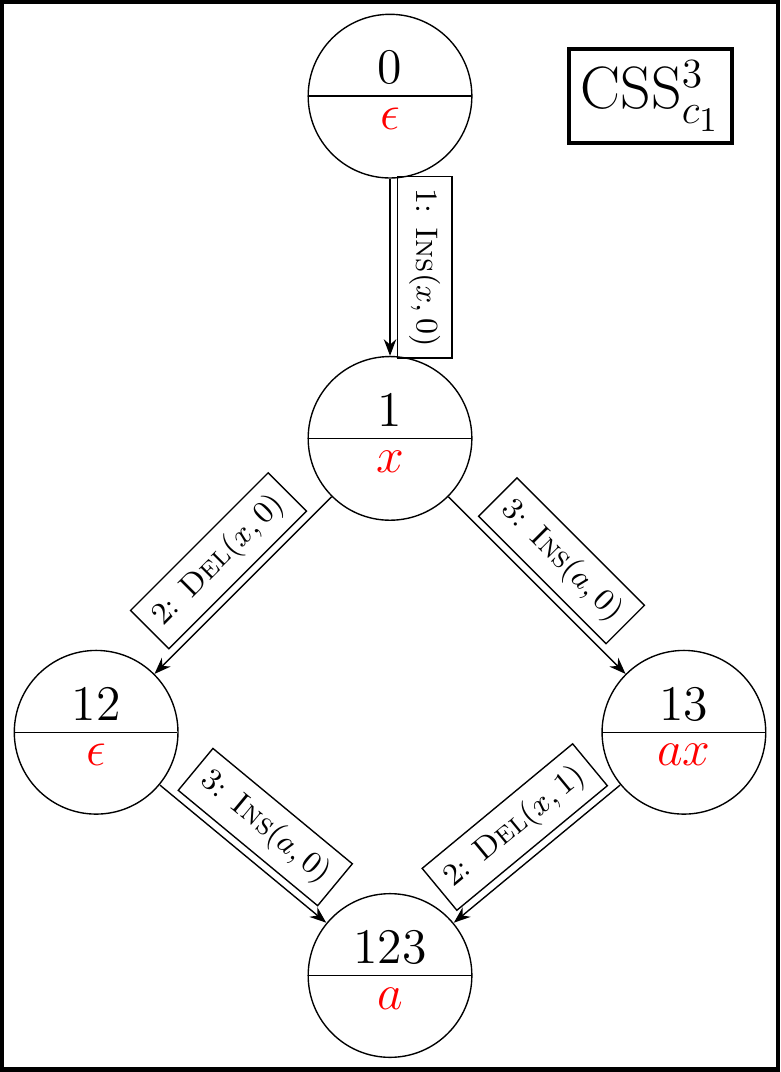}
    \end{subfigure}
    \hfil
    \begin{subfigure}[b]{0.37\textwidth}
      \includegraphics[width = \textwidth]{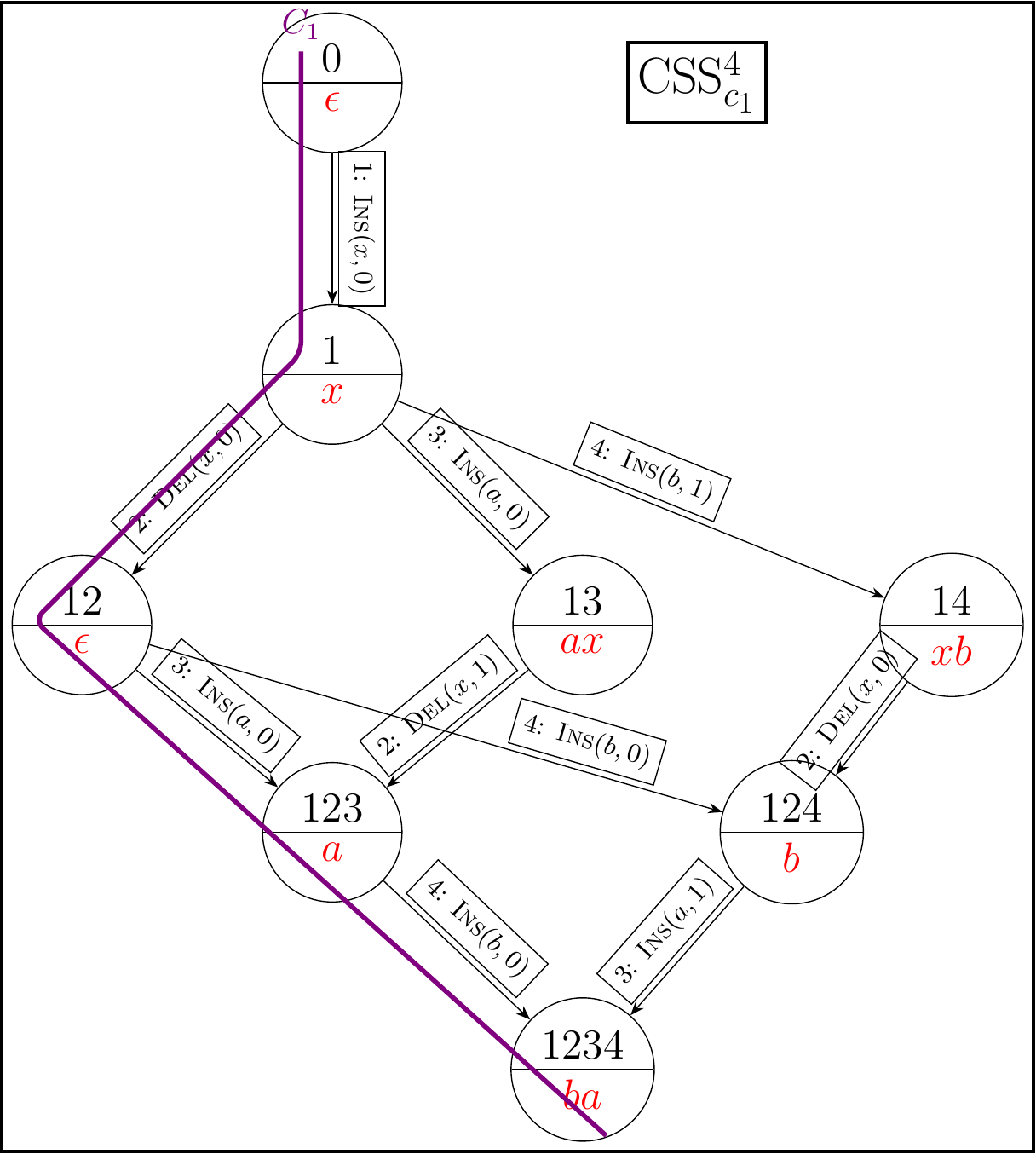}
    \end{subfigure} %
    \caption{Client $c_1$.}
  \end{subfigure} %
  \caption{Illustration of \cjupiter{} under the schedule of Figure~\ref{fig:jupiter-schedule-podc16}.
  The replica behaviors are indicated by the paths in the $n$-ary ordered state spaces. (To be continued)}
\end{figure}
\begin{figure}[t]\ContinuedFloat
  \centering
  \begin{subfigure}[b]{1.00\textwidth}
    \begin{subfigure}[b]{0.06\textwidth}
      \includegraphics[width = \textwidth]{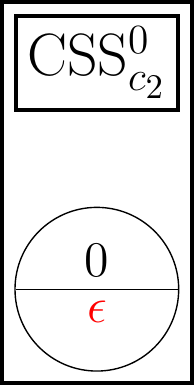}
    \end{subfigure}
    \hfil
    \begin{subfigure}[b]{0.07\textwidth}
      \includegraphics[width = \textwidth]{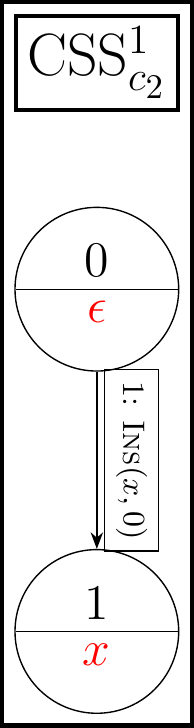}
    \end{subfigure}
    \hfil
    \begin{subfigure}[b]{0.17\textwidth}
      \includegraphics[width = \textwidth]{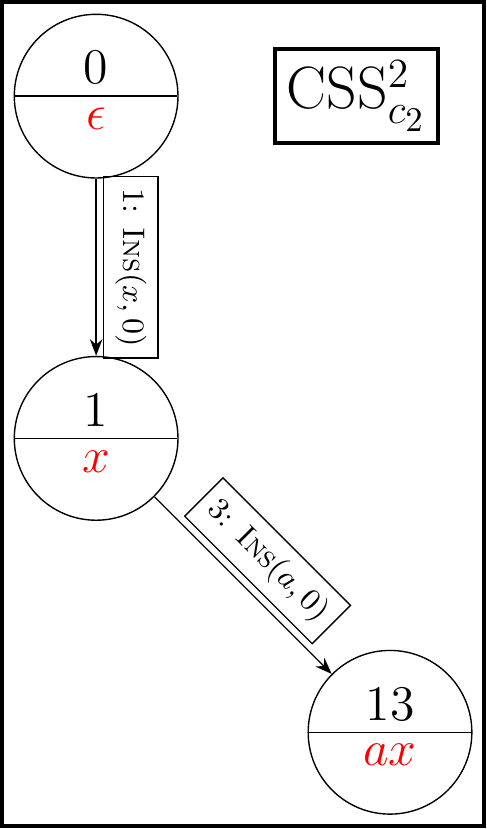}
    \end{subfigure}
    \hfil
    \begin{subfigure}[b]{0.28\textwidth}
      \includegraphics[width = \textwidth]{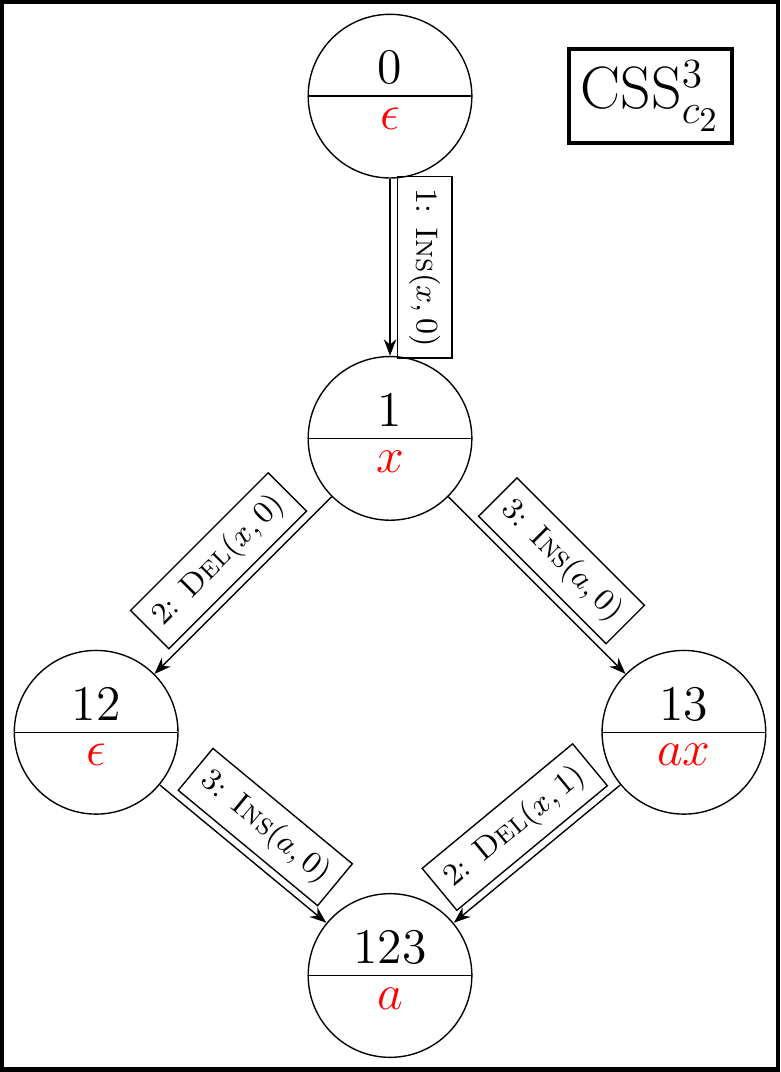}
    \end{subfigure}
    \hfil
    \begin{subfigure}[b]{0.37\textwidth}
      \includegraphics[width = \textwidth]{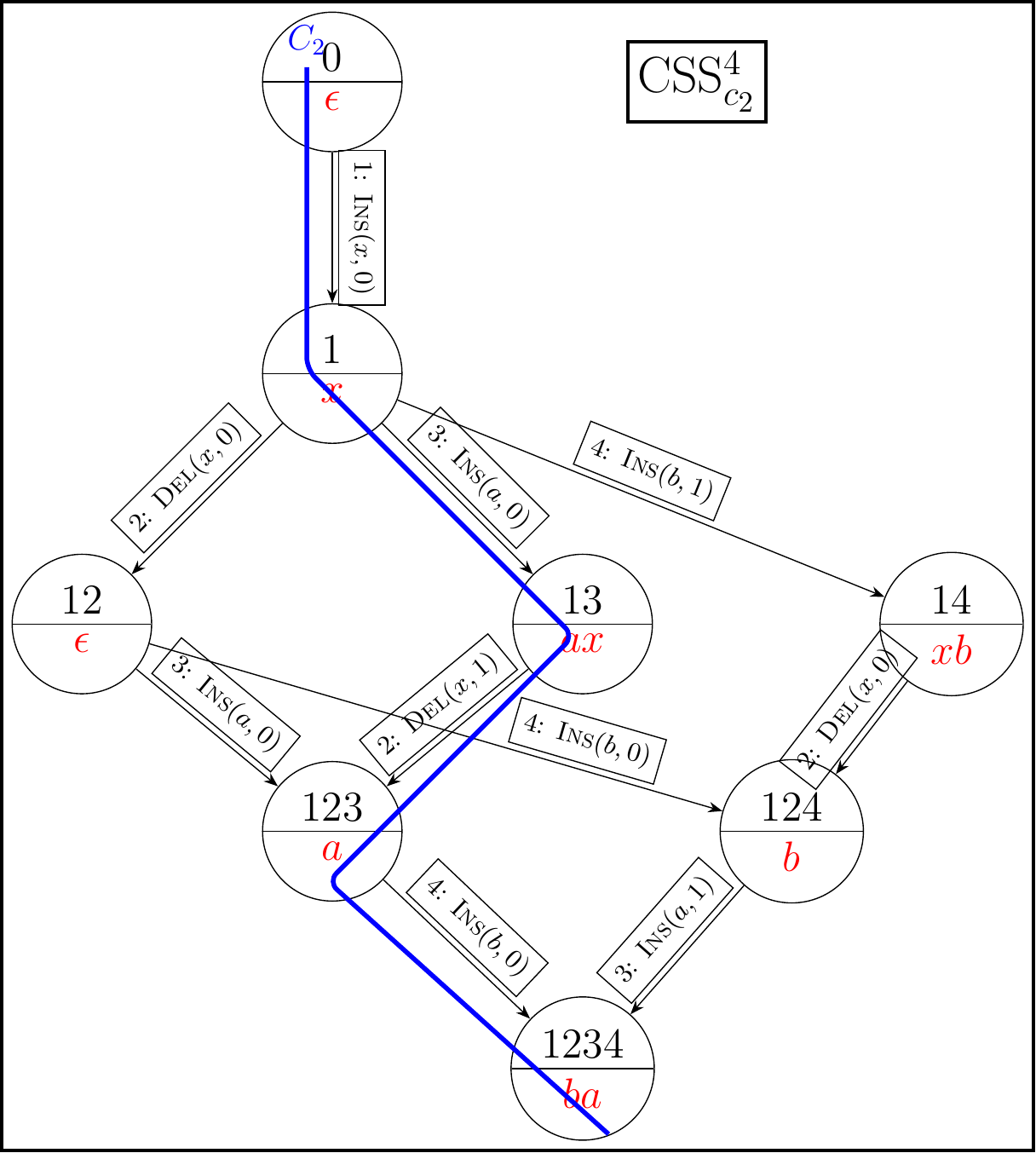}
    \end{subfigure} %
    \caption{Client $c_2$.}
  \end{subfigure} %
  \\[40pt]
  \begin{subfigure}[b]{1.00\textwidth}
    \begin{subfigure}[b]{0.06\textwidth}
      \includegraphics[width = \textwidth]{figs/jupiter-css-podc16-c1423-css0.pdf}
    \end{subfigure}
    \hfil
    \begin{subfigure}[b]{0.07\textwidth}
      \includegraphics[width = \textwidth]{figs/jupiter-css-podc16-c1423-css1.pdf}
    \end{subfigure}
    \hfil
    \begin{subfigure}[b]{0.17\textwidth}
      \includegraphics[width = \textwidth]{figs/jupiter-css-podc16-c1423-css2.pdf}
    \end{subfigure}
    \hfil
    \begin{subfigure}[b]{0.31\textwidth}
      \includegraphics[width = \textwidth]{figs/jupiter-css-podc16-c1423-css3.pdf}
    \end{subfigure}
    \hfil
    \begin{subfigure}[b]{0.35\textwidth}
      \includegraphics[width = \textwidth]{figs/jupiter-css-podc16-c1423-css4.pdf}
    \end{subfigure}
    \caption{Client $c_3$.}
  \end{subfigure}
  \caption{(Continued.) Illustration of \cjupiter{} under the schedule of Figure~\ref{fig:jupiter-schedule-podc16}.
  The replica behaviors are indicated by the paths in the $n$-ary ordered state spaces.}
  \label{fig:appendix-cjupiter-illustration}
\end{figure}

\clearpage


\begin{figure}[hb]
  \centering
  \includegraphics[width = 0.60\textwidth]{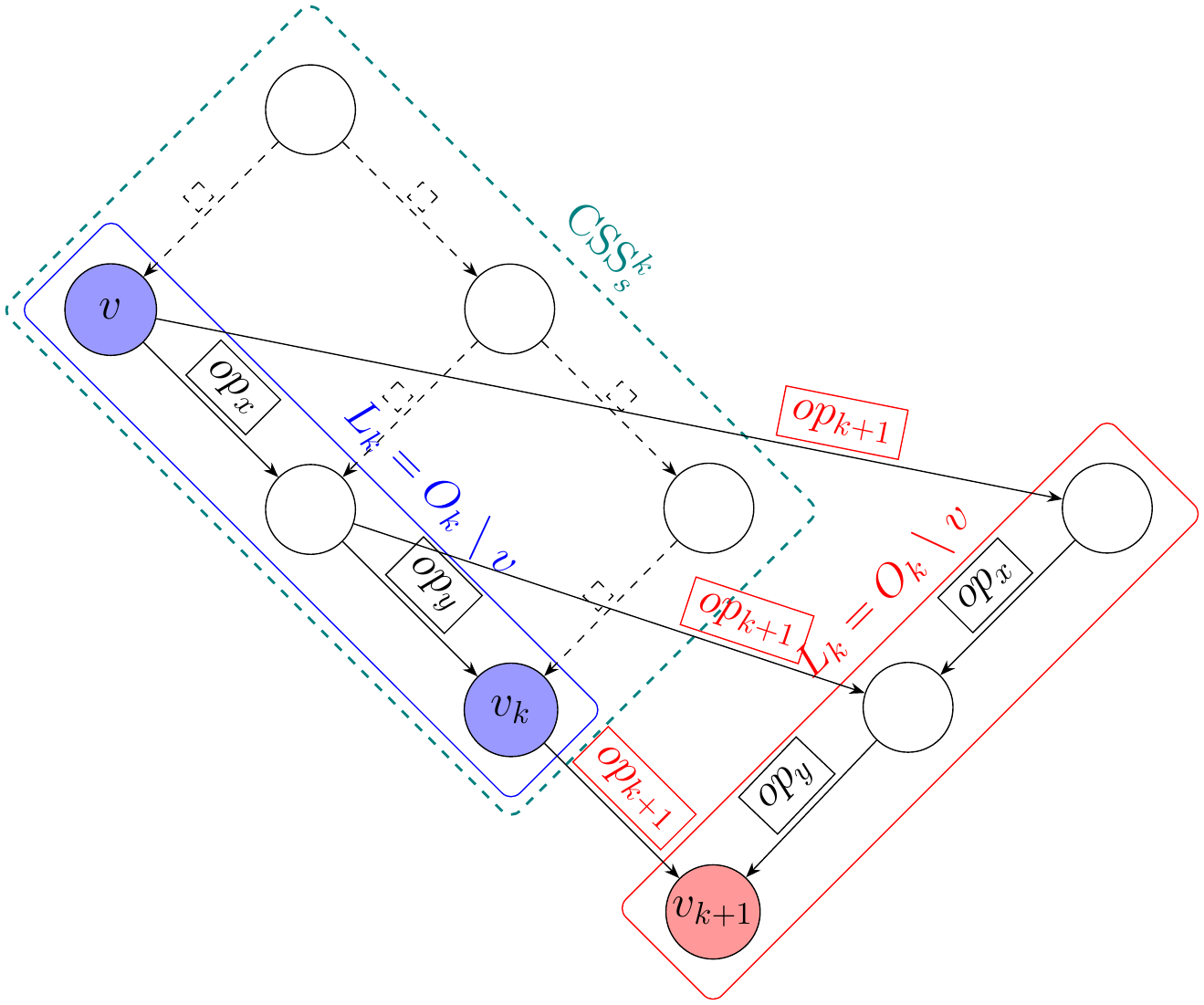}
  \caption{Illustration of \textsc{Case} 2 ($v \neq v_{k}$) of the proof 
    for Lemma~\ref{lemma:cjupiter-first-rule}.}
  \label{fig:cjupiter-first-rule-case2}
\end{figure}

\subsection{Proof for Lemma~\ref{lemma:cjupiter-first-rule} (\cjupiter{}'s ``First'' Rule)}


\begin{proof}
  By mathematical induction on the operation sequence $O$ the server processes.
  
  \emph{Base Case:} $O = \seq{}$.
  \csss{} contains only the initial vertex $v_0 = (\emptyset, \emptyset)$
  and the first edge from $v_0$ is empty.
  
  \emph{Inductive Hypothesis:}
  Suppose that the lemma holds for 
  \[
    O_{k} = \seq{op_1, op_2, \ldots, op_k}.
  \]
  
  \emph{Inductive Step:}
  Consider $O_{k+1} = \seq{op_1, op_2, \ldots, op_{k}, op_{k+1}}$.
  Suppose that the matching vertex of operation $op_{k+1}$ is $v$ (i.e., $v.oids = op_{k+1}.ctx$).
  We distinguish between $v$ being the final vertex of \csssk{k}, denoted $v_{k}$, or not.
  
  \textsc{Case 1:} \emph{$v = v_{k}$.}
  According to the procedure \textsc{xForm} of \cjupiter{} (Algorithm~\ref{alg:css-n-ary-state-space}),
  the state space \csssk{k+1} is obtained by extending \csssk{k}
  with a new edge from $v_k$ labeled with $op_{k+1}$.
  Thus, each path consisting of first edges in \csssk{k} is extended by the edge labeled with $op_{k+1}$,
  meeting the second condition of the lemma in \csssk{k+1}.
  In addition, the first edge from the final vertex of \csssk{k+1} is empty,
  meeting the first condition.
  
  \textsc{Case 2:} \emph{$v \neq v_{k}$.}
  According to the procedure \textsc{xForm} of \cjupiter{} (Algorithm~\ref{alg:css-n-ary-state-space}),
  the server transforms $op_{k+1}$ with the operation sequence, denoted $L_{k}$,
  along the first edges from $v$ to the final vertex $v_{k}$ of \csssk{k},
  obtaining the state space \csssk{k+1} with final vertex $v_{k+1}$.
  By inductive hypothesis, $L_{k}$ consists of the operations
  in $O_{k} \setminus v$ in the total order `$\prec_{s}$'.
  To prove that the lemma holds for \csssk{k+1}, we need to check that
  (Figure~\ref{fig:cjupiter-first-rule-case2}):
  \begin{enumerate}
    \item \emph{It holds for old vertices in \csssk{k}.}
    Each path consisting of first edges from vertices in \csssk{k}
    is extended by the edge labeled with $op_{k+1}$,
    meeting the second condition of the lemma in \csssk{k+1}.
    \item \emph{It holds for new vertices in $\cssskinmath{k+1} \setminus \cssskinmath{k}$.}
    This is because these new vertices form a path
    along which the corresponding operation sequence is exactly $L_{k}$.
  \end{enumerate}
\end{proof}

\subsection{Proof for Lemma~\ref{lemma:cjupiter-ot-server} (\cjupiter{}'s OT Sequence)}

\begin{proof}
  We show that if $L$ is not empty, then
  \begin{enumerate}
    \item \emph{All operations in $L$ are totally ordered by `$\prec_{s}$' before $op$.} 
      This holds because operation $op$ is the last one in the total order `$\prec_s$'.
    \item \emph{All operations in $L$ are concurrent by `$\,\parallel$' with $op$.}
      By contradiction. Suppose that some $op'$ in $L$ is not concurrent with $op$.
      Then it must be the case that $\crel{}{op'}{op}$ and thus $op'$ is not in $L$.
    \item \emph{$L$ consists of all the operations satisfying 2) and 3) and
      all operations in $L$ are totally ordered by `$\prec_{s}$'.}
      This is due to Lemma~\ref{lemma:cjupiter-first-rule}.
  \end{enumerate}
\end{proof}

\subsection{Proof for Proposition~\ref{prop:css-server-client} ($n + 1 \Rightarrow 1$)}

\begin{proof}
  By mathematical induction on the number of operations in the schedule.
  Because all operations are serialized at the server,
  we proceed by mathematical induction on the operation sequence 
  \[
    O = \seq{op_1, op_2, \ldots, op_m} \; (op_i \in \Op{})
  \]
  the server processes in total order `$\prec_{s}$'.
  
  \emph{Base Case.} $O = \seq{op_1}$.
  There is only one operation in the schedule.
  When all replicas have eventually processed this operation,
  they obviously have the same $n$-ary ordered state space.
  Formally,
  \begin{equation*}   \label{eq:css-server-client-bc} 
    \cssskinmath{1} = \cssckinmath{i}{1}, \quad \forall 1 \leq i \leq n.
  \end{equation*}
  
  \emph{Inductive Hypothesis.} $O= \seq{op_1, op_2, \ldots, op_k}$.
  Suppose that when all replicas have eventually processed all the $k$ operations,
  they have the same $n$-ary ordered state space.
  Formally,
  \begin{equation*}   \label{eq:css-server-client-ih} 
    \cssskinmath{k} = \cssckinmath{i}{k}, \quad \forall 1 \leq i \leq n.
  \end{equation*}
  
  \emph{Inductive Step.} $O= \seq{op_1, op_2, \ldots, op_{k+1}}$.
  Suppose that the $(k+1)$-\emph{st} operation $op_{k+1}$ 
  processed at the server is generated by client $c_j$.
  We shall prove that for any client $c_i$,
  when it has eventually processed all these $(k+1)$ operations,
  it has the same $n$-ary ordered state space as the server.
  Formally,
  \begin{equation*}   \label{eq:css-server-client-is} 
    \cssskinmath{k+1} = \cssckinmath{i}{k+1}, \quad \forall 1 \leq i \leq n.
  \end{equation*}
  In the following, we distinguish client $c_j$ that generates $op_{k+1}$ 
  (more specifically, $op_{k+1}.o$ of type $\opset{}$) from other clients.
  
  \textsc{Case 1:} $i \neq j$.
  The $n$-ary ordered state space \csssk{k+1} at the server is obtained by
  applying the $(k+1)$-\emph{st} operation $op_{k+1}$ to \csssk{k}, denoted by
  \begin{equation*}
    \cssskinmath{k+1} = op_{k+1} \otimes \cssskinmath{k}.
  \end{equation*}
  Since the communication is FIFO and in \cjupiter{} the original
  operation (i.e., $op_{k+1}$ here) rather than the transformed one 
  is propagated to clients by the server,
  the $n$-ary ordered state space \cssck{i}{k+1} at client $c_i$ is obtained by
  applying the operation $op_{k+1}$ to \cssck{i}{k}, denoted by
  \begin{equation*}
    \cssckinmath{i}{k+1} = op_{k+1} \otimes \cssckinmath{i}{k}.
  \end{equation*}
  By the inductive hypothesis,
  \begin{equation*}
    \cssskinmath{k} = \cssckinmath{i}{k}, \quad i \neq j.
  \end{equation*}
  Therefore, we have
  \begin{equation*}
    \cssskinmath{k+1} = \cssckinmath{i}{k+1}.
  \end{equation*}
  
  \textsc{Case 2:} $i = j$. 
  Now we consider client $c_j$ that generates the operation $op_{k+1}$.

  Let $\sigma_{k+1}^{c_j} \triangleq \seq{op_{1}^{c_j}, op_{2}^{c_j}, \ldots, op_{k+1}^{c_j}}$,~\footnote{We
  abuse the symbol `$\sigma$' for representing states to denote operation sequences.
  This is reasonable because replica states are defined by the operations a replica has processed
  (Section~\ref{ss:model}).}
  a permutation of $\sigma_{k+1}^{s} \triangleq O$ (i.e., $\seq{op_1, op_2, \ldots, op_{k+1}}$),
  be the operation sequence executed at client $c_j$.
  The operation $op_{k+1}$ may not be the last one executed at client $c_{j}$.
  Instead, suppose $op_{k+1}$ is the $l$-\emph{th} ($1 \leq l \leq k+1$) operation executed at client $c_{j}$,
  namely $op_{l}^{c_j} \equiv op_{k+1}$.
  
  The operation $op_{l}^{c_j}$ splits the sequence $\sigma_{k+1}^{c_j}$ into three parts:
  the subsequence $\sigma_{1, l-1}^{c_j}$ consisting of the first $(l-1)$ operations,
  the subsequence $\sigma_{l,l}^{c_j}$ containing the operation $op_{l}^{c_j} \equiv op_{k+1}$ only,
  and the subsequence $\sigma_{l+1, k+1}^{c_j}$ consisting of the last $(k-l+1)$ operations.
  We formally denote this by
  \[
	  \sigma_{k+1}^{c_j} = \sigma_{1,l-1}^{c_j} \circ op_{k+1} \circ \sigma_{l+1,k+1}^{c_j}.
  \]
  We remark that all operations in $\sigma_{l+1,k+1}^{c_j}$ are concurrent by `$\parallel$' with $op_{k+1}$,
  because they are generated by other clients than $c_j$ before $op_{k+1}$ reaches these clients
  and $op_{k+1}$ is generated before they reach $op_{k+1}$'s local replica (i.e., $c_j$).
  Furthermore, due to the FIFO communication,
  the operations in $\sigma_{l+1,k+1}^{c_j}$ are totally ordered by `$\prec_{s}$'.
  
  Let $\sigma_{k}^{c_j} \triangleq \seq{op_{1}^{c_j}, op_{2}^{c_j}, \ldots, op_{l-1}^{c_j}, op_{l+1}^{c_j},
  \ldots, op_{k+1}^{c_j}}$ be the operation sequence obtained by deleting $op_{l}^{c_j}$
  (i.e., $op_{k+1}$) from $\sigma_{k+1}^{c_j}$, namely
  \[
	  \sigma_{k}^{c_j} = \sigma_{1,l-1}^{c_j} \circ \sigma_{l+1,k+1}^{c_j}.
  \]
  Thus, $\sigma_{k}^{c_j}$ is a permutation of $\sigma_{k}^{s} \triangleq \seq{op_1, op_2, \ldots, op_{k}}$.
  
  In the following, we prove that the $n$-ary ordered state space \cssck{j}{k+1} at client $c_j$
  constructed by executing $\sigma_{k+1}^{c_j}$ in sequence, namely
  \begin{equation*}
    \cssckinmath{j}{k+1} = \sigma_{k+1}^{c_j} \otimes \cssckinmath{j}{0},
  \end{equation*}
  is the same with the $n$-ary ordered state space \csssk{k+1} at the server
  constructed by applying the $(k+1)$-\emph{st} operation $op_{k+1}$ to \csssk{k}, namely
  \begin{equation*}
    \cssskinmath{k+1} = op_{k+1} \otimes \cssskinmath{k}.
  \end{equation*}
  By the inductive hypothesis, \csssk{k} would be the same with
  the $n$-ary ordered state space \cssck{j}{k} constructed at client $c_{j}$ 
  if it had processed $\sigma_{k}^{c_j}$ in sequence.
  Formally,
  \begin{equation*}
    \cssskinmath{k} = \cssckinmath{j}{k} \;(\triangleq \sigma_{k}^{c_j} \otimes \cssckinmath{j}{0}).
  \end{equation*}
  Therefore, it suffices to prove that the $n$-ary ordered state space \cssck{j}{k+1} at client $c_j$
  constructed by executing
  \begin{equation}    \label{eq:s1} 
    \sigma_{k+1}^{c_j} = \sigma_{1,l-1}^{c_j} \circ op_{k+1} \circ \sigma_{l+1,k+1}^{c_j}
  \end{equation}
  in sequence would be the same with the $n$-ary ordered state space constructed at client $c_j$
  if it had processed
  \begin{equation}    \label{eq:s2} 
    \sigma_{k}^{c_j} \circ op_{k+1} = \sigma_{1,l-1}^{c_j} \circ \sigma_{l+1,k+1}^{c_j} \circ op_{k+1}
  \end{equation}
  in sequence.
  
  We first consider the $n$-ary ordered state space obtained by applying $op_{k+1}$ to \cssck{j}{k}
  (which is obtained after executing $\sigma_{k}^{c_j}$) at client $c_j$,
  corresponding to~(\ref{eq:s2}).
  The matching vertex of $op_{k+1}$ is $\sigma_{1,l-1}^{c_j}$.
  According to Lemma~\ref{lemma:cjupiter-ot-server}
  and the inductive hypothesis that \csssk{k} = \cssck{j}{k},
  the operation sequence $L$ with which $op_{k+1}$ transforms
  consists of exactly the (possibly transformed) operations in $\sigma_{l+1, k+1}^{c_j}$:
  \begin{align*}
    L: \;& \copinmath{op_{l+1}^{c_j}}{\sigma_{1,l-1}^{c_j}}, \;
    \copinmath{op_{l+2}^{c_j}}{\sigma_{1,l-1}^{c_j} \circ op_{l+1}^{c_j}}, \;
    \ldots,\\
    & \copinmath{op_{l+3}^{c_j}}{\sigma_{1,l-1}^{c_j} \circ op_{l+1}^{c_j} \circ op_{l+2}^{c_j}}, \;
    \copinmath{op_{k+1}^{c_j}}{\sigma_{1,l-1}^{c_j} \circ op_{l+1}^{c_j} \circ \ldots \circ op_{k}^{c_j}}.
  \end{align*}
  
  We now consider the construction of \cssck{j}{k+1} by executing $\sigma_{k+1}^{c_j}$ in three stages,
  corresponding to~(\ref{eq:s1}).
  \begin{enumerate}
    \item At the beginning, it grows as \cssck{j}{k} does 
      when executing the common subsequence $\sigma_{1,l-1}^{c_j}$.
    \item Next, the operation $op_{k+1}$ is generated at client $c_j$.
    According to the local processing of \cjupiter{},
    the $n$-ary ordered state space grows by saving $op_{k+1}$ at the final vertex 
    (corresponding to) $\sigma_{1,l-1}^{c_j}$ along a new edge.
    \item Then, the sequence $\sigma_{l+1,k+1}^{c_j}$ of operations 
      (from the server) are processed at client $c_{j}$.
    Each operation in $\sigma_{l+1,k+1}^{c_j}$, when executed in sequence,
    not only ``simulates'' the growth of \cssck{j}{k},
    but also completes one step of the iterative operational transformations of $op_{k+1}$
    with the sequence $L$ mentioned above when applying $op_{k+1}$ to \cssck{j}{k}.
    (This can be proved by mathematical induction.)
    We take as an example the case of the first operation $op_{l+1}^{c_j}$.
    After transforming with some subsequence of operations (which may be empty)
    in $\sigma_{1,l-1}^{c_j}$,
    operation $op_{l+1}^{c_j}$ is transformed as $\copinmath{op_{l+1}^{c_j}}{\sigma_{1,l-1}^{c_j}}$.
    At that time, \cop{op_{l+1}^{c_j}}{\sigma_{1,l-1}^{c_j}} is then transformed with
    \cop{op_{k+1}}{\sigma_{1,l-1}^{c_j}}, which is also performed when applying $op_{k+1}$ to \cssck{j}{k}:
    \begin{align*}
      &OT(\copinmath{op_{l+1}^{c_j}}{\sigma_{1, l-1}^{c_j}},
      \copinmath{op_{k+1}}{\sigma_{1, l-1}^{c_j}}) \\
      = \big(&\copinmath{op_{l+1}^{c_j}}{\sigma_{1,l-1}^{c_j} \circ op_{k+1}},
      \copinmath{op_{k+1}}{\sigma_{1,l-1}^{c_j} \circ op_{l+1}^{c_j}}\big).
    \end{align*}
    As it goes on, after executing $\sigma_{k+1}^{c_j}$ in sequence,
    we obtain an $n$-ary ordered state space 
    same with that obtained by applying $op_{k+1}$ to \cssck{j}{k}.
  \end{enumerate}
\end{proof}

\clearpage
\setcounter{figure}{0}
\section{The \jupiter{} Protocol}  \label{appendix:section-jupiter}

We review the \jupiter{} protocol in~\cite{Xu:CSCW14},
a \emph{multi-client} description of \jupiter{} first proposed in~\cite{Nichols:UIST95}.

\subsection{Data Structure: 2D State Space}  \label{appendix:ss-2d-state-space}

For a client/server system with $n$ clients,
\jupiter{} maintains $2n$ 2D state spaces,
each of which consists of a local dimension and a global dimension.
We first define operations and vertices as follows.

\begin{definition}[Operation]	\label{def:jupiter-op}
  Each operation $op$ of type $\Op$ (Algorithm~\ref{alg:jupiter-op}) is a tuple $op = (o, oid, ctx)$, where 
  \begin{itemize}
    \item $o:$ the signature of type $\opset$ described in Section~\ref{ss:list-spec};
    \item $oid:$ a globally unique identifier
      which is a pair $(cid, seq)$ 
      consisting of the client id and a sequence number; \emph{and}
    \item $ctx:$ an \emph{operation context} which is a set of operation identifiers,
      denoting the operations that are causally before $op$.
  \end{itemize}
\end{definition}

The OT functions of two operations $op, op' \in \Op$,
\begin{gather*}
  OT: \Op{} \times \Op{} \to \Op{} \times \Op{} \\
  (\opot{op}{op'}, \opot{op'}{op}) = OT(op, op'),
\end{gather*}
are defined based on those of operations $op.o, op'.o \in \opset$,
denoted $(o, o') = OT(op.o, op'.o)$,
such that
\begin{align*}
  \opot{op}{op'} &= (o, op.oid, op.ctx \cup \set{op'.oid}), \\
  \opot{op'}{op} &= (o', op'.oid, op'.ctx \cup \set{op.oid}).
\end{align*}

A 2D state space is a finite set of vertices.

\begin{definition}[Vertex]	\label{def:jupiter-vertex}
  A vertex $v$ of type \Vertex{} (Algorithm~\ref{alg:jupiter-vertex}) 
  is a pair $v = (oids, edges)$, where
  \begin{itemize}
    \item $oids \in 2^{\mathbb{N}_{0} \times \mathbb{N}_{0}}$
      is the set of operations (represented by their identifies) that have been executed.
    \item $edges$ is an array of \emph{two} (indexed by \Local{} and \Remote{}) 
      edges of type \Edge{} (Algorithm~\ref{alg:jupiter-edge})
      from $v$ to two other vertices, labeled with operations.
      That is, each edge is a pair $(op: \Op, v: \Vertex)$.
  \end{itemize}
\end{definition}

For vertex $u$, we say that $u.edges[\Local].op$ is an operation 
from $u$ along the \emph{local} dimension/edge
and $u.edges[\Remote].op$ along the \emph{remote} dimension/edge.
This is similar for the child vertices $u.edges[\Local].v$
and $u.edges[\Remote].v$ of $u$.

As with in an $n$-ary ordered state space, 
for each vertex $v$ and each edge $e$ from $v$ in a 2D state space, 
it is required that
\begin{itemize}
  \item the $ctx$ of the operation $e.op$ associated with $e$ matches the $oids$ of $v$: 
    $e.op.ctx = v.oids$.
  \item the $oids$ of the vertex $e.v$ along $e$ 
    consists of the $oids$ of $v$ and the $oid$ of $e.op$:
    $e.v.oids = v.oids \cup \set{e.op.oid}$.
\end{itemize}

\clearpage
\begin{algorithm}
  \caption{Operation in \jupiter{}.}
  \label{alg:jupiter-op}
  \begin{algorithmic}[1]
    \CLASS{Op} 
      \State \VAR{o}\, : $\opset$
      \State \VAR{oid} : $\CID{} \times \SEQ{}$
      \State \VAR{ctx} : $2^{\CID{} \,\times\, \SEQ{}} = \emptyset$

      \Statex
      \Procedure{OT}{$op : \Op, op' : \Op$} : ($\Op, \Op$)
	\State $(o, o') \gets \Call{OT}{op.o, op.o'}$ \Comment{call OT on $\opset$}

	\hStatex
	\State \algparbox{$\Op\; \opot{op}{op'} = \text{ new } \Op(o, op.oid, op.ctx \cup \set{op'.oid})$}
	\State \algparbox{$\Op\; \opot{op'}{op} \!=\! \text{ new } \Op(o', op'.oid, op'.ctx \cup \set{op.oid})$}

	\hStatex
	\State \Return $(\opot{op}{op'}, \opot{op'}{op})$
      \EndProcedure
    \EndClass{Op}
  \end{algorithmic}
\end{algorithm}

\begin{algorithm}
  \caption{Vertex in the 2D state space.}
  \label{alg:jupiter-vertex}
  \begin{algorithmic}[1]
    \CLASS{Vertex} 
      \State \VAR{oids} : $2^{\CID{} \,\times\, \SEQ{}} = \emptyset$
      \State \VAR{edges} : $\Edge[2] = \set{[\Local] = [\Remote] = \Null}$
    \EndClass{Vertex}
  \end{algorithmic}
\end{algorithm}


\begin{algorithm}
  \caption{Edge in the 2D state space.}
  \label{alg:jupiter-edge}
  \begin{algorithmic}[1]
    \CLASS{Edge}
      \State \VAR{op} : $\Op = \Null{}$
      \State \VAR{v} : $\Vertex = \Null{}$
    \EndClass{Edge}
  \end{algorithmic}
\end{algorithm}
\begin{algorithm}
  \caption{2D state space.}
  \label{alg:jupiter-2d-state-space}
  \begin{algorithmic}[1]
    \CLASS{StateSpace2D}
      \State \VAR{cur}\, : \Vertex = new \Vertex()

      \Statex
      \Procedure{xForm}{$op : \Op, d : \LR$} : \Op
	\State \Vertex{} $u \gets$ \Call{Locate}{$op$}
	\State \Vertex{} $v \gets$ \Call{Add}{$op, 1-d, u$}

	\hStatex
	\While {$u \neq cur$} \Comment{See Figure~\ref{fig:xform-ot}}
	  \State \Vertex{} $u' \gets u.edges[d].v$
	  \State $\Op{}\; op' \gets u.edges[d].op$

	  \hStatex
	  \State $(\opot{op}{op'}, \opot{op'}{op}) \gets$ \Call{OT}{$op, op'$}

	  \hStatex
	  \State \Vertex{} $v'$ = new \Vertex($v.oids \cup \set{op'.oid}, \emptyset$)
	  \State \Edge{} $e_{vv'} \gets$ new \Edge($\opot{op'}{op}, v'$)
	  \State $v.edges[d] \gets e_{vv'}$
	  \State \Edge{} $e_{u'v'} \gets$ new \Edge($\opot{op}{op'}, v'$)
	  \State $u'.edges[1-d] \gets e_{u'v'}$

	  \hStatex
	  \State $u \gets u'$
	  \State $v \gets v'$
	  \State $op \gets \opot{op}{op'}$
	\EndWhile

	\hStatex
	\State $cur \gets v$
	\State \Return $op$
      \EndProcedure

      \Statex
      \Procedure{Locate}{$op : \Op$} : \Vertex{}
	\State \Return \Vertex{} $v$ with $v.oids = op.ctx$
      \EndProcedure

      \Statex
      \Procedure{Add}{$op : \Op, d : \LR, u : \Vertex$} : \Vertex{}
	\State \Vertex{} $v \gets \text{new } \Vertex(u.oids \cup \set{op.oid}, \emptyset)$

	\hStatex
	\State \Edge{} $e \gets$ new \Edge($op, v$)
	\State $u.edges[d] \gets e$

	\hStatex
	\State \Return $v$
      \EndProcedure
    \EndClass{StateSpace2D}
  \end{algorithmic}
\end{algorithm}

\begin{definition}[2D State Space]  \label{def:jupiter-2d-state-space}
  A set of vertices $S$ is a 2D state space if and only if
  \begin{enumerate}
    \item Vertices are uniquely identified by their $oids$.
    \item For each vertex $u$ with $|u.edges| = 2$,
      let $u'$ be its child vertex along the local dimension/edge $e_{uu'} = (op', u')$
      and $v$ the other child vertex along the global dimension/edge $e_{uv} = (op, v)$.
      There exist (Figure~\ref{fig:xform-ot})
      \begin{itemize}
	\item a vertex $v'$ with $v'.oids = u.oids \cup \set{op'.oid, op.oid}$;
	\item an edge $e_{u'v'} = (\opot{op}{op'}, v')$ from $u'$ to $v'$;
	\item an edge $e_{vv'} = (\opot{op'}{op}, v')$ from $v$ to $v'$.
      \end{itemize}
  \end{enumerate}
\end{definition}

The second condition above models OTs in \jupiter{}.

\clearpage
\subsection{The \jupiter{} Protocol}  \label{appendix:ss-jupiter}

Each client $c_i$ maintains a 2D state space, denoted \cscwc{i},
with the local dimension for operations generated by the client
and the global dimension for operations generated by other clients.
The server maintains $n$ 2D state spaces, one for each client.
The state space for client $c_i$, denoted \cscws{i},
consists of the local dimension for operations from client $c_i$
and the global dimension for operations from other clients.

\jupiter{} is similar to \cjupiter{} with two major differences:
\begin{enumerate}
  \item In $\textsc{xForm}(op: \Op, d: \LR = \set{\Local, \Remote})$ of \jupiter{}, 
    the operation sequence with which $op$ transforms is determined by an extra parameter $d$; \emph{and}
  \item In \jupiter{}, the server propagates the transformed operation 
    (instead of the original one it receives from a client) to other clients.
\end{enumerate}

As with \cjupiter{}, we also describe \jupiter{} in three parts.
In the following, we omit the details that are in common with and have been explained in \cjupiter{}.
\subsubsection{Local Processing (\textsc{Do} of Algorithm~\ref{alg:jupiter-client})}   \label{sss:jupiter-local}

When client $c_i$ receives an operation $o \in \opset$ from a user, it
\begin{enumerate}
  \item applies $o$ locally;
  \item generates $op \in \Op$ for $o$ 
    and saves it along the local dimension at the end of its 2D state space \cscwc{i}; \emph{and}
  \item sends $op$ to the server.
\end{enumerate}
\subsubsection{Server Processing (\textsc{Receive} of Algorithm~\ref{alg:jupiter-server})}   \label{sss:jupiter-server}

When the server receives an operation $op \in \Op$ from client $c_i$, it

\begin{enumerate}
  \item transforms $op$ with an operation sequence along the global dimension 
    in the 2D state space \cscws{i} to obtain $op'$
    by calling $\textsc{xForm}(op, \Remote)$ (Section~\ref{sss:2d-state-space-xform});
  \item applies $op'$ locally;
  \item for each $j \neq i$, saves $op'$ at the end of \cscws{j} along the global dimension; \emph{and}
  \item sends $op'$ (instead of $op$) to other clients.
\end{enumerate}
\subsubsection{Remote Processing (\textsc{Receive} of Algorithm~\ref{alg:jupiter-client})}   \label{sss:jupiter-remote}

When client $c_i$ receives an operation $op \in \Op$ from the server, it

\begin{enumerate}
  \item transforms $op$ with an operation sequence along the local dimension 
    in its 2D state space \cscwc{i} to obtain $op'$
    by calling $\textsc{xForm}(op, \Local)$ (Section~\ref{sss:2d-state-space-xform}); \emph{and}
  \item applies $op'$ locally.
\end{enumerate}
\subsubsection{OTs in \jupiter{} (\textsc{xForm} of Algorithm~\ref{alg:jupiter-2d-state-space})}   \label{sss:2d-state-space-xform}

The procedure $\textsc{xForm}(op: \Op, d: \LR = \set{\Local, \Remote})$ of \jupiter{} 
is similar to $\textsc{xForm}(op: \Op)$ of \cjupiter{}
except that in \jupiter{}, 
the operation sequence with which $op$ transforms is determined by an extra parameter $d$.
Specifically, it

\begin{enumerate}
  \item locates the vertex $u$ whose $oids$ matches the operation context $op.ctx$ of $op$; \emph{and}
  \item iteratively transforms $op$ with an operation sequence along the $d$ dimension from $u$
    to the final vertex $cur$ of this 2D state space.
\end{enumerate}

\begin{algorithm}
  \caption{Client in \jupiter{}.}
  \label{alg:jupiter-client}
  \begin{algorithmic}[1]
    \CLASS{Client}
      \State \VAR{cid} : \CID{}
      \State \VAR{seq} : \SEQ{} = 0
      \State \VAR{state} : $\Sigma = \emptyseq$
      \State \VAR{S} : \StateSpace = new \StateSpace()

      \Statex
      \Procedure{do}{$o : \opset$} : \Val{} \Comment{Local Processing}
	\State $(state, val) \gets$ \Call{Apply}{$state, o$}

	\hStatex
	\State $seq \gets seq + 1$
	\State \Op{} $op \gets$ new $\Op(o, (cid, seq), S.cur.oids)$

	\hStatex
	\State \Vertex{} $v \gets S$.\Call{Add}{$op, \Local, S.cur$}
	\State $S.cur \gets v$

	\hStatex
	\State \Call{Send}{$\SID{}, op$}  \Comment{send $op$ to the server}

	\hStatex
	\State \Return $val$
      \EndProcedure

      \Statex
      \Procedure{receive}{$op : \Op$} \Comment{Remote Processing}
	\State $\Op\; op' \gets S.\Call{xForm}{op, \Local}$
	\State $state \gets state \circ op'.o$
      \EndProcedure
    \EndClass{Client}
  \end{algorithmic}
\end{algorithm}

\begin{algorithm}
  \caption{Server in \jupiter{}.}
  \label{alg:jupiter-server}
  \begin{algorithmic}[1]
    \CLASS{Server}
      \State \VAR{SS} : \StateSpace[\CID{}]	\Comment{one per client}
      \State \VAR{state} : $\Sigma = \emptyseq$

      \Statex
      \Procedure{Receive}{$op: \Op$} \Comment{Server Processing}
	\State $\Op\; op' \gets SS[op.oid.cid].\Call{xForm}{op, \Remote}$
	\State $state \gets state \circ op'.o$

	\hStatex
	\ForAll{$c \in \CID{} \setminus \set{op.oid.cid}$}
	  \State $SS[c].\Call{Add}{op, \Remote{}, SS[c].cur}$
	  \State \Call{Send}{$c, op'$}	\Comment{send $op'$ (not $op$) to client $c$}
	\EndFor
      \EndProcedure
    \EndClass{Server}
  \end{algorithmic}
\end{algorithm}

\begin{figure}[t]
  \begin{sideways}
   \includegraphics[width = 1.30\textwidth]{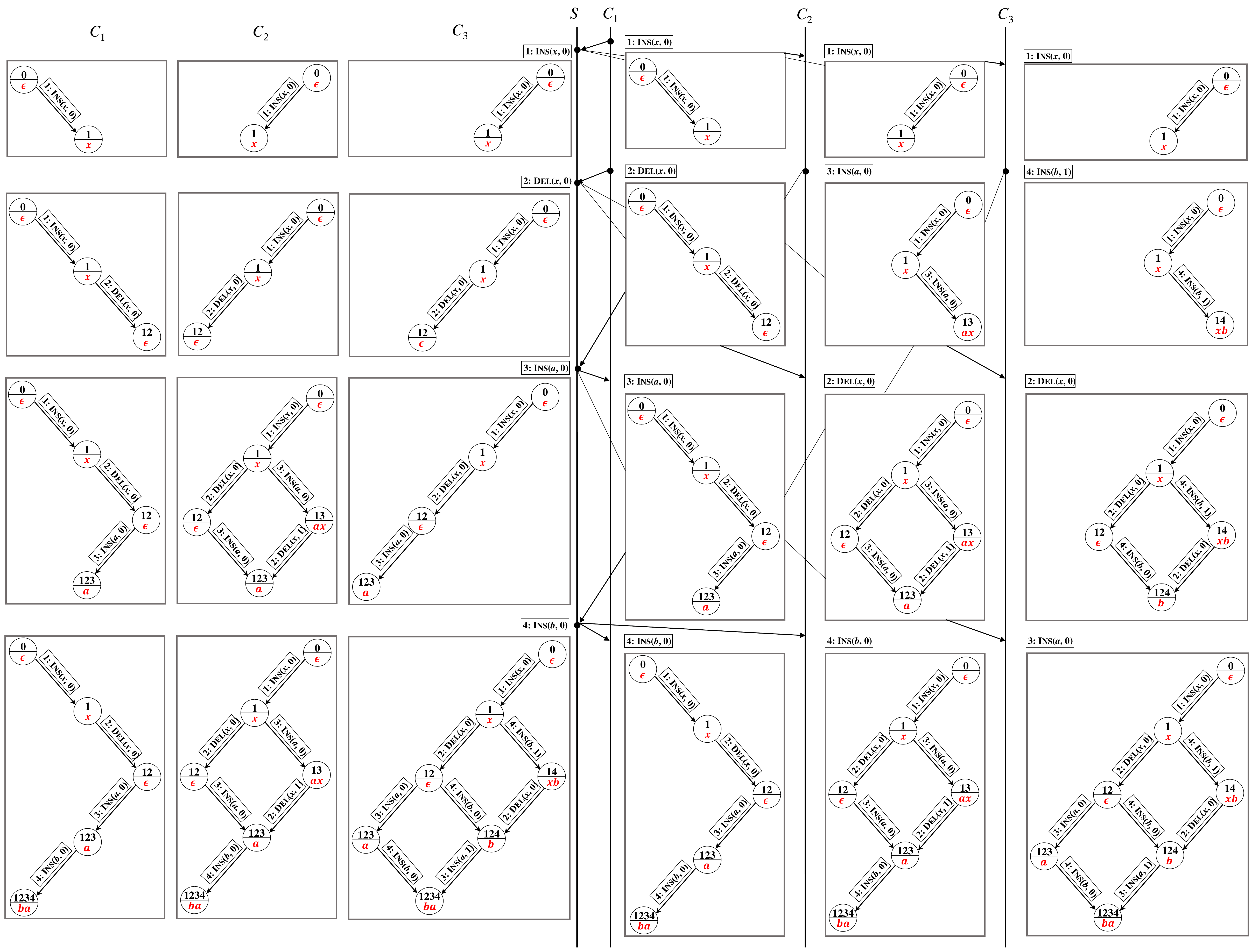}
  \end{sideways}
  \centering
  \caption{(Rotated) illustration of \jupiter{}~\cite{Xu:CSCW14} under the schedule of Figure~\ref{fig:jupiter-schedule-podc16}.}
  \label{fig:cjupiter-illustration}
\end{figure}

\clearpage
\setcounter{figure}{0}
\section{Proofs for Section~\ref{section:jupiter-cjupiter-equiv}: \cjupiter{} is Equivalent to \jupiter{}}   \label{appendix:jupiter-cjupiter-equiv-proof}

\subsection{Proof for Proposition~\ref{prop:server-equiv} ($n \leftrightarrow 1$)}

\begin{proof}
  By mathematical induction on the operation sequence $O = \seq{op_1, op_2, \cdots, op_m}$ 
  the server processes.
  
  \emph{Base Case.} $k=1$.
  According to the \jupiter{} and \cjupiter{} protocols, it is obviously that
  \[
    \cssskinmath{1} = \cscwskinmath{c(op_{1})}{1}.
  \]
  
  \emph{Inductive Hypothesis.}
  Suppose that \eqref{eq:server-equiv} holds for $k$:
  \[
    \cssskinmath{k} = \bigcup_{i=1}^{i=k} \cscwskinmath{c(op_{i})}{i}.
  \]
  
  \emph{Inductive Step.}
  We shall prove that \eqref{eq:server-equiv} holds for $(k+1)$:
  \[
    \cssskinmath{k+1} = \bigcup_{i=1}^{i=k+1} \cscwskinmath{c(op_{i})}{i}.
  \]
  
  By inductive hypothesis, we shall prove that
  \[
    \cssskinmath{k+1} \setminus \cssskinmath{k} =  \cscwskinmath{c(op_{k+1})}{k+1}.
  \]
  In other words, the OTs for $op_{k+1}$ performed by the servers 
  in \jupiter{} and \cjupiter{} are the same.
  This holds due to two reasons.
  First, under the same schedule, the matching vertex of $op_{k+1}$ 
  in \cscwsk{c(op_{k+1})}{k} of \jupiter{}
  is the same with that in \csssk{k} of \cjupiter{},
  determined by its operation context (or the causally-before relation of the schedule).
  Second, according to Lemma~\ref{lemma:cjupiter-ot-server} for \cjupiter{}
  and its counterpart for \jupiter{},
  the operation sequences with which $op_{k+1}$ transforms are the same in both protocols.
\end{proof}

\subsection{Proof for Proposition~\ref{prop:client-equiv} ($1 \leftrightarrow 1$)} \label{appendix:pf-client-equiv}

\begin{proof}
  By mathematical induction on the operation sequence
  $O^{c_i} = \seq{op_{1}^{c_i}, op_{2}^{c_i}, \ldots, op_{m}^{c_i}}$ the client $c_i$ processes.
  
  \emph{Base case.} $k = 1$, namely, $O^{c_i} = \seq{op_{1}^{c_i}}$.
  No matter whether $op_{1}^{c_i}$ (more specifically, $op_{1}^{c_i}.o$) is generated by client $c_i$
  or is an operation propagated to client $c_i$ by the server,
  it obviously holds that
  \begin{equation*}
    \cscwckinmath{i}{1} = \cssckinmath{i}{1}.
  \end{equation*}
  
  \emph{Inductive Hypothesis.} Suppose $O^{c_i} = \seq{op_{1}^{c_i}, op_{2}^{c_i}, \ldots, op_{k}^{c_i}}$ and
  \eqref{eq:client-equiv} holds for $k$:
  \begin{equation*}
    \cscwckinmath{i}{k} \subseteq \cssckinmath{i}{k}.
  \end{equation*}
  
  \emph{Inductive Step.} Client $c_i$ executes the $(k+1)$-\emph{st} operation $op_{k+1}^{c_i}$.
  We shall prove that \eqref{eq:client-equiv} holds for $(k+1)$:
  \begin{equation*}
    \cscwckinmath{i}{k+1} \subseteq \cssckinmath{i}{k+1}.
  \end{equation*}
  We distinguish two cases between $op_{k+1}^{c_i}$ being generated by client $c_i$
  or an operation propagated to client $c_i$ by the server.
  
  \textsc{Case 1:} \emph{The operation $op_{k+1}^{c_i}$ is generated by client $c_i$.}
  The new 2D state space \cscwck{i}{k+1} of \jupiter{} 
  (resp. $n$-ordered state space \cssck{i}{k+1} of \cjupiter{})
  is obtained by saving $op_{k+1}^{c_i}$ at the final vertex of the previous state space \cscwck{i}{k}
  (resp. \cssck{i}{k}).
  Since $\cscwckinmath{i}{k} \subseteq \cssckinmath{i}{k}$ (by the inductive hypothesis),
  we conclude that $\cscwckinmath{i}{k+1} \subseteq \cssckinmath{i}{k+1}$.
  
  \textsc{Case 2:} \emph{The operation $op_{k+1}^{c_i}$ is an operation 
  propagated to client $c_i$ by the server.}
  Due to Lemmas~\ref{lemma:cjupiter-ot-server} for \cjupiter{} and its counterpart for \jupiter{},
  the operation sequences $L$ with which $op_{k+1}^{c_i}$ transforms
  at the server in both protocols are the same.
  Since the communication is FIFO, when client $c_i$ receives $op_{k+1}^{c_i}$,
  all the operations totally ordered by `$\prec_{s}$' before $op_{k+1}^{c_i}$ have already been in \cssck{i}{k}.
  By Proposition~\ref{prop:css-server-client}, 
  the OTs involved in iteratively transforming $op_{k+1}^{c_i}$ with $L$
  at the server in both protocols
  are also performed at client $c_i$ in \cjupiter{}.
  By contrast, in \jupiter{}, the resulting transformed operation, 
  denoted $\opot{op_{k+1}^{c_i}}{L}$,
  is propagated to client $c_i$, where the set of OTs performed is a subset of those
  involved in transforming $op_{k+1}^{c_i}$ with $L$.
  Given the inductive hypothesis $\cscwckinmath{i}{k} \subseteq \cssckinmath{i}{k}$,
  we conclude that $\cscwckinmath{i}{k+1} \subseteq \cssckinmath{i}{k+1}$.
\end{proof}

\subsection{Proof for Theorem~\ref{thm:client-equiv} (Equivalence of Clients)} \label{appendix:proof-client-equiv-theorem}

\begin{proof}
  Note that in the proof for Proposition~\ref{prop:client-equiv},
  no matter whether the operation $op_{k}^{c_i}$ is generated by client $c_i$
  or is an operation propagated to client $c_i$ by the server,
  the final transformed operations executed at $c_i$ in \jupiter{} and \cjupiter{} are the same.
\end{proof}

\clearpage
\setcounter{figure}{0}
\section{Proofs for Section~\ref{section:cjupiter-weak-spec}: \cjupiter{} Satisfies the Weak List Specification} \label{section:appendix-wlspec}

\subsection{Proof for Lemma~\ref{lemma:simple-path} (Simple Path)}    \label{ss:appendix-simple-path}

\begin{proof}
  Due to the specific structure of OTs (Figure~\ref{fig:xform-ot}), 
  all the transitions associated with the same operation are ``parallel'' 
  in $n$-ary ordered state spaces. 
  They cannot be in the same path.
\end{proof}

\subsection{Proof for Lemma~\ref{lemma:irreflexivity} (Irreflexivity)}  \label{ss:proof-irreflexivity}

\begin{proof}
  We prove both directions by contradiction.
  
  \emph{``$\Leftarrow$'' (if):}
  Suppose by contradiction that $\lorel{a}{a}$ for some $a \in \elems{H}$.
  According to Lemma~\ref{lemma:simple-path}, a list state $w$ contains no duplicate elements.
  Therefore, there exist two list states such that for some element $b$,
  $\lorel{a}{b}$ (namely, $a$ precedes $b$) in one state
  and $\lorel{b}{a}$ (namely, $b$ precedes $a$) in the other.
  However, this contradicts the assumption that all list states are pairwise compatible.
  
  \emph{``$\Rightarrow$`` (only if):}
  Suppose by contradiction that two list states $w_1$ and $w_2$ are incompatible.
  That is, they have two common elements $a$ and $b$ such that
  $a$ precedes $b$ in, say, $w_1$ and $b$ precedes $a$ in $w_2$.
  Thus, both $\lorel{a}{b}$ and $\lorel{b}{a}$ hold.
  Since \lo{} is transitive on $w_1$ (and $w_2$), we have $\lorel{a}{a}$,
  contradicting the assumption that \lo{} is irreflexive.
\end{proof}

\clearpage
\subsection{Proof for Lemma~\ref{lemma:lca} (LCA)}    \label{ss:appendix-lca}

\begin{proof}
  By mathematical induction on the operation sequence $O = \seq{op_1, op_2, \cdots, op_m}$ ($op_i \in \Op{}$)
  processed in total order `$\prec_{s}$' at the server.
  
  \emph{Base Case.} Initially, the $n$-ary ordered state space \csssk{0} at the server
  contains only the single initial vertex $v_0 = (\emptyset, \emptyset)$.
  The lemma obviously holds.
  
  \emph{Inductive Hypothesis.} Suppose that the server has processed $k$ operations and that
  every pair of vertices in the $n$-ary ordered state space \csssk{k} has a unique LCA.
  
  \emph{Inductive Step.} The server has processed the $(k+1)$-\emph{st} operation $op_{k+1}$.
  We shall prove that every pair of vertices in the $n$-ary ordered state space \csssk{k+1} 
  has a unique LCA.
  Let
  \begin{equation*}
    \text{CSS}_{\Delta} \triangleq \cssskinmath{k+1} \setminus \cssskinmath{k}
  \end{equation*}
  be the extra part of \csssk{k+1} obtained by transforming $op_{k+1}$
  with some operation sequence, denoted $L$, in \csssk{k} (Figure~\ref{fig:lca}).
  We need to verify that
  \itshape 1\upshape) every pair of vertices in $\text{CSS}_{\Delta}$ has a unique LCA; \emph{and}
  \itshape 2\upshape) every pair of vertices consisting of one vertex in \csssk{k} 
      and the other in $\text{CSS}_{\Delta}$ has a unique LCA.
  
  The former claim obviously holds because all vertices in $\text{CSS}_{\Delta}$ are in a path.
  We prove the latter by contradiction.
  Let $v_1$ be any vertex in \csssk{k} and $v_2$ any vertex in $\text{CSS}_{\Delta}$ 
  (Figure~\ref{fig:lca}).
  Clearly, the initial vertex $v_0 = (\emptyset, \emptyset)$ is a common ancestor of $v_1$ and $v_2$.
  Suppose by contradiction that there are two LCAs,
  denoted $v$ and $v'$, of $v_1$ and $v_2$ in \csssk{k}
  (they cannot be in $\text{CSS}_{\Delta}$).
  
  Note that any path from $v$ or $v'$ 
  to $v_2$ passes through some vertex in the operation sequence $L$ with which $op_{k+1}$ transforms
  (intuitively, $L$ is the boundary between \csssk{k} and $\text{CSS}_{\Delta}$).
  Let $v_L$ (resp. $v'_L$) be the last vertex in $L$ in the path from $v$ (resp. $v'$) to $v_2$.
  Let $v'' = \min\set{v_L, v'_L}$ be the second vertex of $v_L$ and $v'_L$ along $L$.
  Then, $v$ and $v'$ are two incomparable common ancestors of $v_1$ and $v''$ 
  (i.e., $v_L$ in this example) that are both in \csssk{k}.
  This, however, contradicts the inductive hypothesis.
\end{proof}

\begin{figure}[t]
  \centering
  \includegraphics[width = 0.45\textwidth]{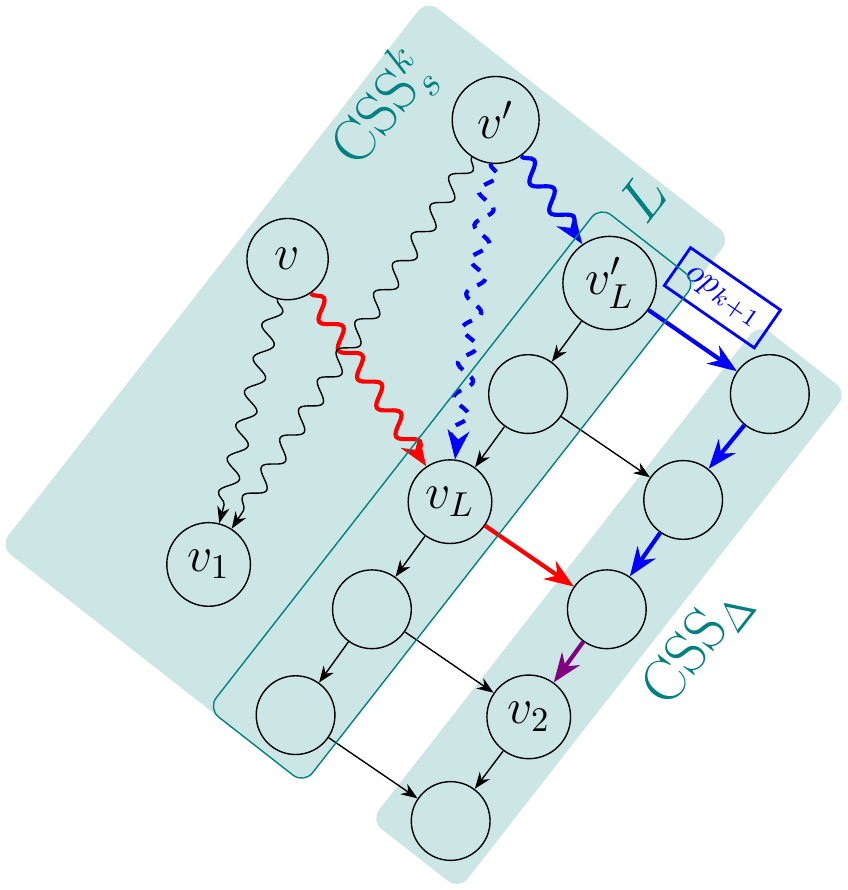}
  \caption{Illustration of proof for Lemma~\ref{lemma:lca}:
    vertices $v$ and $v'$ are two incomparable common ancestors
    of $v_1$ and $v'' = v_{L} \triangleq \min\set{v_{L}, v'_{L}}$ in \csssk{k}.}
  \label{fig:lca}
\end{figure}

\subsection{Proof for Lemma~\ref{lemma:disjoint-paths} (Disjoint Paths)}  \label{ss:proof-disjoint-paths}

\begin{figure}[!t]
  \centering
  \begin{subfigure}[b]{0.35\textwidth}
    \includegraphics[width = \textwidth]{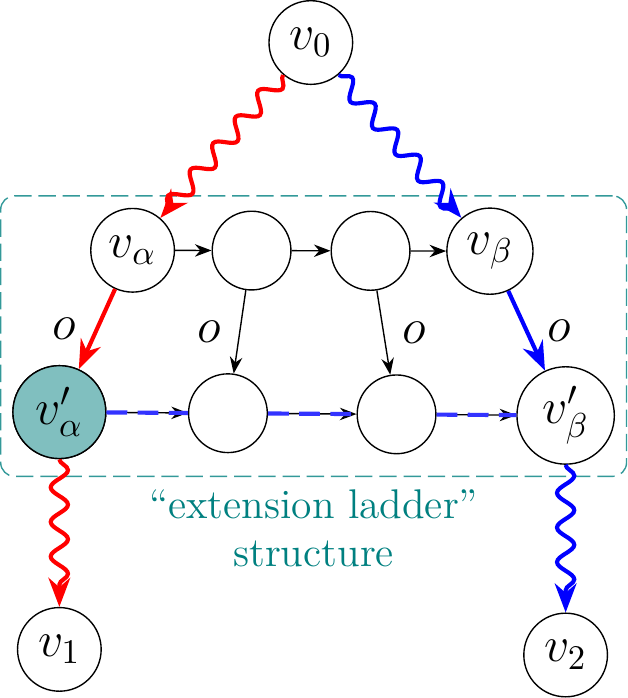}
    \caption{\textsc{Case 2.1:} $v_{\alpha} \xrightarrow{o} v'_{\alpha}$
    and $v_{\beta} \xrightarrow{o} v'_{\beta}$ are in the same ``extension ladder''.}
    \label{fig:disjoint-paths-extension-ladder}
  \end{subfigure}\hfil
  \begin{subfigure}[b]{0.50\textwidth}
    \includegraphics[width = \textwidth]{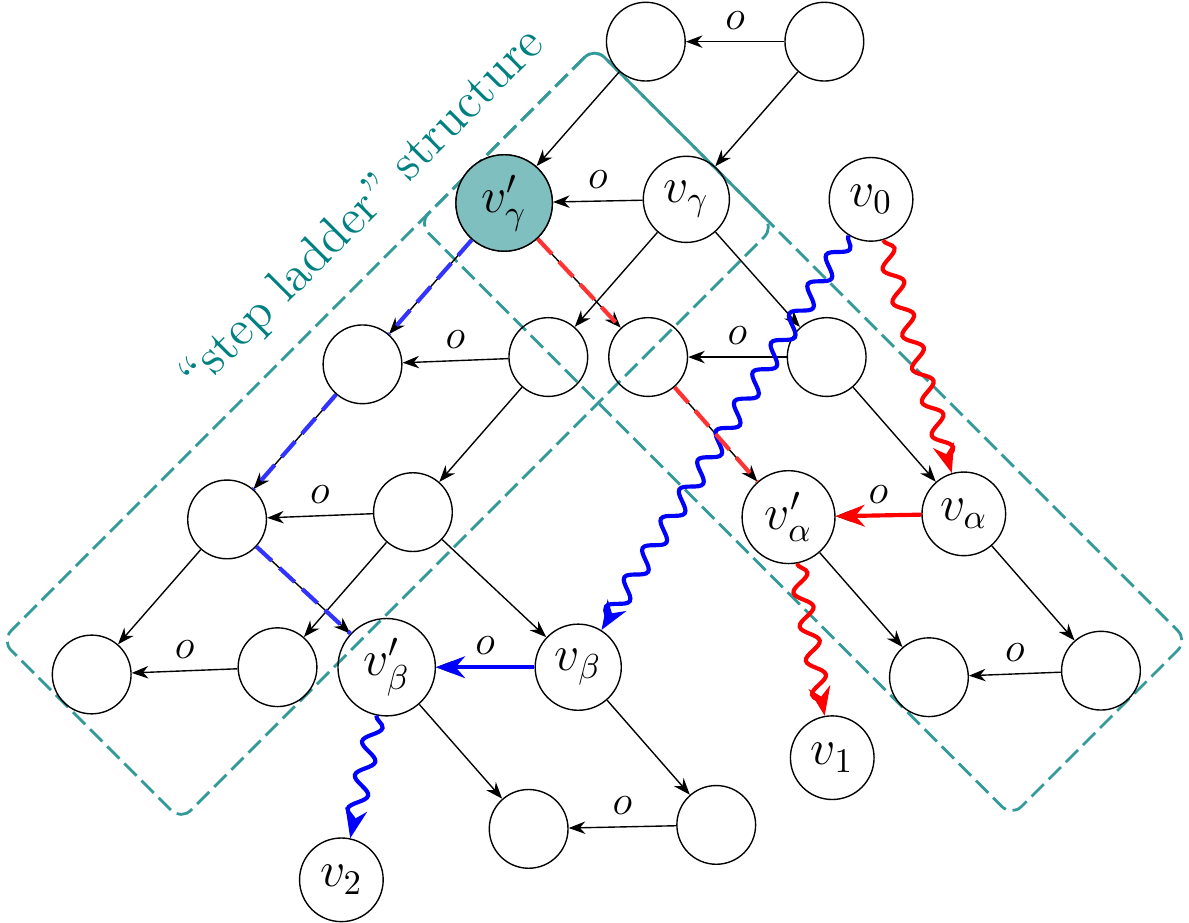}
    \caption{\textsc{Case 2.2:} $v_{\alpha} \xrightarrow{o} v'_{\alpha}$
    and $v_{\beta} \xrightarrow{o} v'_{\beta}$ are in a ``step ladder''.}
    \label{fig:disjoint-paths-step-ladder}
  \end{subfigure}
  \caption{Illustrations of \textsc{Case} 2 of the proof for Lemma~\ref{lemma:disjoint-paths}:
    $v_1$ and $v_2$ are not in the same path from $v_0 = \text{LCA}(v_1, v_2)$.}
  \label{fig:disjoint-paths-ladder}
\end{figure}

\begin{proof}
  We distinguish two cases according to whether $v_1$ and $v_2$ are in the same path 
  from $v_0 = \text{LCA}(v_1, v_2)$ or not.
  
  \textsc{Case} \emph{1: $v_1$ and $v_2$ are in the same path from $v_0 = \emph{LCA}(v_1, v_2)$.}
  In this case, $v_0 = v_1$ or $v_0 = v_2$.
  Therefore, either $O_{v_0 \leadsto v_1}$ or $O_{v_0 \leadsto v_2}$ is empty.
  This lemma obviously holds. 
  
  \textsc{Case} \emph{2: $v_1$ and $v_2$ are not in the same path from $v_0 = \emph{LCA}(v_1, v_2)$.}
  In this case, we prove this lemma by contradiction.
  Suppose that
  \begin{equation*}
    o \in O_{v_0 \leadsto v_1} \cap O_{v_0 \leadsto v_2},
  \end{equation*}
  where $o$ can be either original or transformed (identified by its $oid$).
  As illustrated in Figure~\ref{fig:disjoint-paths-ladder},
  the paths \pathplain{v_0}{v_1} and \pathplain{v_0}{v_2} are now:
  \begin{align*}
    \pathinmath{v_0}{v_1} &= P_{v_0 \leadsto v_{\alpha} \xrightarrow{o} v'_{\alpha} \leadsto v_1}, \\
    \pathinmath{v_0}{v_2} &= P_{v_0 \leadsto v_{\beta} \xrightarrow{o} v'_{\beta} \leadsto v_2}.
  \end{align*}
  In the following, we derive a contradiction that $v_0$ is not the unique LCA of $v_1$ and $v_2$.
  We consider two cases according to how the edges $v_{\alpha} \xrightarrow{o} v'_{\alpha}$
  and $v_{\beta} \xrightarrow{o} v'_{\beta}$ are related via OTs in \csss{}.
  
  \textsc{Case} \emph{2.1: $v_{\alpha} \xrightarrow{o} v'_{\alpha}$ and $v_{\beta} \xrightarrow{o} v'_{\beta}$ are
  in the same ``extension ladder'' structure of OTs.}
  Without loss of generality, we assume that $v'_{\beta}$ is reachable from $v'_{\alpha}$;
  as illustrated in Figure~\ref{fig:disjoint-paths-extension-ladder}.
  In this case, $v'_{\alpha}$ is a lower common ancestor of $v_1$ and $v_2$ than $v_0$.
  This contradicts the condition $\text{LCA}(v_1, v_2) = v_0$.
  
  \textsc{Case} \emph{2.2: $v_{\alpha} \xrightarrow{o} v'_{\alpha}$ and $v_{\beta} \xrightarrow{o} v'_{\beta}$ are
  in a ``step ladder'' structure of OTs.}
  Because all the edges labeled with the same operation $o$ are constructed directly or indirectly
  from the OTs involving the original form of $o$,
  there exists some edge $v_{\gamma} \xrightarrow{o} v'_{\gamma}$ that is in the same ``extension ladder''
  with $v_{\alpha} \xrightarrow{o} v'_{\alpha}$ as well as with $v_{\beta} \xrightarrow{o} v'_{\beta}$;
  as illustrated in Figure~\ref{fig:disjoint-paths-step-ladder}.
  In this case, $v'_{\gamma}$ is a common ancestor of $v_1$ and $v_2$ other than $v_0$.
  This contradicts the condition $\text{LCA}(v_1, v_2) = v_0$.
\end{proof}

\subsection{Proof for Lemma~\ref{lemma:compatible-paths} (Compatible Paths)}    \label{ss:appendix-compatible-paths}

\begin{proof}
  We prove a stronger statement that
  \emph{each pair of vertices consisting of one vertex in $P_{v_0 \leadsto v_1}$
  and the other in $P_{v_0 \leadsto v_2}$ are compatible},
  by mathematical induction on the length $l$ of the path \pathplain{v_0}{v_2}.
  To this end, we first show that

  
  \begin{claim}[One-step Compatibility]
    Suppose that vertices $v$ and $v'$ are compatible.
    Let $v''$ be the next vertex of $v'$ along the edge labeled with operation $op$
    which does not correspond to any element of the list in vertex $v$.
    Then, $v$ and $v''$ are compatible.
  \end{claim}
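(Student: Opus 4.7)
The plan is to proceed by case analysis on the operation $op$ that labels the edge from $v'$ to $v''$, relying on the OT invariance property highlighted in the proof of Theorem~\ref{thm:wl-spec}: after any sequence of transformations, an insertion $\ins{a}{p}$ remains an insertion of the same element $a$ (at a possibly shifted position), a deletion $\del{a}{p}$ remains a deletion of $a$, and the only other possibility is that $op$ has been transformed to $\Nop{}$. Let $a$ denote the element that $op$ operates on (when $op \neq \Nop{}$); by the hypothesis of the claim, $a$ does not appear in the list at $v$.

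First I would dispose of the trivial case $op = \Nop{}$: here the list state at $v''$ equals the list state at $v'$, so compatibility of $v$ and $v''$ follows immediately from compatibility of $v$ and $v'$. Next, for $op = \ins{a}{p'}$, the list at $v''$ is obtained from the list at $v'$ by inserting the fresh element $a$ at some position $p'$. Since $a$ is absent from $v$, the set of common elements of $v$ and $v''$ coincides with that of $v$ and $v'$, and insertion preserves the relative order of all pre-existing elements; hence the pairwise order of common elements in $v''$ matches that in $v'$, which by hypothesis matches that in $v$. Finally, for $op = \del{a}{p'}$, the list at $v''$ is the list at $v'$ with $a$ removed; again $a$ is not a common element (it is not in $v$), so the common elements of $v$ and $v''$ equal those of $v$ and $v'$, and their relative order is preserved by the deletion.

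The hardest part, or rather the only subtlety worth checking, is that the OT invariance property really applies to the labels along edges of the $n$-ary ordered state space, i.e.\ that the element associated with $op$ is a well-defined invariant independent of the particular OT sequence that produced this edge. This follows from the observation (already used in the proof of Theorem~\ref{thm:wl-spec}) that both $\textsc{Ins}$ and $\textsc{Del}$ transform only in position, never in element or type, with $\Nop{}$ as the only exception, so the hypothesis that ``$op$ does not correspond to any element of the list in $v$'' is unambiguous and stable under transformation. Once this is granted, the three cases above close the claim.
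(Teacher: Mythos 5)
Your proposal is correct and follows essentially the same route as the paper: the paper's (much terser) subproof also argues that, because the element of $op$ is absent from $v$, the common elements of $v$ and $v''$ are contained in those of $v$ and $v'$ and their relative order is unchanged; your case analysis on $op$ being \Nop{}, an insertion, or a deletion, together with the OT invariance of element and type, is just a spelled-out version of that argument.
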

  
  \begin{subproof}
    Let $C(v,v')$ be the set of common elements of lists in vertices $v$ and $v'$
    and $C(v,v'')$ in vertices $v$ and $v''$.
    By the assumption of this claim, $op$ does not correspond to any element of the list in vertex $v$.
    Therefore, $C(v,v'')$ is a subset of $C(v,v')$.
    Furthermore, the total ordering of elements in $C(v, v'')$ is consistent with that in $C(v, v')$.
  \end{subproof}
  
  \emph{Base Case.} $l = 0$. \pathplain{v_0}{v_2} contains only the vertex $v_0$.
  We shall prove that $v_0$ is compatible with every vertex along \pathplain{v_0}{v_1}.
  This can be done by mathematical induction on the length of \pathplain{v_0}{v_1}
  with the claim above and the fact that \pathplain{v_0}{v_1} is a simple path.
  
  \emph{Inductive Hypothesis.} Suppose that this lemma holds
  when the length of \pathplain{v_0}{v_2} is $l \geq 1$.
  
  \emph{Inductive Step.} We shall prove that the $(l+1)$-\emph{st} vertex, denoted $v_{l+1}$,
  of \pathplain{v_0}{v_2} is compatible with every vertex along \pathplain{v_0}{v_1}.
  This can be done by mathematical induction on the length of \pathplain{v_0}{v_1}
  with the claim above, the fact that \pathplain{v_0}{v_2} is a simple path (for $v_0$ and $v_{l+1}$ being compatible),
  and the fact that \pathplain{v_0}{v_1} and \pathplain{v_0}{v_2} are disjoint.
\end{proof}

\end{document}